\definecolor{jkl}{rgb}{0, 0, 0.4}
\definecolor{dblue}{rgb}{0,0,0.3}
\newcommand{\remove}[1]{}
\newtheorem{theorem}{Theorem}[section]
\newtheorem{definition}{Definition}
\newtheorem{proposition}{Proposition}[section]
\newtheorem{lemma}{Lemma}
\newtheorem{claim}{Claim}[theorem]
\newtheorem{corollary}{Corollary}[section]
\newtheorem{example}{Example}
\newtheorem{remark}{Remark}
\newcommand{\n}{\{1,\ldots,n\}}
\newcommand{\vbl}{\textrm{vbl}}
\newcommand{\pr}{{\rm pr}}
\newcommand{\rMod}{\textrm{Mod}}
\newcommand{\Cl}{\mathcal{C}}
\newcommand{\scc}{\textrm{scc}}
\newcommand{\lpd}{\textrm{lpd}}
\newcommand{\lpic}{\textrm{lpic}}
\newcommand{\maj}{{\rm maj}}
\newcommand{\boplus}{\bar{\oplus}}
\title[Alg. Efficient Syntactic Char. of  Possibility Domains]{Algorithmically Efficient  Syntactic Characterization of  Possibility Domains}
\author[J. D\'{i}az]{Josep D\'{i}az$^1$}
\thanks{The first two authors' research was partially supported by TIN2017-86727-C2-1-R, GRAMM}
\email{diaz@cs.upc.edu}
\author[L. Kirousis]{Lefteris Kirousis$^{1,2}$}
\thanks{The research of the second author was carried  out while   visiting the Computer Science Department of the Universitat Polit\`{e}cnica de Catalunya}
\author[S. Kokonezi]{Sofia Kokonezi$^2$}
\author[J. Livieratos]{John Livieratos$^2$}
\address{\hspace{-0.45cm}$^1$Computer Science Department, Universitat Polit\`{e}cnica de Catalunya, Barcelona\\ $^2$Department of Mathematics, National and Kapodistrian University of Athens}
\email{\{lkirousis, jlivier89, skoko\}@math.uoa.gr}
\begin{document}

\begin{abstract} In the field of Judgment Aggrgation, a \emph{domain}, that is a subset of a Cartesian power of $\{0,1\}$, is considered to reflect abstract rationality restrictions on vectors of two-valued judgments on a number of issues. We are interested in the ways we can aggregate the positions of a set of individuals, whose positions over each issue form vectors of the domain, by means of unanimous (idempotent) functions, whose output is again an element of the domain. Such functions are called \emph{non-dictatorial}, when their output is not simply the positions of a single individual. Here, we consider domains admitting various kinds of non-dictatorial aggregators, which reflect various properties of majority aggregation: (locally) non-dictatorial, generalized dictatorships, anonymous, monotone, StrongDem and systematic. We show that interesting and, in some sense, democratic voting schemes are always provided by domains that can be described by propositional formulas of specific syntactic types we define. Furthermore, we show that we can efficiently recognize such formulas and that, given a domain, we can both efficiently check if it is described by such a formula and, in case it is, construct it. Our results fall in the realm of classical results concerning the syntactic characterization of domains with specific closure properties, like domains closed under logical AND which are the models of Horn formulas. The techniques we use to obtain our results draw from judgment aggregation as well as propositional logic and universal algebra. \end{abstract}

\maketitle

\section{Introduction}\label{sec:intro} We call {\em domain} any arbitrary subset of a Cartesian power  $\{0,1\}^n$ ($n \geq 1$) when we think of it as the set of  yes/no ballots, or accept/reject judgment vectors on $n$ issues that are  ``rational'', in the sense manifested by being a member of the subset. A domain $D$ has a non-dictatorial aggregator if for some $k \geq 1$ there is a unanimous (idempotent) function $F:D^k \rightarrow D$ that is not a projection function. Such domains are called \emph{possibility domains}.
 The theory of judgment aggregation  was put in this abstract framework   by Wilson \cite{wilson1975theory}, and then elaborated  by several others (see e.g. the work by Dietrich \cite{dietrich2007generalised} and  Dokow and Holzman \cite{dokow2010aggregation, dokow2009aggregation}).  It can be trivially shown that non-dictatorial aggregators always exist unless we demand that $F$ is defined on an issue by  issue fashion (see next section for formal definitions). Such aggregators are called Independent of Irrelevant Alternatives (IIA). In this work aggregators are assumed to be IIA.  

It is a well known  fact from elementary Propositional Logic  that for every subset $D$ of $\{0,1\}^n$, $n \geq 1$,  i.e.  for  every domain, there is a Boolean formula in Conjunctive Normal Form (CNF) whose set of satisfying truth assignments, or  models, denoted by $\textrm{Mod}(\phi)$, is equal to $D$ (see e.g. Enderton \cite[Theorem 15B]{enderton2001mathematical}). Zanuttini and H\'{e}brard  \cite{zanuttini2002unified} give  an algorithm that finds such a  formula and runs in polynomial-time with respect to  the size of the representation of $D$ as input. Following Grandi and Endriss \cite{grandi2013lifting}, we call such a $\phi$ an {\em integrity constraint}  and think of it as expressing the ``rationality'' of $D$ (the term comes from databases, see e.g. \cite{elmasri2016fundamentals}).  

We prove that a domain is a possibility domain,  if and only if it admits  an  integrity constraint  of a certain syntactic form to be precisely defined, which we call a {\em possibility} integrity constraint.   Very roughly, possibility integrity constraints  are formulas that belong to one of three types, the first two of which correspond to ``easy'' cases of possibility domains: (i) formulas whose variables can be partitioned into two non-empty subsets so that no clause contains variables from both sets that we call \emph{separable} and (ii) formulas whose clauses are  exclusive OR's of their literals (\emph{affine} formulas). The most interesting third type is comprised of formulas such that if we change the logical sign of some of their variables, we get formulas that have a Horn part and whose remaining clauses contain only negative occurrences of the variables in the Horn part. We call such formulas  {\em renamable partially Horn}, whereas we call {\em partially Horn}\footnote{A weaker notion of Horn formulas has appeared before in the work of Yamasaki and Doshita \cite{yamasaki1983satisfiabilty}; however  our notion   is incomparable with theirs, in the sense that the class of partially Horn formulas in neither a subset nor a superset (nor equal) to the class $\mathbb{S}_0$ they define.} the  formulas that belong to the third type without having to rename any variables. Furthermore, we show that the unified framework of Zanuttini and H\'ebrard \cite{zanuttini2002unified} for producing formulas of a specific type that describe a given domain, and which entails the notion of prime formulas (i.e. formulas that we cannot further simplify its clauses; see Definition \ref{def:prime}) works also in the case of possibility integrity constraints. Actually, in addition to the syntactical characterization of possibility domains, we give two algorithms: the first  on input  a formula decides whether it is a possibility integrity constraint in time linear in the length of the formula (notice that the definition of possibility integrity constraint entails searching over all subsets of variables of the formula);  the  second  on input  a domain $D$ halts in time polynomial in the size of $D$ and either decides that $D$ is not a possibility domain or otherwise returns a possibility integrity constraint that describes $D$.  It should be noted    that the satisfiability problem remains NP-complete even when restricted to formulas that are partially Horn. However in Computational Social Choice,  domains are considered to be non-empty (see paragraph preceding Example \ref{ex:pH}).

We then consider \emph{local possibility domains}, that is, domains admitting IIA aggregators whose components are all different than any projection function. Such aggregators are called \emph{locally non-dictatorial} (see \cite{nehring2010abstract}). Local non-dictatorial domains were introduced in \cite{kirousis2017aggregation} as \emph{uniform possibility domains} (the definition entails also non-Boolean domains). We show that local possibility domains are described by formulas we call \emph{local possibility integrity constraints} and again, we provide a linear algorithm that checks if a formula is a local possibility integrity constraint and a polynomial algorithm that checks if a domain is a local possibility one and, in case it is, constructs a local possibility integrity constraint that describes it. 

There are various notions of non-dictatorial aggregation, apart from the above, that have been introduced in the field of Aggregation Theory. First, we consider domains that admit aggregators which are not \emph{generalized dictatorships}. A $k$-ary aggregator is a generalized dictatorship that, on input any $k$ vectors from a domain $D$, always returns one of those vectors as its output. These aggregators are a natural generalization of the notion of dictatorial aggregators, in the sense that they select a possibly different ``dictator'' for each set of $k$ feasible voting patters, instead of a single global one. They where introduced by Cariani et al. \cite{cariani2008decision} as \emph{rolling dictatorships}, under the stronger requirement that the above property holds for any $k$ vectors of $\{0,1\}^n$. In that framework, Grandi and Endriss \cite{grandi2013lifting} showed that generalized dictatorships are exactly those functions that are aggregators for every Boolean domain. In this work, we show that domains admitting aggregators which are not generalized dictatorships are exactly the possibility domains (apart from some trivial cases), and are thus described by possibility integrity constraints.

Then, we consider \emph{anonymous} aggregators, which are aggregators that are not affected by permutations of their input and \emph{monotone} aggregators, which are aggregators that do not change their output if a voter changes his choice in order to agree with it. Both of these types of aggregators have been extensively studied in the bibliography (see e.g. \cite{dokow2009aggregation,dokow2010aggregation,nehring2010abstract,grandi2011binary,grandi2013lifting,list2012theory,kirousis2017aggregation}), as they have properties that are considered important, if not necessary, for democratic voting schemes. Here, we show that domains admitting anonymous aggregators are described by local possbibility integrity constraints, while domains admitting non-dictatorial monotone aggregators by separable or renamable partially Horn formulas.    

We also consider another kind of non-dictatorial aggregator that shares an important property of majority voting. \emph{StrongDem} aggregators are $k$-ary aggregators that, on every issue, we can fix the votes of any $k-1$ voters in such a way that the $k$-th voter cannot change the outcome of the aggregation procedure. These aggregators were introduced by Szegedy and Xu \cite{szegedy2015impossibility}. Here, we show that domains admitting StrongDem aggregators are described by a subclass of local possibility integrity constraints.

Finally, we consider aggregators satisfying \emph{systematicity} (see List \cite{list2012theory}). Aggregators are called systematic when they aggregate every issue with a common rule. This property has appeared also as \emph{(issue-)neutrality} in the bibliography (see e.g. Grandi and Endriss \cite{grandi2013lifting} and Nehring and Puppe \cite{nehring2010abstract}). By viewing a domain $D$ as a Boolean relation, systematic aggregators are in fact \emph{polymorphisms} of $D$ (see Section \ref{ssec:syst}). Polymorphisms are a very important and well studied tool of Universal Algebra. Apart from showing, using known results, that domains admitting systematic aggregators are described by specific types of local possibility integrity constraints, we also examine how our previous results concerning the various kinds of non-dictatorial voting schemes are affected by requiring that the aggregators also satisfy systematicity.

It should be mentioned that, to prove our results, we use either implicitly or explicitly what is known as \emph{Post's lattice}. Post \cite{post1941two} completely classified \emph{clones} of Boolean functions, that is sets of Boolean functions containing all the projections, that are closed under \emph{superposition} (see Section \ref{sec:other} for formal definitions). Here, we take advantage of the fact that, when aggregators of a domain are defined in an issue-by-issue fashion, the set of their components on any given issue forms a clone.

As examples of similar classical results in the theory of Boolean relations, we mention that  domains component-wise closed under $\land$ or  $\lor$  have been  identified with the class of domains  that  are  models of Horn or dual-Horn formulas respectively (see Dechter and Pearl  \cite{dechter1992structure}). Also it is known that  a domain  is component-wise closed under the ternary sum \hspace{-1em} $\mod 2$  if and only if it is the  set of   models of a formula that is a conjunction  of subformulas each of which is an  exclusive OR  (the term ``ternary'' refers to the number of bits to be summed). Finally,    a domain  is  closed under the  ternary  majority operator if and only it is the set of models of  a  CNF formula where each clause has at most two literals.  The latter two results are due to Schaefer \cite{schaefer1978complexity}. The  ternary majority operator is the  ternary Boolean function   that returns 1 on input three  bits  if and only if at least two of them are 1. It is also known that the respective formulas for each case can be  found  in polynomial time with respect to the size of $D$ (see Zanuttini and H\'{e}brard \cite{zanuttini2002unified}).

Our results can be interpreted as verifying that various kinds of non-dictatorial voting sche\-mes can always  be generated by  integrity constraints that have a specific, easily recognizable  syntactic form. This can prove valuable for applications in the field of judgment aggregation, where relations are frequently encountered in compact form, as the sets of models of integrity constraints. As examples of such applications, we mention the work of Pigozzi \cite{pigozzi2006belief} in avoiding the \emph{discursive dilemma}, the characterization of \emph{safe agendas} by Grandi and Endriss \cite{grandi2011binary} and that of Endriss and de Haan \cite{endriss2015complexity} concerning the \emph{winner determination problem}. Our proofs draw from results  in judgment aggregation theory as well as from  results   about propositional formulas and logical relations. Specifically, as stepping stones  for our algorithmic syntactic characterization we use  three results. First, a  theorem implicit in Dokow and Holzman \cite{dokow2009aggregation} stating that a domain is a possibility domain if and only if it either admits a binary (of arity 2) non-dictatorial aggregator or it is component-wise closed  under the ternary direct sum. This result was generalized by Kirousis et al. \cite{kirousis2017aggregation} for domains in the non-Boolean framework. Second, a characterization of local possibility domains proven by Kirousis et al. in \cite{kirousis2017aggregation}. Lastly, the ``unified framework for structure identification'' by Zanuttini and  H\'{e}brard \cite{zanuttini2002unified} (see next section for definitions).
\vspace{0.2cm}

\textbf{Relation to the conference version:} A preliminary version of this paper appeared in the Proceedings of the 46th International Colloquium on Automata, Languages and Programming (ICALP 2019) \cite{DBLP:conf/icalp/DiazKKL19}. The  present full version, in addition to detailed proofs and several improvements in the presentation, contains the results about generalized dictatorships, anonymous, monotone and StrongDem aggregators, and the discussion about systematicity, which were not included in the conference version.

\section{Preliminaries}\label{sec:prelim}
We first give the notation and basic definitions from Propositional Logic and judgment aggregation theory that we will use.

Let $V=\{x_1, \ldots, x_n \} $ be a set of Boolean variables. A literal is either a variable $x \in V$ (positive literal)  or a negation $\neg x$ of it (negative literal). 
A clause is a disjunction 
$(l_{i_1} \lor  \cdots \lor l_{i_k})$ 
of literals from different variables. A propositional formula $\phi$  (or just a  ``formula'', without the specification ``propositional'', if clear from the context) in Conjunctive Normal  Form (CNF) is a conjunction of clauses. A formula is called $k$-CNF if every clause of it contains exactly $k$ literals. A (truth) assignment to the variables is an assignment of either 0 or 1 to each of the variables. We denote by $a(x)$ the value of $x$ under the assignment $a$. Truth assignments will be identified with elements of $\{0,1\}^n$, or $n$-sequences of bits. The truth value of a formula for an assignment is computed by the usual rules that apply to logical connectives. The set of satisfying (returning the value 1)  truth assignments, or models, of a formula, is denoted by $\textrm{Mod}(\phi)$. In what follows, we will assume, except if specifically noted, that $n$ denotes the number of variables of a formula $\phi$ and $m$ the number of its clauses.

We say that a variable $x$ appears \emph{positively} (resp. \emph{negatively}) in a clause $C$, if $x$ (resp. $\neg x$) is a literal of $C$. A variable $x\in V$ is positively (resp. negatively) \emph{pure} if it has only positive (resp. negative) appearances in $\phi$.

A Horn clause is a clause with at most one positive literal. A dual Horn is a clause with at most one negative literal. A formula that contains only Horn (dual Horn) clauses is called Horn (dual Horn, respectively). Generalizing the notion of a clause,  we will also call clauses sets of  literals connected with exclusive OR (or direct sum), the logical connective that corresponds to summation in $\{0,1\}  \mod 2$. Formulas obtained by considering a conjunction of  such clauses are called affine. Finally, bijunctive are called the formulas  whose  clauses, in inclusive disjunctive form,  have at most two literals. A domain $D \subseteq \{0,1\}^n $ is called Horn, dual Horn, affine or bijunctive respectively, if there is a Horn, dual Horn, affine or bijunctive formula $\phi$ of $n$ variables such that 
$\textrm{Mod}(\phi)  =D $. In the previous section, we mentioned   efficient solutions to classical syntactic characterization problems for classes of relations with given closure properties on one hand, and formulas of the syntactic forms mentioned above on the other. 

We have presented the above notions and results  without many details, as they are all classical results. For the notions that follow we give more detailed definitions and examples. The first one, as far as we can tell, dates back to 1978 (see Lewis \cite{lewis1978renaming}).

\begin{definition}\label{def:renamable}
A formula $\phi$ whose variables are among the elements of the set $V= \{x_1, \ldots, x_n\}$  is called {\em renamable Horn}, if there is a subset $V_0 \subseteq V$  so that if we replace every appearance  of every negated  literal $l$ from  $V_0$ with the corresponding positive one and vice versa,  $\phi$ is transformed to a Horn formula. 
\end{definition}
The process of replacing the literals of some variables with their logical opposite  ones, is called a \emph{renaming} of the variables of $\phi$. It is straightforward to see that any dual-Horn formula is renamable Horn (just rename all its variables).
\begin{example} 
 Consider the formulas $\phi_1=(x_1\lor x_2\lor \neg x_3)\land (\neg x_1 \lor x_3 \lor x_4) \land (\neg x_2 \lor x_3 \lor \neg x_5)$ and $\phi_2=(\neg x_1\lor x_2\lor x_3\lor x_4)\land (x_1\lor \neg x_2\lor \neg x_3)\land (x_4\lor x_5)$, defined over $V=\{x_1,x_2,x_3,x_4,x_5\}$.

The formula $\phi_1$ is renamable Horn. To see this, let $V_0=\{x_1,x_2,x_3,x_4\}$. By renaming these variables, we get the Horn formula $\phi_1^*=(\neg x_1\lor \neg x_2\lor x_3)\land (x_1 \lor \neg x_3 \lor \neg x_4) \land (x_2 \lor \neg x_3 \lor \neg x_5)$. On the other hand, it is easy to check that $\phi_2$ cannot be transformed into a Horn formula for any subset of $V$, since for the first clause to become Horn, at least two variables from $\{x_2,x_3,x_4\}$ have to be renamed, making the second clause not Horn. \hfill$\diamond$
	\end{example}

	It turns out that whether a formula is renamable Horn can be checked in linear time. There are several algorithms that do that in the literature, with the one of del Val \cite{del20002} being a relatively recent such example. The original non-linear one was given by Lewis \cite{lewis1978renaming}.
	
	We now proceed with introducing several syntactic types of formulas:
	\begin{definition}
	A 	formula is called {\em separable} if its variables  can be partitioned into two non-empty disjoint subsets so that no clause of it contains literals from both subsets.
	\end{definition}
\begin{example} 
 The formula $\phi_3=(\neg x_1\lor x_2\lor x_3)\land (x_1\lor \neg x_2\lor \neg x_3)\land (x_4\lor x_5)$  is separable. Indeed, for the partition $V_1=\{x_1,x_2,x_3\}$, $V_2=\{x_4,x_5\}$ of $V$, we have that no clause of $\phi_3$ contains variables from both subsets of the partition. On the other hand, there is no such partition of $V$ for neither $\phi_1$ nor $\phi_2$ of the previous example. \hfill$\diamond$	\end{example}

	The fact that separable formulas can be recognized in linear time is relatively straightforward (see Proposition  \ref{prop:linsep}  in Section \ref{sec:ident}).
	
	We now  introduce the following notions: 

	\begin{definition}\label{def:ph}
		A formula $\phi$ is called {\em partially Horn} if there is a nonempty subset $V_0\subseteq V$ such that (i) the clauses containing only variables from $V_0$ are Horn and (ii) the variables of $V_0$ appear only negatively (if at all) in a clause containing also variables not in $V_0$.
	\end{definition}
If a formula $\phi$ is partially Horn, then any non-empty subset $V_0\subseteq V$ that satisfies the requirements of Definition \ref{def:ph} will be called an \emph{admissible set of variables}. Also the  Horn clauses that contain variables only from $V_0$ will be called  \emph{admissible clauses} (the set of admissible clauses might be empty). A Horn clause with a variable in $V\setminus V_0$ will be called \emph{inadmissible} (the reason for the possible existence of such clauses will be made clear in the following example).

Notice that a Horn formula is, trivially, partially Horn too, as is a formula that contains at least one negative pure literal. It immediately follows that the satisfiability problem remains NP-complete even when restricted to partially Horn formulas (just add a dummy negative pure literal). However, in Computational Social Choice, domains are considered to be non-empty as a non-degeneracy condition. Actually, it is usually assumed that the projection of a domain to any one of the $n$
issues is the set $\{0,1\}$. 

\begin{example}\label{ex:pH}
We first examine the formulas of the previous examples. $\phi_1$ is partially Horn, since it contains the negative pure literal $\neg x_5$. The Horn formula $\phi_1^*$ is also trivially partially Horn. On the other hand, $\phi_2$ and $\phi_3$ are not, since for every possible $V_0\subseteq \{x_1,x_2,x_3,x_4,x_5\}$, we either get non-Horn clauses containing variables only from $V_0$, or variables of $V_0$ that appear positively in inadmissible clauses.

The formula $\phi_4=(x_1\lor \neg x_2)\wedge (\neg x_1 \vee x_2) \land (\neg x_2\lor \neg x_3)\land (\neg x_1\lor x_3\lor x_4)$ is partially Horn. Its first three clauses are Horn, though the third has to be put in every inadmissible set, since $x_3$ appears positively in the fourth clause which is not Horn. The first two clauses though constitute an admissible set of Horn clauses. Finally, $\phi_5=(x_1\lor \neg x_2)\land (x_2\lor \neg x_3)\land (\neg x_1\lor x_3\lor x_4)$ is not partially Horn. Indeed, since all its variables appear positively in some clause, we need at least one clause to be admissible. The first two clauses of $\phi_5$ are Horn, but we will show that they  both have to be included in an inadmissible set.  Indeed, the second has to belong to every  inadmissible  set since $x_3$ appears positively in the third, not Horn, clause. Furthermore, $x_2$ appears positively in the second clause, which we just showed to belong to every inadmissible set. Thus, the first clause also has to be included in every inadmissible set, and therefore $\phi_5$ is not partially Horn. \hfill$\diamond$\end{example}

Accordingly to the case of renamable Horn formulas, we define:
	\begin{definition}\label{def:rph}
	A formula is called {\em renamable partially Horn }
	if some of its variables can be renamed (in the sense of Definition \ref{def:renamable}) so that it becomes partially Horn.
		\end{definition}
Observe that any Horn, renamable horn or partially Horn formula is trivially renamable partially Horn. Also, a formula with at least one pure positive literal is renamable partially Horn, since by renaming the corresponding variable, we get a formula with a pure negative literal. 		
\begin{example}\label{ex:rph} 
 All formulas of the previous examples are renamable partially Horn: $\phi_1^*$, $\phi_1$  and $\phi_4$ correspond to the trivial cases we discussed above, whereas $\phi_2$, $\phi_3$ and $\phi_5$ all contain the pure positive literal $x_4$. 

Lastly, we examine two more formulas: $\phi_6=(\neg x_1\lor x_2\lor x_3 \vee x_4)\land (x_1\lor \neg x_2\lor \neg x_3)\wedge(\neg x_4 \vee x_5)$ is easily not partially Horn, but by renaming $x_4$ and $x_5$, we obtain the partially Horn formula $\phi_6^*=(\neg x_1\lor x_2\lor x_3 \vee \neg x_4)\land (x_1\lor \neg x_2\lor \neg x_3)\wedge (x_4\vee\neg x_5)$, where $V_0=\{x_4,x_5\}$ is the set of admissible variables. One the other hand, the formula $\phi_7=(\neg x_1\lor x_2\lor x_3)\land (x_1\lor \neg x_2\lor \neg x_3)$ is not renamable partially Horn. Indeed, whichever variables we rename, we end up with one Horn and one non-Horn clause, with at least one variable of the Horn clause appearing positively in the  non-Horn clause. \hfill$\diamond$
	\end{example}
	We prove, by Theorem \ref{thm:linearrph} in  Section \ref{sec:ident}  that   checking whether a formula is renamable partially Horn  can be done   in linear  time in the length of the formula.
\begin{remark}\label{rem:ren}
	Let $\phi$ be a renamable partially Horn formula, and let $\phi^*$ be a partially Horn formula obtained by renaming some of the variables of $\phi$, with $V_0$ being the admissible set of variables. Let also $\mathcal{C}_0$ be an admissible set of Horn clauses in $\phi^*$. We can assume  that only variables of $V_0$ have been renamed, since the other variables are not involved in the definition of being partially Horn. Also, we can assume that  a Horn clause of $\phi^*$ whose variables  appear only in clauses  in $\mathcal{C}_0$ belongs to $\mathcal{C}_0$. Indeed, if not,  we can add it to 	$\mathcal{C}_0$.\hfill$\diamond$
			\end{remark}
	\begin{definition}\label{def:posintcon}
	A formula is called a {\em possibility integrity constraint} if it is either separable, or renamable partially Horn or affine.	
	\end{definition}
	From the above and the fact that checking whether a formula is affine is easy we get Theorem \ref{thm:pic} in Section \ref{sec:ident}, which states that checking whether a formula  is  a possibility integrity constraint can be done in  polynomial time in the size of the formula. 
	
Now, let $V,V'$ be two disjoint sets of variables. By further generalizing the notion of a clause of a CNF formula, we say that a $(V,V')$-\emph{generalized clause} is a clause of the form:
$$(l_1\vee\cdots\vee l_s\vee(l_{s+1}\oplus\cdots\oplus l_t)),$$ where the literal $l_j$ corresponds to variable $v_j$, $j=1,\ldots,t$, $s<t$, $v_1,\ldots,v_s\in V$ and $v_{s+1},\ldots,v_t\in V'$. Such a clause is falsified by exactly those assignments that falsify every literal $l_i$, $i=1,\ldots,s$ and satisfy an even number of literals $l_j$, $j=s+1,\ldots,t$. An affine clause is trivially a $(V,V')$-generalized clause, where all its literals correspond to variables from $V'$. Consider now the following syntactic type of formulas. 

\begin{definition}\label{def:upic}
A formula $\phi$ is a \emph{local possibility integrity constraint (\lpic)} if there are three \emph{pairwise disjoint} subsets $V_0,V_1,V_2\subseteq V$, with $V_0\cup V_1\cup V_2=V$, where no clause contains variables both from $V_1$ and $V_2$ and such that:\begin{enumerate}
    \item by renaming some variables of $V_0$, we obtain a partially Horn formula $\phi^*$, whose set of admissible variables is $V_0$,
    \item any clause contains at most two variables from $V_1$ and
    \item the clauses containing variables from $V_2$ are $(V_0,V_2)$-generalized clauses.
\end{enumerate}  
\end{definition}
\begin{example}\label{ex:upic}
Easily, every (renamable) Horn, bijunctive or affine formula is an \lpic. On the other hand, consider the following possibility integrity constraint: $$\phi_8=(\neg x_1\vee x_2\vee x_3 \vee x_4)\wedge(\neg x_2\vee \neg x_3 \vee \neg x_4).$$ $\phi_8$ is partially Horn, since it has the pure negative literal $\neg x_1$ and thus a possibility integrity constraint. But, it is not an \lpic, since however we define $V_0$, $V_1$, either there will be a variable of $V_0$ with a positive appearence in a non-admissible clause (even after any possible renaming of the variables of $V_0$) and/or there will be a clause with more than two literals from $V_1$.\hfill$\diamond$
\end{example}
By Definition \ref{def:upic}, an lpic $\phi$ over $V$, where $V_0\neq\emptyset$, is a renamable partially Horn formula. Otherwise, if $V_2\neq\emptyset$, $\phi$ is either separable or affine. Finally, if $V_1=V$, $\phi$ is renamable Horn, since every $2$-SAT formula is. Indeed, let $\alpha$ be an assignment satisfying $\phi$ and rename all the variables $x\in V$ such that $\alpha(x)=1$. Then, every clause of $\phi$ either has a positive literal that is renamed, or a negative one that is not renamed.

To end this preliminary discussion about propositional formulas, we consider \emph{prime} formulas. Given a clause $C$ of a formula $\phi$, we say that a \emph{sub-clause} of $C$ is any non-empty clause created by deleting at least one literal of $C$. In Quine \cite{quine1959cores} and Zanuttini and H\'ebrard \cite{zanuttini2002unified}, we find the following definitions:
\begin{definition}\label{def:prime}
A clause $C$ of a formula $\phi$ is a \emph{prime implicate} of $\phi$ if no sub-clause of $C$ is logically implied by $\phi$. Furthermore, $\phi$ is prime if all its clauses are prime implicates of it. 
\end{definition}
In Section \ref{sec:character}, we use this notion in order to efficiently construct formulas whose sets of models is a (local) possibility domain.

We now come to some notions from Social Choice Theory (for an introduction, see e.g. List~\cite{list2012theory}). 
	In the sequel, we will deal with $k$ sequences of  $n$-bit-vectors, each of which  belongs to a fixed domain $D \subseteq \{0,1\}^n$. It is convenient to present such sequences with an $k\times n$ matrix $x^i_j, i=1, \ldots, k, j =1, \ldots, n$ with bits as entries. The rows of this matrix  are denoted by $x^i, i=1, \ldots, k$ and the columns  by $x_j, j= 1, \ldots, n$.  Each  row represents a row-vector of 0/1 decisions on $n$ issues by one of $k$ individuals.  Each column represents the column-vector of the positions of all $k$ individuals on a particular issue.
	
	In the sequel, we will assume that all domains $D\subseteq\{0,1\}^n$ are \emph{non-degenerate}, i.e. for any $j\in\n$, it holds that $D_j=\{0,1\}$, where $D_j$ denotes the projection of $D$ to the $j$-th coordinate. This is a common assumption in Social Choice Theory, which reflects the idea that voting is nonsensical when there is only one option. Consequently, we will also assume that the formulas we consider have non-degenerate domains too.
	
	A domain $D \subseteq \{0,1\}^n$ is said to have a $k$-ary (of arity $k$) unanimous aggregator  if  there exists  a sequence of $n$ $k$-ary Boolean functions $(f_1,\ldots,f_n)$, $f_j: \{0,1\}^k \rightarrow \{0,1\}, j =1, \ldots, n$ such that 
	\begin{itemize}
	\item all $f_j$ are unanimous, i.e  if $b_1 = \cdots = b_k$ are equal bits,   then  $$f_j(b_1, \ldots, b_k) = b_1 = \cdots =  b_k, \mbox{ and }$$
	\item \label{itm:2} if for a matrix $(x^i_j)_{i,j}$ that represents the opinions of $k$ individuals on $n$ issues we have that the row-vectors $x^i \in D$ for all $i=1, \ldots, k$, then $$(f_1(x_1), \ldots, f_n(x_n)) \in D.$$  
	\end{itemize} 
	
	Notice that in the second  bullet  above,  the $f_j$'s are applied to   column-vectors, which have dimension $k$. The $f_j$'s are called the  {\em components} of the aggregator $(f_1,\ldots,f_n)$.  Intuitively, an aggregator  is a sequence of functions that  when applied   onto some rational opinion vectors  of $k$ individuals on $n$ issues, in a issue-by-issue fashion, they return  a row-vector that is still  rational. From now on, we will refer to  unanimous aggregators, simply as aggregators. We will also sometimes say that $F$ is an aggregator, meaning that $F$ is a sequence of $n$ functions $(f_1,\ldots,f_n)$ as above.
	
	The fact that we defined aggregators as $n$-tuples of functions, means that we require that they satisfy a property called \emph{Independence of Irrelevant Alternatives}, in the sense that the way we aggregate an issue, is independent of the way we aggregate the rest. An aggregator $F=(f_1,\ldots,f_n)$, where $f_1=f_2=\ldots=f_n:=f$ is called \emph{systematic}. Notationally, we write $\bar{f}$ to denote the $n$-tuple $(f,\ldots,f)$, where the number of components of $\bar{f}$ always corresponds to the arity of the given domain.
	
An aggregator $F=(f_1, \ldots, f_n)$ for a domain $D$ is called {\em dictatorial} if there is a $d= 1. \ldots, k$ such that  $f_1 = \cdots = f_n = \pr_d^k$, where  $\pr_d^k: (b_1, \dots, b_k) \mapsto b_d$ is the $k$-ary projection function on the $d$'th coordinate. 	

A $k$-ary aggregator is called a {\em projection} aggregator if each of its components is a  projection function  $\pr_d^k$, for some $d=1, \ldots, m$. Notice that it is conceivable to have non-dictatorial aggregators that are projection aggregators.

The only unary (of arity $1$) unanimous function is the \emph{identity} function ${\rm id}:\{0,1\}\mapsto\{0,1\}$, where ${\rm id}(x)=x$, $x\in\{0,1\}$. Thus, there is only one unary aggregator, which is trivially dictatorial, as all of its components equal $\pr_1^1$. Thus, from now on, we will always assume that all the functions we consider have arity at least $2$. A binary (of arity 2) Boolean function $f: \{0,1\}^2 \rightarrow \{0,1\}$ is called {\em symmetric} if for  all pairs of bits $b_1, b_2$, we have that  $f(b_1, b_2) = f(b_2, b_1)$. A binary aggregator is called symmetric if all its components are symmetric. Let us mention here the easily to check fact that the only unanimous binary functions are the $\land$, $\lor$ and the two projection functions $\pr_1^2, \pr_2^2$. Of those four, only the first two are symmetric.
\begin{definition}	
A domain $D$ is called a possibility domain if it has a (unanimous) non-dictatorial aggregator of some arity.
\end{definition}
Notice that the search space for such an aggregator is large, as the arity is not restricted. However, from \cite[Theorem 3.7]{kirousis2017aggregation} (a result that follows from  Dokow and Holzman  \cite{dokow2009aggregation}, but without being explicitly mentioned there),  we can easily get that:
\begin{theorem}[Dokow and Holzman \cite{dokow2009aggregation}]\label{thm:DH}	A domain $D$ is a possibility domain if and only if it admits either: (i) a non-dictatorial binary projection aggregator or (ii) a non-projection binary aggregator (i.e. at least one symmetric component) or (iii) a ternary aggregator all components of which are the binary addition \hspace{-1ex} $\mod 2$.
\end{theorem}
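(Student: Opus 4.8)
I would prove both directions, putting essentially all the weight on the ``only if'' direction. The backward direction is immediate: each of (i)--(iii) directly exhibits a unanimous non-dictatorial aggregator. A mixed binary projection aggregator is non-dictatorial by definition, giving (i); any binary aggregator with a symmetric component uses $\land$ or $\lor$, neither of which is a projection, giving (ii); and the ternary sum $x\oplus y\oplus z$ is idempotent, since $b\oplus b\oplus b=b$, and depends on all three arguments, so it is not a projection, giving (iii). Before attacking the converse I would record the reformulation that (i) and (ii) together say exactly that \emph{$D$ admits a binary non-dictatorial aggregator}: the only unanimous binary functions are $\pr_1^2,\pr_2^2,\land,\lor$, so a binary aggregator fails to be dictatorial precisely when it has a symmetric component (case (ii)) or all its components are projections but not all the same one (case (i)). Thus the real content is that a non-dictatorial aggregator of \emph{any} arity forces either a binary one of this shape or closure under the systematic ternary sum.

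\textbf{Forward direction: the clone structure.} Fix a non-dictatorial aggregator $F=(f_1,\dots,f_n)$ of arity $k$. The engine is \emph{Post's lattice} together with the structural fact that aggregators are closed under coordinatewise superposition: substituting aggregators into the argument slots of an aggregator again yields an aggregator, since membership in $D$ is preserved stage by stage. Consequently, for each issue $j$ the set $\Cl_j$ of functions occurring as the $j$-th component of some aggregator of $D$ is a \emph{clone} of idempotent functions --- it contains all projections (from dictatorial aggregators) and is closed under superposition. I would then split on $F$. If every component of $F$ is a projection, say $f_j=\pr_{d_j}^k$, then non-dictatorship means $j\mapsto d_j$ is non-constant; feeding two rows $y^1,y^2\in D$ into the $k$ slots, copying $y^1$ into the slots whose index lies in a set $A$ and $y^2$ into the rest, produces a binary aggregator with $g_j=\pr_1^2$ for $d_j\in A$ and $g_j=\pr_2^2$ otherwise. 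Choosing $A$ to separate the values $\{d_j\}$ makes both projection types occur, which is exactly case (i).

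\textbf{Forward direction: Post's lattice.} Otherwise some component $f_{j_0}$ is not a projection, so $\Cl_{j_0}$ strictly contains the projection clone. The minimal idempotent Boolean clones above the projections are, by Post's classification, generated by $\land$, $\lor$, $\maj$, or the minority function $x\oplus y\oplus z$; hence $\Cl_{j_0}$ contains one of these. If it contains $\land$ or $\lor$ one extracts a symmetric binary component and lands in case (ii), and the $\oplus$ case pushes toward the affine regime of case (iii). The majority case I would fold back into (ii) using a fact already established in the preliminaries: every $2$-SAT (bijunctive) formula is renamable Horn (Definition \ref{def:renamable}), and a renamable Horn domain carries a binary aggregator all of whose components are $\land$ or $\lor$ --- rename to make the formula Horn, apply systematic $\land$, then undo the renaming, which turns $\land$ into $\lor$ on the renamed issues.

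\textbf{The main obstacle.} The genuine difficulty, and the place where I expect to lean on the imported result, is twofold. First, the clone statement is only \emph{local}: knowing that $\Cl_{j_0}$ contains, say, $\land$ guarantees a good function on issue $j_0$ but says nothing about the remaining components of the witnessing aggregator, so one must \emph{synchronize} the issues and argue that a single aggregator can be chosen whose non-dictatorial behaviour is globally visible rather than concentrated on one coordinate. Second, one must genuinely \emph{reduce the arity} from arbitrary $k$ down to $2$ (or to the ternary sum) without destroying non-dictatorship; identifying variables inside a component can silently collapse a non-projection into a projection --- for instance $\maj(x,y,y)=y$ --- so the reduction must be done with care. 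This combinatorial kernel is precisely what I would import from \cite{dokow2009aggregation} (equivalently \cite[Theorem~3.7]{kirousis2017aggregation}) rather than reprove by hand.
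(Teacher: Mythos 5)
The paper offers no proof of Theorem \ref{thm:DH} at all: it is stated as a result imported from Dokow and Holzman \cite{dokow2009aggregation}, obtained ``easily'' from \cite[Theorem 3.7]{kirousis2017aggregation}, and your proposal---which verifies the trivial backward direction, makes the easy split of ``binary non-dictatorial'' into cases (i)/(ii) explicit via the fact that $\pr_1^2,\pr_2^2,\wedge,\vee$ are the only unanimous binary functions, handles the all-projections case by a correct superposition argument, and explicitly imports the hard local-to-global synchronization and arity-reduction kernel from those same sources---is therefore correct and follows essentially the same route as the paper. One caution: your standalone sketch for the $\maj$ case (bijunctive $\Rightarrow$ renamable Horn $\Rightarrow$ binary aggregator with components in $\{\wedge,\vee\}$) would require $D$ to be closed under $\maj$ \emph{systematically}, not merely $\maj\in\Cl_{j_0}$ for a single coordinate, but since that synchronization step is exactly what you declare you are importing, this does not amount to a genuine gap.
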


\begin{example}\label{ex:possdom}
Theorem \ref{thm:DH} directly implies that the truth set of any affine formula is a possibility domain. Consider now the formula $\phi_7=(\neg x_1\vee x_2 \vee x_3)\wedge(x_1\vee\neg x_2\vee\neg x_3)$ of Example \ref{ex:rph}. It holds that:$$\rMod(\phi_7)=\{0,1\}^3\setminus\{(1,0,0),(0,1,1)\}.$$ By checking all $4^3$ different triples of binary unanimous operators and since $\rMod(\phi_7)$ is not affine, one can see that $\rMod(\phi_7)$ is an impossibility domain. On the other hand, let $$\phi_9:=(\neg x_1\vee x_2 \vee x_3)\wedge(x_1\vee\neg x_2\vee\neg x_3)\wedge(\neg x_4\vee x_5 \vee x_6)\wedge(x_4\vee\neg x_5\vee\neg x_6).$$ Then, we have that:
$$\rMod(\phi_9)=\rMod(\phi_7)\times\rMod(\phi_7),$$ which is a possibility domain, since every Cartesian product is (see Kirousis et al. \cite[Example $2.1$]{kirousis2017aggregation}). Finally, for: $$\phi_6=(\neg x_1\lor x_2\lor x_3 \vee x_4)\land (x_1\lor \neg x_2\lor \neg x_3)\wedge(\neg x_4 \vee x_5),$$ of Example \ref{ex:rph} we have that:$$\rMod(\phi_6)=(\rMod(\phi_7)\times\{(0,0),(0,1)\})\cup\Big((\{0,1\}^3\setminus\{(1,0,0)\})\times\{(1,1)\}\Big)$$ is a possibility domain, as it admits the binary aggregator $(\pr_1^2,\pr_1^2,\pr_1^2,\vee,\vee)$.\hfill$\diamond$
\end{example}

Nehring and Puppe \cite{nehring2010abstract} defined a type of non-dictatorial aggregators they called \emph{locally non-dictatorial}. A $k$-ary aggregator $(f_1,\ldots,f_n)$ is locally non-dictatorial if $f_j\neq \pr_d^k$, for all $d\in\{1,\ldots,k\}$ and $j=1,\ldots,n$.
\begin{definition}\label{def:upd}
$D$ is a \emph{local possibility domain (\lpd)} if it admits a locally non-dictatorial aggregator. 
\end{definition}
Kirousis et al. \cite{kirousis2017aggregation} introduced these domains as \emph{uniform} non-dictatorial  domains, both in the Boolean and non-Boolean framework and provided a characterization for them. Consider the following ternary operators on $\{0,1\}$: (i) $\wedge^{(3)}(x,y,z):=\wedge(\wedge(x,y),z))$ (resp. for $\vee^{(3)}$), (ii) $\maj$, where $\maj(x,y,z)=1$ if and only if \emph{at least} two elements of its input are $1$ and (iii) $\oplus$, where $\oplus(x,y,z)=1$ if an only if \emph{exactly one or all} of the elements of its input are equal to $1$.
\begin{theorem}[Kirousis et al. \cite{kirousis2017aggregation}, Theorem $5.5$]\label{thm:upd-char}
$D\subseteq\{0,1\}^n$ is a local possibility domain if and only if it admits a ternary aggregator $(f_1,\ldots,f_n)$ such that $f_j\in\{\wedge^{(3)},\vee^{(3)},\maj,\oplus\}$, for $j=1,\ldots,n$.
\end{theorem}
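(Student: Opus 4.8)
The plan is to prove the two implications separately, with essentially all the content in the ``only if'' direction. The ``if'' direction is immediate: each of $\wedge^{(3)},\vee^{(3)},\maj,\oplus$ is unanimous (idempotent) and none of them equals a projection $\pr_d^3$ for any $d\in\{1,2,3\}$, so any aggregator all of whose components lie in $\{\wedge^{(3)},\vee^{(3)},\maj,\oplus\}$ is locally non-dictatorial, whence $D$ is a local possibility domain.

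For the hard direction, suppose $D$ admits a locally non-dictatorial aggregator $F=(f_1,\ldots,f_n)$ of some arity $k$. First I would record the basic closure property: aggregators of $D$ are closed under coordinatewise superposition (substituting aggregators into the argument slots of an aggregator yields an aggregator, as rational inputs are sent to rational outputs at every stage), and every projection tuple $(\pr_d^s,\ldots,\pr_d^s)$ is a (dictatorial) aggregator. Consequently, for each issue $j$ the set $\mathcal{F}_j$ of Boolean functions arising as the $j$-th component of \emph{some} aggregator of $D$ is a clone; it consists of idempotent functions, and it contains $f_j$, which by local non-dictatoriality is not a projection. Hence each $\mathcal{F}_j$ strictly contains the clone of projections.

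Next I would invoke Post's classification, specialized to idempotent Boolean clones: the minimal clones lying just above the projections are exactly $\langle\wedge\rangle,\langle\vee\rangle,\langle\maj\rangle,\langle\oplus\rangle$, and these contain the ternary operations $\wedge^{(3)},\vee^{(3)},\maj,\oplus$ respectively. Since $\mathcal{F}_j$ properly contains the projections, it must contain at least one such minimal clone, and therefore at least one of the four target operations; fix one choice $g_j^\star\in\{\wedge^{(3)},\vee^{(3)},\maj,\oplus\}\cap\mathcal{F}_j$ for every $j$.

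The remaining step, which I expect to be the main obstacle, is the \emph{simultaneous} realization of the $g_j^\star$ as the components of a single ternary aggregator: knowing $g_j^\star\in\mathcal{F}_j$ only produces, for each $j$ \emph{separately}, an aggregator whose $j$-th component is $g_j^\star$ while its other components are uncontrolled, and the naive tuple $(g_1^\star,\ldots,g_n^\star)$ need not be an aggregator since the condition $(g_1(x_1),\ldots,g_n(x_n))\in D$ couples the coordinates. I would attack this by induction on the number of coordinates whose component is not yet in the four-element set, using closure under superposition to correct one more coordinate at each step while leaving the already-corrected coordinates invariant. The leverage is that the four target operations are idempotent and symmetric and satisfy rigid identities such as $\maj(x,x,y)=x$, $\oplus(x,x,y)=y$ and $\wedge^{(3)}(x,x,y)=x\wedge y$; by substituting copies of the current aggregator and suitable projections in patterns matched to these identities, one can drive a chosen bad coordinate into its target $g_j^\star$ while collapsing harmlessly on the good coordinates. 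Since all four target operations are ternary, once every component lies in the set the resulting aggregator is automatically ternary. I anticipate that making this coordination precise—controlling all coordinates at once rather than independently, possibly via a finite case analysis over the Post-type of each $\mathcal{F}_j$ and with the all-$\oplus$ (affine) situation handled through Theorem \ref{thm:DH}—is exactly where the interaction between the relation $D$ and the clone structure must be exploited, and hence the crux of the argument.
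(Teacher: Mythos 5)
This theorem is not actually proved in the paper: it is imported from Kirousis, Kolaitis and Livieratos \cite{kirousis2017aggregation} (their Theorem 5.5), so your attempt has to be measured against that argument, whose key device (the operator $\diamond$) does surface later in this paper, in Subsection \ref{ssec:threetypeschar}. Your skeleton matches the real proof up to a point: the ``if'' direction is as trivial as you say, and for the ``only if'' direction it is correct that each per-issue component set $\mathcal{F}_j$ is a clone of unanimous functions containing a non-projection, so by Post's lattice (the paper's Lemma \ref{lem:post}) it contains one of $\wedge,\vee,\maj,\oplus$, and hence by closure under superposition one of $\wedge^{(3)},\vee^{(3)},\maj,\oplus$ occurs as the $j$-th component of some ternary aggregator $H^j$, for each $j$ \emph{separately}.

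The genuine gap is the step you yourself flag as the crux: you never exhibit the combination mechanism, and what you sketch (``drive a chosen bad coordinate into its target \ldots while collapsing harmlessly on the good coordinates'', ``suitable projections'', a ``case analysis over Post-types'', the all-$\oplus$ case via Theorem \ref{thm:DH}) would not assemble into a proof --- in particular Theorem \ref{thm:DH} only yields a non-dictatorial aggregator, not a locally non-dictatorial one, so it cannot dispose of any residual case. What closes the argument is the \emph{diamond} operator: for ternary aggregators $F,G$ define $(F\diamond G)_j(x,y,z)=f_j\bigl(g_j(x,y,z),g_j(y,z,x),g_j(z,x,y)\bigr)$. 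This is again an aggregator (it is a superposition of $F$, $G$ and projection aggregators), and it has exactly the two properties your induction needs: (i) if $g_j\in\{\wedge^{(3)},\vee^{(3)},\maj,\oplus\}$ then $(F\diamond G)_j=g_j$, because these four operations are totally symmetric, so the three inner arguments coincide and unanimity of $f_j$ collapses them; (ii) if $f_j$ lies in the four-element set then $(F\diamond G)_j$ is invariant under cyclic shifts of its arguments, hence commutative, and a ternary unanimous Boolean operation is commutative if and only if it lies in $\{\wedge^{(3)},\vee^{(3)},\maj,\oplus\}$ (Lemmas 5.7 and 5.10 of \cite{kirousis2017aggregation}). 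Setting $E^1=H^1$ and $E^{j+1}=E^j\diamond H^{j+1}$, property (i) fixes coordinate $j+1$ and property (ii) keeps coordinates $1,\ldots,j$ inside the four-element set (they need not stay pointwise equal to what they were, which is harmless); after $n$ steps every component is one of the four, and the result is ternary by construction. Without this operator or an equivalent explicit construction, the induction step ``correct one coordinate while keeping the corrected ones good'' --- which arbitrary superpositions will not do --- is precisely the content of the theorem, and it is missing from your proposal.
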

\begin{example}\label{ex:lpd}
Neither $\rMod(\phi_6)$ nor $\rMod(\phi_7)$ of Example \ref{ex:rph}, nor $\rMod(\phi_9)$ of Example \ref{ex:possdom} are local possibility domains, since every aggregator they admit has components that are projection functions. On the other hand, for: $$\phi_{10}=(\neg x_1\vee x_2\vee x_3)\wedge(x_1\vee x_2\vee\neg x_3),$$ we have that:$$\rMod(\phi_{10})=\{0,1\}^3\setminus\{(0,0,1),(1,0,0)\},$$ that is a possibility domain, since it admits $(\wedge,\vee,\wedge)$.\hfill$\diamond$
\end{example}

\section{Identifying (local) possibility integrity constraints}\label{sec:ident}
In this section, we show that identifying (local) possibility integrity constraints can be done in time linear in the length of the input formula. By Definitions \ref{def:posintcon} and \ref{def:upic}, it suffices to show that for separable formulas, renamable partially Horn formulas and lpic's, since the corresponding problem for affine formulas is trivial. 

In all that follows, we assume that we have a set of variables $V:=\{x_1,\ldots,x_n\}$ and a formula $\phi$ defined on $V$ that is a conjunction of $m$ clauses $C_1,\ldots,C_m$, where $C_j=(l_{j_1},\ldots,l_{j_{k_j}})$, $j=1,\ldots,n$, and $l_{j_s}$ is a positive or negative literal of $x_{j_s}$, $s=1,\ldots,k_j$. We denote the set of variables corresponding to the literals of a clause $C_j$ by $\vbl(C_j)$.

We begin with the result for separable formulas:
\begin{proposition}\label{prop:linsep}
There is an algorithm that, on input a formula $\phi$, halts in time linear in the length of $\phi$ and either returns that the formula is not separable, or alternatively produces a \emph{partition} of $V$ in two non-empty and disjoint subsets $V_1,V_2\subseteq V$, such that no clause of $\phi$ contains variables from both $V_1$ and $V_2$.
\end{proposition}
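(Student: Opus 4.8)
The plan is to reduce separability to a connectivity question on a graph whose vertices are the variables of $\phi$. First I would build an auxiliary graph $G$ on vertex set $V=\{x_1,\ldots,x_n\}$, adding, for each clause $C_j$, edges that make $\vbl(C_j)$ connected --- concretely, for each clause I link its first variable to every other variable occurring in it, so clause $C_j$ contributes $k_j-1$ edges. The key observation is that $\phi$ is separable precisely when $G$ is \emph{disconnected}: a partition $V=V_1\cup V_2$ into non-empty disjoint parts with no clause spanning both parts exists if and only if no clause has variables in both parts, which is equivalent to saying that each clause lies entirely within one connected component of $G$, which in turn is possible (with both parts non-empty) exactly when $G$ has at least two connected components.

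Once this equivalence is established, the algorithm itself is immediate: construct $G$, run a standard linear-time connected-components routine (depth-first or breadth-first search), and inspect the result. If $G$ is connected, report that $\phi$ is not separable. Otherwise, take $V_1$ to be the vertex set of one connected component and $V_2=V\setminus V_1$ its complement; both are non-empty, and by construction every clause sits inside a single component and hence cannot contain variables from both $V_1$ and $V_2$. This yields the required partition.

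For the running-time bound I would argue that building $G$ costs time proportional to $\sum_{j=1}^m k_j$, which is exactly the length of $\phi$ (each literal occurrence is read once and produces at most one edge), so $G$ has $O(n)$ vertices and $O(\sum_j k_j)$ edges, both linear in $|\phi|$. Computing connected components by DFS/BFS is linear in the size of $G$, hence linear in $|\phi|$, and extracting the partition is a single additional pass. Summing these gives the claimed linear total.

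The main obstacle, such as it is, lies not in the algorithmic machinery --- connected components are textbook linear-time --- but in pinning down the equivalence cleanly, in particular handling the boundary behaviour of the non-degeneracy/non-emptiness requirement: one must check that a genuine \emph{partition into two non-empty parts} is produced, and argue the ``only if'' direction, namely that any separating partition forces each clause into one component and therefore forces $G$ to be disconnected. I expect the correctness proof to hinge entirely on the reformulation ``separable $\iff$ each clause is intra-component $\iff$ $G$ disconnected,'' after which both the output and the timing are routine.
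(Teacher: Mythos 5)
Your proposal is correct and matches the paper's proof in all essentials: both reduce separability to disconnectedness of a graph on $V$ in which each clause's variables form a connected subgraph using $k_j-1$ edges (the paper links consecutive variables in a clause, forming a path, while you link the first variable to all others, forming a star --- an immaterial difference), and both then apply a linear-time connectivity check and return one component as $V_1$. The correctness equivalence and the timing analysis you outline are exactly those of the paper.
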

\begin{proof} We construct a graph on the variables of $\phi$, where two such vertices are connected if they appear consecutively in a common clause of $\phi$. The result is then obtained by showing that $\phi$ is separable if and only if $G$ is not connected.

Suppose the variables of each clause are ordered by the indices of their corresponding literals in the clause. Thus, we say that $x_{j_s},x_{j_t}$ are consecutive in $C_j$, if $t=s+1$, $s=1,\ldots,k_j-1$.

Given a formula $\phi$, construct an undirected  graph $G=(V,E)$, where :\begin{itemize}
    \item $V$ is the set of  variables of $\phi$, and 
    \item two vertices are connected if they appear consecutively in a common clause of $\phi$. \end{itemize} It is easy to see that each clause $C_j$, where $\vbl(C_j)=\{x_{j_1},\ldots,x_{j_{k_j}}\}$ induces the path $\{x_{j_1},\ldots,x_{j_{k_j}}\}$ in $G$.
    
For the proof of linearity, notice that  the set of edges can be    constructed in linear time with respect to the length of $\phi$, since we simply need to read once each clause of $\phi$ and connect its consecutive vertices. Also, there are standard techniques to check  connectivity in linear time in the number of edges (e.g. by a \emph{depth-first search} algorithm).

The correctness of the algorithm is derived by noticing that two connected vertices of $G$ cannot be separated in $\phi$. Indeed, consider a path $P:=\{x_r,\ldots,x_s\}$ in $G$ (this need not be a path induced by a clause). Then, each couple $x_t,x_{t+1}$ of vertices in $P$ belongs in a common clause of $\phi$, $t=r,\ldots,s-1$. Thus, $\phi$ is separable if and only if $G$ is not connected. \end{proof}

To deal with renamable partially Horn formulas, we will start with Lewis' idea \cite{lewis1978renaming} of creating, for a formula $\phi$, a 2{\sc Sat} formula $\phi'$ whose satisfiability is equivalent to $\phi$ being renamable Horn. However, here we need to (i) look for a renaming that might transform only some clauses into Horn and (ii) deal with inadmissible Horn clauses, since such clauses can cause other Horn clauses to become inadmissible too.

\begin{proposition}\label{prop:renpartHorn}
For every formula $\phi$, there is a formula $\phi'$ such that $\phi$ is renamable partially Horn if and only if $\phi'$ is satisfiable.\end{proposition}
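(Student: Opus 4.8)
The plan is to translate the quantifier-heavy Definitions \ref{def:ph} and \ref{def:rph} into purely local constraints and then read those constraints off as the clauses of $\phi'$. First I would introduce, for each $x_i\in V$, two propositional meta-variables: $r_i$, meaning ``$x_i$ is renamed'', and $m_i$, meaning ``$x_i$ belongs to the admissible set $V_0$''. For a clause $C$ and $x_i\in\vbl(C)$ I write $\lambda_i^C$ for the literal expressing ``$x_i$ occurs positively in $C$ after renaming'', namely $\lambda_i^C=\neg r_i$ if $x_i$ occurs positively in $C$ and $\lambda_i^C=r_i$ otherwise. Unwinding the definitions, a renaming $R$ together with a nonempty $V_0$ witnesses that $\phi$ is renamable partially Horn exactly when: (I) every clause all of whose variables lie in $V_0$ is Horn after renaming; (II) whenever $x_i\in V_0$ occurs positively after renaming in a clause $C$, every variable of $C$ also lies in $V_0$ (this is the contrapositive of condition (ii) of Definition \ref{def:ph}); and (III) $V_0\neq\emptyset$.

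I would then define $\phi'$ over the variables $\{r_i,m_i\}$ as the conjunction of three kinds of clauses. For the confinement condition (II), for each clause $C$ and each ordered pair $x_i,x_j\in\vbl(C)$ add $(\neg m_i\vee\neg\lambda_i^C\vee m_j)$; ranging over all $x_j\in\vbl(C)$, these express that a positively-occurring member of $V_0$ drags its whole clause into $V_0$. For condition (I) I would use the key simplification that, in the presence of (II), requiring admissible clauses to be Horn is equivalent to the condition (X) that no clause contains two variables that are simultaneously in $V_0$ and positive after renaming; this lets me replace the wide ``all-marked $\Rightarrow$ Horn'' clauses by the bounded clauses $(\neg m_i\vee\neg\lambda_i^C\vee\neg m_j\vee\neg\lambda_j^C)$, one per clause $C$ and unordered pair $x_i,x_j\in\vbl(C)$. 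Finally, for condition (III), add the single clause $\bigvee_i m_i$.

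The correctness proof splits into the two implications. For the forward direction, given a renaming $R$ and nonempty $V_0$ witnessing that $\phi$ is renamable partially Horn, I set $r_i=1\iff x_i\in R$ and $m_i=1\iff x_i\in V_0$, and check each clause type: the confinement clauses hold directly by condition (ii); the ``at most one positive head'' clauses hold because a second positively-occurring member of $V_0$ in a clause would force that clause admissible (by (ii)) and hence Horn (by (i)), a contradiction; and the nonemptiness clause holds since $V_0\neq\emptyset$. For the converse, from any satisfying assignment of $\phi'$ I read off $R=\{x_i:r_i=1\}$ and $V_0=\{x_i:m_i=1\}$ and verify (I), (II), (III), again invoking the equivalence $(\text{X})\wedge(\text{II})\iff(\text{I})\wedge(\text{II})$ to recover the Horn-ness of admissible clauses; thus after renaming by $R$ the set $V_0$ witnesses partial Horn-ness.

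The delicate point, and the step I expect to be the main obstacle, is verifying that the per-pair confinement implications compose to capture condition (ii) \emph{transitively}: a single positively-occurring marked variable forces its whole clause into $V_0$, which may in turn force further clauses to be admissible and hence Horn — exactly the cascading phenomenon of $\phi_5$ in Example \ref{ex:pH}, where one inadmissible Horn clause renders another inadmissible. I would check that the clauses $(\neg m_i\vee\neg\lambda_i^C\vee m_j)$ propagate this closure correctly, and, crucially, that the renaming of variables \emph{outside} $V_0$ enters no constraint (every occurrence of $\lambda_i^C$ is guarded by $\neg m_i$), so those renamings may be left free. Two features keep $\phi'$ from being a literal $2${\sc Sat} instance in the style of Lewis: the confinement implications are inherently ternary, since their premise ``$x_i$ marked and positive after renaming'' is a conjunction, and the nonemptiness requirement (III) is the one genuinely non-binary clause. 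Both are precisely what the subsequent linear-time result, Theorem \ref{thm:linearrph}, must address, e.g.\ by analysing the implication structure directly and testing non-emptiness via strongly connected components rather than through a single black-box $2${\sc Sat} call.
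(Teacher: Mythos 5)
Your proof is correct: the encoding faithfully captures Definitions \ref{def:ph} and \ref{def:rph}, the equivalence of (I) with (X) in the presence of (II) holds, and both directions go through, including the observation that renamings of variables outside $V_0$ are unconstrained (the paper makes the same simplification via Remark \ref{rem:ren}). The paper's reduction has the same skeleton --- per-variable status variables, pairwise clauses for each clause of $\phi$, one wide nonemptiness clause --- but it uses a different parametrization that is worth knowing. Instead of your two independent flags $r_i$ (``renamed'') and $m_i$ (``in $V_0$''), the paper introduces, for each variable $x$, the variable $x$ itself meaning ``$x\in V_0$ and renamed'' and a fresh variable $x'$ meaning ``$x\in V_0$ and not renamed'', adds the exclusion clauses $(\neg x\vee\neg x')$, and encodes ``$x\notin V_0$'' as both being $0$. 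Under this three-state encoding your compound premise ``$m_i\wedge\lambda_i^C$'' collapses to a single literal ($x_i'$ if $x_i$ occurs positively in $C$, and $x_i$ if it occurs negatively), and likewise the conclusion ``$x_j\in V_0$ and negative after renaming'' collapses to a single literal; confinement and Horn-ness thus fold together into purely binary clauses (for instance, for literals $x,\neg y$ in a common clause the paper adds $(\neg x'\vee y')$ and $(x\vee\neg y)$). Hence the paper's $\phi'$ is a $2$-SAT instance plus the single wide clause $\bigvee_{x\in V'}x$, and that structure is exactly what Theorem \ref{thm:linearrph} inherits: its implication graph is a direct transcription of these binary clauses, and the strongly-connected-component analysis absorbs the wide clause. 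Your closing claim that the confinement implications are ``inherently ternary'' is therefore an artifact of your choice of variables rather than a feature of the problem; this does not affect the correctness of your proof of Proposition \ref{prop:renpartHorn}, which asserts only the existence of some $\phi'$, but with your encoding the linear-time algorithm would have to be redesigned rather than obtained by reading off the implication structure of $\phi'$. What your version buys in exchange is transparency: the two conditions of Definition \ref{def:ph} remain visibly separate, and the reduction of admissible-clause Horn-ness to the pairwise condition (X) is a clean, reusable observation.
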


Before delving into the proof, we introduce some notation.  Assume  that after a renaming of some of the  variables in $V$, we get the partially Horn formula $\phi^*$, with $V_0$ being the admissible set of variables. Let $\mathcal{C}_0 $ be an admissible set of clauses for $\phi^*$. We assume below that only a subset $V^*\subseteq V_0$ has been renamed and that all Horn clauses of $\phi^*$ with variables exclusively from $V_0$ belong to $\mathcal{C}_0 $ (see Remark \ref{rem:ren}). Also, let $V_1:=V\setminus V_0$. The clauses of $\phi^*$, which are in a one to one correspondence with those of $\phi$, are denoted by $C_1^*,\ldots,C_m^*$, where $C_j^*$ corresponds to $C_j$, $j=1,\ldots,m$.
 
\begin{proof}
For each variable $x\in V$, we introduce a new variable $x'$. Intuitively, setting $x=1$ means that $x$ is renamed (and therefore $x\in V^*$), whereas setting $x'=1$ means that $x$ is in $V_0$, but is not renamed. Finally we set both $x$ and $x'$ equal to $0$ in case $x$ is not in $V_0$. Obviously, we should not not allow the assignment $x=x'=1$ (a variable in $V_0$ cannot be renamed and not renamed). Let $V'=V\cup\{x'\mid x\in V\}$.

Consider the formula $\phi'$ below, with variable set $V'$. For each clause $C$ of $\phi$ and for each $x\in\vbl(C)$: if $x$ appears positively in $C$, introduce the literals $x$ and $\neg x'$ and if it appears negatively, the literals $\neg x$ and $x'$. $\phi'$ is the conjunction of the following clauses: for each clause $C$ of $\phi$ and for each two variables $x,y\in\vbl(C)$, $\phi'$ contains the disjunctions of the positive with the negative literals introduced above. Thus:
\begin{itemize}
    \item[(i)] if $C$ contains the literals $x,y$, then $\phi'$ contains the clauses $(x\vee \neg y')$ and $(\neg x'\vee y)$,
    \item[(ii)] if $C$ contains the literals $x,\neg y$, then $\phi'$ contains the clauses $(x\vee \neg y)$ and $(\neg x'\vee y')$ (accordingly if $C$ contains $\neg x, y$) and
    \item[(iii)] if $C$ contains the literals $\neg x,\neg y$, then $\phi'$ contains the clauses $(\neg x\vee y')$ and $(x'\vee \neg y)$.
\end{itemize}
Finally, we add the following clauses to $\phi'$: 
\begin{itemize}
    \item[(iv)] $(\neg x_i \vee \neg x'_i)$, $i=1,\ldots,n$ and
    \item[(v)] $\bigvee_{x\in V'} x$. 
\end{itemize}
The clauses of items (i)--(iv) correspond to the intuition we explained in the beginning. For example, consider the case where a clause $C_j$ of $\phi$ has the literals $x,\neg y$. If we add $x$ to $V_0$ without renaming it, we should not rename $y$, since we would have two positive literals in a clause of $\Cl_0$. Also, we should not add the latter to $V_1$, since we would have a variable of $V_0$ appearing positively in a clause containing a variable of $V_1$. Thus, we have that $x'\rightarrow y'$, which is expressed by the equivalent clause $(\neg x'\vee y')$ of item (ii). The clauses of item (iv) exclude the assignment $x=x'=1$ for any $x\in V$. Finally, since we want $V_0$ to be non-empty, we need at least one variable of $V'$ to be set to $1$.

To complete the proof of Proposition \ref{prop:renpartHorn}, we now proceed as follows.

($\Rightarrow)$ First, suppose $\phi$ is renamable partially Horn. Let $V_0$, $V_1$, $V^*$ and $V'$ as above. Suppose also that $V_0\neq\emptyset$.

Set $a=(a_1,\ldots,a_{2n})$ to be the following assignment of values to the variables of $V'$:
\begin{equation*}
    a(x)=\begin{cases}
    1, & \text{ if }x\in V^*,\\
    0, & \text{ else,}
    \end{cases} \ \text{and} \ a(x')=\begin{cases}
    0, & \text{ if }x\in V^*\cup V_1,\\
    1, & \text{ else,}
    \end{cases}
  \end{equation*}
 for all $x\in V$. To obtain a contradiction, suppose $a$ does not satisfy $\phi'$.

Obviously, the clauses of items (iv) and (v) above are satisfied, by the definition of $a$ and the fact that $V_0$ is not empty.

Now, consider the remaining clauses of items (i)--(iii) above and suppose for example that some $(\neg x\vee y')$ is not satisfied. By the definition of $\phi'$, there exists a clause $C$ which, before the renaming takes place, contains the literals $\neg x, \neg y$ (see item (iii)). Since the clause is not satisfied, $a(x)=1$ and $a(y')=0$, which in turn means that $x\in V^*$ and $y\in V^*\cup V_1$. If $y\in V_1$, $C^*$ contains, after the renaming, a variable in $V_1$ and a positive appearance of a variable in $V_0$. If $y\in V^*$, $C^*$ contains two positive literals of variables in $V_0$. Contradiction. The remaining cases can be proven analogously and are left to the reader.

($\Leftarrow$) Suppose now that $a=(a_1,\ldots,a_{2n})$ is an assignment of values to the variables of $V'$ that satisfies $\phi'$. We define the following subsets of $V'$: \begin{itemize}
    \item[-] $V^*=\{x\mid a(x)=1\}$,
    \item[-] $V_0=\{x\mid a(x)=1\text{ or }a(x')=1\}$ and
    \item[-] $V_1=\{x\mid a(x)=a(x')=0\}.$
\end{itemize}
Let $\phi^*$ be the formula obtained by $\phi$, after renaming the variables of $V^*$.

Obviously, $V_0$ is not empty, since $a$ satisfies the clause of item (v).

Suppose that a clause $C^*$, containing only variables from $V_0$, is not Horn. Then, $C^*$ contains two positive literals $x,y$. If $x,y\in V_0\setminus V^*$, then neither variable was renamed and thus $C$ also contains the literals $x,y$. This means that, by item (i) above, $\phi'$ contains the clauses $(x\vee\neg y')$ and $(\neg x'\vee y)$. Now, since $x,y\in V_0\setminus V^*$, it holds that $a(x)=a_(y)=0$ and $a(x')=a(y')=1$. Then, $a$ does not satisfy these two clauses. Contradiction. In the same way, we obtain contradictions in cases that at least one of $x$ and $y$ is in $V^*$.

Finally, suppose that there is a variable $x\in V_0$ that appears positively in a clause $C^*\notin\Cl_0$. Let $y\in V_1$ be a variable in $C^*$ (there is at least one such variable, lest $C^*\in\Cl_0$). Suppose also that $y$ appears positively in $C^*$.

Assume $x\in V^*$. Then, $C$ contains the literals $\neg x, y$. Thus, by item (ii), $\phi'$ contains the clause $(\neg x \vee y)$. Furthermore, since $x\in V^*$, $a(x)=1$ and since $y\in V_1$, $a(y)=0$. Thus the above clause is not satisfied. Contradiction. In the same way, we obtain contradictions in all the remaining cases.
\end{proof}

To compute $\phi'$ from $\phi$, one would need quadratic time in the length of $\phi$. Thus, we introduce the following linear algorithm that decides if a formula $\phi$ is renamable partially Horn, by tying a property of a graph constructed based on $\phi$, with the satisfiability of $\phi'$.

\begin{theorem}\label{thm:linearrph}
There is an algorithm that, on input a formula $\phi$, halts in time linear in the length of $\phi$ and either returns that $\phi$ is not renamable partially Horn or alternatively produces a subset $V^*\subseteq V$ such that the formula  $\phi^*$  obtained from $\phi$ by renaming the literals of variables in $V^*$ is partially Horn.\end{theorem}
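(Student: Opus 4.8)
The plan is to run a linear-time $2${\sc Sat}-style procedure on a digraph of \emph{linear} size that captures the satisfiability of the formula $\phi'$ of Proposition~\ref{prop:renpartHorn}, rather than building $\phi'$ itself, whose description is quadratic in the length of $\phi$. The starting observation is that the clauses of items (i)--(iv) in the construction of $\phi'$ are all $2$-clauses; call their conjunction $\psi$. The all-zero assignment (no variable renamed, $V_0=\emptyset$) satisfies every clause of $\psi$, so $\psi$ is always satisfiable, and the only genuine requirement is the single large clause (v), which merely asks that $V_0\neq\emptyset$, i.e.\ that at least one positive literal of $V'$ be set to $1$. Hence $\phi'$ is satisfiable, and by Proposition~\ref{prop:renpartHorn} $\phi$ is renamable partially Horn, if and only if some positive literal $v\in V'$ (some $x$ or some $x'$) can be set to $1$ in a model of $\psi$; in the implication digraph of $\psi$ this is exactly the condition that $v$ does not reach $\neg v$.

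First I would make the implication digraph of $\psi$ explicit: each $2$-clause contributes its two implication edges with their contrapositives, and item (iv) contributes the $n$ edges enforcing $\neg(x\wedge x')$. The difficulty lies in items (i)--(iii): for a clause $C$ of $\phi$ these are generated over all pairs of variables of $\vbl(C)$, so writing them out costs $\Theta(|\vbl(C)|^2)$. The key point is that, up to the skew-symmetry inherent in implication graphs, the contribution of $C$ amounts to requiring, for all \emph{distinct} $a,b\in\vbl(C)$, a single implication ``$M(z_a)$ implies $P(z_b)$'', where the literals $M(z),P(z)$ are determined by the sign of the occurrence of $z$ in $C$ (for a positive occurrence $P(z)=z$, for a negative one $P(z)=z'$, and dually for $M$). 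These all-pairs-but-diagonal implications are precisely the obstruction to linearity.

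To remove it, I would, for each clause $C$, fix an ordering of $\vbl(C)$ and route these implications through a short auxiliary structure: a forward chain and a backward chain of $O(|\vbl(C)|)$ fresh vertices that collect the $M$-literals and fan out to the $P$-literals, the two chains together delivering $M(z_a)$ to every $P(z_b)$ with $b<a$ and every $P(z_b)$ with $b>a$, hence to every $b\neq a$. This prefix/suffix device realizes the complete pattern of implications while \emph{skipping the diagonal} $a=b$, which is essential: a spurious edge $M(z_a)\to P(z_a)$ would, together with the item-(iv) constraint, wrongly force $z_a$ out of $V_0$. The resulting digraph $G$ then has $O(|\phi|)$ vertices and edges and is built in one pass. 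I would prove that $G$ preserves exactly the reachability relation of the implication digraph of $\psi$ restricted to the original literal vertices, so ``$v$ reaches $\neg v$'' has the same value in both; combined with the first paragraph this ties a graph property to the satisfiability of $\phi'$. To read off the answer, compute the strongly connected components of $G$ in linear time and use the skew-symmetric structure of its condensation to decide whether some positive $v\in V'$ fails to reach $\neg v$. If none does, report that $\phi$ is not renamable partially Horn; otherwise fix such a $v$, force it true, complete to a satisfying assignment $a$ of $\psi$ by the standard reverse-topological rule, and output $V^*=\{x\in V\mid a(x)=1\}$, exactly as in the ($\Leftarrow$) direction of Proposition~\ref{prop:renpartHorn}.

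I expect the main obstacle to be the correctness of the auxiliary-vertex gadget: proving that the forward/backward chains reproduce all required implications, introduce no extra reachabilities among the original literals, and in particular never create the diagonal edges that would spuriously shrink $V_0$. A secondary point needing care is the non-binary clause (v): since it cannot enter the implication graph, its effect must be captured entirely by the reachability test ``some positive literal is settable to $1$'', and one must verify that it is this test, rather than mere satisfiability of $\psi$, that corresponds to the demand $V_0\neq\emptyset$.
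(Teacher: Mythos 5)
Your proposal is correct, and it shares the paper's overall skeleton: reduce, via Proposition~\ref{prop:renpartHorn}, to satisfiability of $\phi'$; avoid materializing $\phi'$ (which is quadratic) by building a linear-size digraph; compute strongly connected components; and extract $V^*$ from a reverse-topological assignment. The two technical crux points, however, are realized differently. For linearity, you encode each clause $C$ by prefix/suffix chains of $O(|\vbl(C)|)$ auxiliary vertices, so that the all-pairs-minus-diagonal implications $M(z_a)\rightarrow P(z_b)$, $a\neq b$, are present \emph{structurally} and off-the-shelf SCC algorithms apply verbatim; the paper instead uses a bipartite graph with one vertex per clause (for a negative occurrence of $x$ in $C$ the edges $(x,C)$ and $(C,x')$, for a positive one the edges $(x',C)$ and $(C,x)$) and kills the diagonal by \emph{restricting traversal}: no path may visit $x$, $C$, $x'$ (or $x'$, $C$, $x$) consecutively, and the DFS/SCC computation is modified accordingly. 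Your gadget costs extra vertices but avoids the somewhat delicate issue of arguing that SCCs remain well-behaved under a forbidden-transition reachability notion; the paper's graph is leaner and mirrors the formula's structure directly. For the decision criterion, you observe that the $2$-clause part $\psi$ of $\phi'$ is always satisfied by the all-zero assignment, so that $\phi'$ is satisfiable iff some $v\in V'$ satisfies $v\not\rightarrow\neg v$; the paper instead works with ``bad'' SCCs (those containing both $x$ and $x'$ for some $x$) and shows $\phi'$ is satisfiable iff some variable of $V'$ avoids all bad SCCs. These criteria are equivalent, but note one point your plan only gestures at: as stated, your test is a one-directional reachability question from $2n$ sources, which is not a linear-time primitive in general. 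To evaluate it in linear time you must invoke skew-symmetry — from $v\rightarrow\neg v$ one gets $v\rightarrow v'$ inside the positive-literal part, and if this holds for \emph{all} $v$ then $v\leftrightarrow v'$ for all $v$, i.e., every variable lies in a bad SCC — which is precisely the content of the paper's Claims in its proof of Theorem~\ref{thm:linearrph}. So that step of your argument, together with the (routine but necessary) verification that the chain gadgets create exactly the intended reachabilities among literal vertices and none among the diagonals, is what must be written out for the proof to be complete.
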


To prove Theorem \ref{thm:linearrph}, we define a \emph{directed bipartite} graph $G$, i.e. a directed graph whose set of vertices is partitioned in two sets such that no vertices belonging in the same part are adjacent. Then, by computing its \emph{strongly connected components (\scc)}, i.e. its maximal sets of vertices such that every two of them are connected by a directed path, we show that at least one of them is not \emph{bad} (does not contain a pair of vertices we will specify below) \emph{if and only if} $\phi$ is renamable partially Horn.

For a directed graph $G$, we will denote a directed edge from a vertex $u$ to a vertex $v$ by $(u,v)$. A (directed) path from $u$ to $v$, containing the vertices $u=u_0,\ldots,u_s=v$, will be denoted by $(u,u_1,\ldots,u_{s-1},v)$ and its existence by $u\rightarrow v$. If both $u\rightarrow v$ and $v\rightarrow u$ exist, we will sometimes write $u\leftrightarrow v$.

Recall that given a directed graph $G=(V,E)$, there are known algorithms that can compute the \scc \ of $G$ in time $O(|V|+|E|)$, where $|V|$ denotes the number of vertices of $G$ and $|E|$ that of its edges. By identifying the vertices of each \scc, we obtain a \emph{directed acyclic graph (DAG)}. An ordering $(u_1,\ldots,u_n)$ of the vertices of a graph is called \emph{topological} if there are no edges $(u_i,u_j)$ such that $i\geq j$, for all $i,j\in\n$.   
\begin{proof}
Given $\phi$ defined on $V$, whose set of clauses is $\Cl$ and let again $V'=V\cup\{x'\mid x\in V\}$. We define the graph $G$, with vertex set $V'\cup\Cl$ and edge set $E$ such that, if $C\in\Cl$ and $x\in\vbl(C)$, then:
\begin{itemize}
    \item if $x$ appears \emph{negatively} in $C$, $E$ contains $(x,C)$ and $(C,x')$,
    \item if $x$ appears \emph{positively} in $C$, $E$ contains $(x',C)$ and $(C,x)$ and
    \item $E$ contains no other edges.
\end{itemize}
Intuitively, if $x,y\in V'$, then a path $(x,C,y)$ corresponds to the clause $x\rightarrow y$ which is logically equivalent to $(\neg x\vee y)$. The intuition behind $x$ and $x'$ is exactly the same as in Proposition \ref{prop:renpartHorn}. We will thus show that the bipartite graph $G$ defined above, contains all the necessary information to decide if $\phi'$ is satisfiable, with the difference that $G$ can obviously be constructed in time linear in the length of the input formula.

There is a slight technicality arising here since, by the construction above, $G$ always contains either the path $(x,C,x')$ or $(x',C,x)$, for any clause $C$ and $x\in\vbl(C)$, whereas neither $(\neg x\vee x')$ nor $(x\vee \neg x')$ are ever clauses of $\phi'$. Thus, from now on, we will assume that no path can contain the vertices $x$, $C$ and $x'$ or $x'$, $C$ and $x$ \emph{consecutively}, for any clause $C$ and $x\in\vbl(C)$.

Observe that by construction, (i) $(x,C)$ or $(C,x)$ is an edge of $G$ if and only if $x\in\vbl(C)$, $x\in V'$ and (ii) $(x,C)$ (resp. $(x',C)$) is an edge of $G$ if and only if $(C,x')$ (resp.$(C,x)$) is one too.

 We now prove several claims concerning the structure of $G$. To make notation less cumbersome, assume that for an $x\in V$, $x''=x$. Consider the formula $\phi'$ of Proposition \ref{prop:renpartHorn}.
\begin{claim}\label{app-claim:correctness}
Let $x,y\in V'$. For $z_1,\ldots,z_k\in V'$ and $C_1,\ldots,C_{k+1}\in\Cl$, it holds that $(x,C_1,z_1,C_2,\ldots,z_k,C_{k+1},y)$ is a path of $G$ if and only if $(\neg x \vee z_1)$, $(\neg z_i\vee z_{i+1})$, $i=1,\ldots,k-1$ and $(\neg z_k\vee y)$ are all clauses of $\phi'$. 
\end{claim}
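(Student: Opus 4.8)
The plan is to reduce the statement about the whole path to a statement about a single ``hop'' of length two, and then prove that single-hop correspondence by a direct case analysis matching the edges of $G$ with the binary clauses of $\phi'$ constructed in Proposition~\ref{prop:renpartHorn}.

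First I would exploit the fact that $G$ is bipartite with parts $V'$ and $\Cl$: any walk of $G$ alternates between a vertex of $V'$ and a vertex of $\Cl$, so the sequence $(x,C_1,z_1,C_2,\ldots,z_k,C_{k+1},y)$ is a path precisely when each of its consecutive length-two segments $(x,C_1,z_1),(z_1,C_2,z_2),\ldots,(z_k,C_{k+1},y)$ is itself a subpath of $G$. Thus it suffices to prove, for $u,w\in V'$ and $C\in\Cl$ not forming one of the forbidden consecutive triples, the single-hop equivalence: $(u,C,w)$ is a subpath of $G$ if and only if $(\neg u\vee w)$ is a clause of $\phi'$. Granting this, chaining the equivalence over all $k+1$ segments yields exactly that $(\neg x\vee z_1)$, the $(\neg z_i\vee z_{i+1})$ for $i=1,\ldots,k-1$, and $(\neg z_k\vee y)$ are all clauses of $\phi'$, which is the assertion of the claim.

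For the single-hop equivalence I would argue directly from the definition of the edges of $G$. A segment $(u,C,w)$ is a subpath iff $(u,C)$ and $(C,w)$ are both edges, i.e. iff $u$ is the tail of an edge entering $C$ and $w$ is the head of an edge leaving $C$. Reading off the construction of $G$, the tails entering $C$ are exactly the $v$ for each $v\in\vbl(C)$ appearing negatively and the $v'$ for each $v\in\vbl(C)$ appearing positively, while the heads leaving $C$ are the $v'$ for $v$ negative and the $v$ for $v$ positive. I would then split into the (at most three essentially distinct) cases according to the signs with which the variables underlying $u$ and $w$ occur in $C$ and, using the convention $x''=x$ to treat primed and unprimed names uniformly, check in each case that the hop $u\rightarrow w$ is precisely one of the binary clauses produced in items (i)--(iii) of Proposition~\ref{prop:renpartHorn} for that ordered pair of variables, and conversely that every such clause arises from exactly one admissible hop. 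The one family of segments to set aside is the one using a single variable of $C$ for both tail and head, namely $(x,C,x')$ and $(x',C,x)$: these correspond to $(\neg x\vee x')$ and $(x\vee\neg x')$, which are never clauses of $\phi'$, and they are exactly the triples excluded by the stated technicality, so discarding them keeps the biconditional clean.

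I expect the main obstacle to be purely organizational: keeping the primed/unprimed bookkeeping consistent across the sign cases so that each hop lands on the correct clause of items (i)--(iii), and verifying that the excluded self-hops coincide exactly with the forbidden consecutive triples. Once the single-hop equivalence is established, the reduction through bipartiteness makes the passage to arbitrary-length paths immediate.
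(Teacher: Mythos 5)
Your proposal is correct and takes essentially the same route as the paper: the paper's proof is precisely an induction on the path length resting on the single-hop correspondence that a segment $(u,C,v)$ of $G$ corresponds to the clause $(\neg u \vee v)$ of $\phi'$. Your bipartite decomposition and the sign-by-sign case analysis (including setting aside the forbidden triples $(x,C,x')$ and $(x',C,x)$) merely make explicit the details the paper leaves to the reader.
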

\textit{Proof of Claim.} Can be easily proved inductively to the length of the path, by recalling that a path $(u,C,v)$ corresponds to the clause $(\neg u \vee v)$, for all $u,v\in V'$ and $C\in\Cl$.\hfill$\Box$
\begin{claim}\label{app-claim:reachability}
Let $x,y\in V'$. If $x\rightarrow y$, then $y'\rightarrow x'$. 
\end{claim}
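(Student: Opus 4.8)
The plan is to realize Claim~\ref{app-claim:reachability} as an instance of a single structural symmetry of $G$: the priming operation extends to an \emph{anti-automorphism} of the graph, i.e.\ a vertex bijection that reverses the direction of every edge, and any such map automatically sends a directed path $x\rightarrow y$ to a directed path from the image of $y$ to the image of $x$. Concretely, define $\sigma$ on the vertex set $V'\cup\Cl$ by $\sigma(u)=u'$ for $u\in V'$ (using the convention $x''=x$, so that $\sigma(x)=x'$ and $\sigma(x')=x$) and $\sigma(C)=C$ for $C\in\Cl$. Then $\sigma$ is an involution, and the key assertion to establish is that $(a,b)\in E$ if and only if $(\sigma(b),\sigma(a))\in E$.

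First I would check this assertion directly from the two edge rules defining $G$; since every edge joins a clause vertex to a vertex of $V'$, it suffices to inspect the edges incident to a fixed clause $C$ and variable $x\in\vbl(C)$. If $x$ appears negatively in $C$, the rules contribute exactly $(x,C)$ and $(C,x')$ to $E$, and these two edges are interchanged by the composition of $\sigma$ with edge reversal, because $\sigma(x)=x'$ and $\sigma(C)=C$. If $x$ appears positively in $C$, the rules contribute $(x',C)$ and $(C,x)$, which are likewise interchanged, using $\sigma(x')=x$. As these exhaust the edges of $G$, the anti-automorphism property follows.

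Granting this, the main step is immediate: since $G$ is bipartite with parts $V'$ and $\Cl$, a path from $x$ to $y$ has the form $(x=u_0,C_1,u_1,\ldots,C_k,u_k=y)$, and applying $\sigma$ while reading the vertex sequence backwards yields $(y'=u_k',C_k,u_{k-1}',\ldots,C_1,u_0'=x')$, each consecutive pair of which is an edge by the anti-automorphism property; this witnesses $y'\rightarrow x'$. The one point requiring care --- and the only genuine obstacle --- is the standing convention that a path may not traverse $x,C,x'$ or $x',C,x$ consecutively. I would dispose of it by observing that $\sigma$-with-reversal carries a consecutive triple $u_{i-1},C_i,u_i$ to $u_i',C_i,u_{i-1}'$, and that the latter is forbidden precisely when $u_{i-1}$ and $u_i$ are the primed/unprimed pair of one variable, i.e.\ precisely when the original triple was forbidden. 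Since $\sigma$ is an involution, the admissibility of the starting path is therefore inherited by its image, so the constructed path is a legitimate one and the claim follows.
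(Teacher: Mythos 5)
Your proof is correct, but it takes a genuinely different route from the paper's. The paper argues through the formula $\phi'$: it converts the path $x\rightarrow y$ into clauses of $\phi'$ via Claim~\ref{app-claim:correctness}, invokes the duality built into the construction of $\phi'$ in Proposition~\ref{prop:renpartHorn} (each clause $(\neg u\vee v)$ with $u,v\in V'$ has a companion clause $(\neg v'\vee u')$), and then converts the dual clauses back into a path $y'\rightarrow x'$ by applying Claim~\ref{app-claim:correctness} a second time. You instead stay entirely inside the graph $G$: you exhibit the priming map $\sigma$ (fixing clause vertices) as an edge-reversing involution, verify this directly from the two edge rules, and push the path through it. Both arguments rest on the same underlying symmetry, but yours is more self-contained --- it needs neither Claim~\ref{app-claim:correctness} nor the clause-pairing of Proposition~\ref{prop:renpartHorn} --- and it has the merit of treating the forbidden-triple convention explicitly: you check that a triple $u_{i-1},C_i,u_i$ is forbidden exactly when its image $u_i',C_i,u_{i-1}'$ is, so admissibility of the path is preserved. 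The paper handles that same point only implicitly, through the fact that clauses of the form $(\neg x\vee x')$ never occur in $\phi'$, so the correctness claim silently excludes inadmissible paths. What the paper's route buys in exchange is economy within the larger proof: it reuses the graph--formula dictionary (Claim~\ref{app-claim:correctness}) that is needed again later anyway, e.g.\ in the proof of Claim~\ref{app-claim:satisf}, whereas your anti-automorphism is a one-off device.
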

\textit{Proof of Claim.} Since $x\rightarrow y$, there exist $z_1,\ldots,z_k\in V'$ and $C_1,\ldots,C_{k+1}\in\Cl$, such that $(x,C_1,z_1,C_2,\ldots,z_k,C_{k+1},y)$ is a path of $G$. By Claim \ref{app-claim:correctness}, $(\neg x \vee z_1)$, $(\neg z_i\vee z_{i+1})$, $i=1,\ldots,k-1$ and $(\neg z_k\vee y)$ are all clauses of $\phi'$. By Proposition \ref{prop:renpartHorn}, so do $(\neg y' \vee z'_k)$, $(\neg z'_{i+1}\vee z'_i)$, $i=1,\ldots,k-1$ and $(\neg z'_1\vee x')$ and the result is obtained by using Claim \ref{app-claim:correctness} again.\hfill $\Box$
\vspace{0.2cm}

We can obtain the \scc's \ of $G$ using a variation of a \emph{depth-first search (DFS)} algorithm, that, whenever it goes from a vertex $x$ (resp. $x'$) to a vertex $C$, it cannot then go to $x'$ (resp. $x$) at the next step. Since the algorithm runs in time linear in the number of the vertices and the edges of $G$, it is also linear in the length of the input formula $\phi$.

 Let $S$ be a \scc \ of $G$. We say that $S$ is \emph{bad}, if, for some $x\in V$, $S$ contains both $x$ and $x'$. We can decide if each of the \scc's is bad or not again in time linear in the length of the input formula.
\begin{claim}\label{app-claim:badscc}
Let $S$ be a bad \scc \ of $G$ and $y\in V'$ be a vertex of $S$. Then, $y'$ is in $S$.
\end{claim}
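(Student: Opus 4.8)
The plan is to derive the statement directly from the reachability duality established in Claim \ref{app-claim:reachability}, together with the fact that a \scc\ is, by definition, a maximal set of pairwise mutually reachable vertices. First I would exploit the hypothesis that $S$ is bad to produce a ``seed'' variable: by definition there is some $x\in V$ with both $x$ and $x'$ lying in $S$. Since $S$ is strongly connected, this already gives $x\leftrightarrow x'$, so in particular $x'\in S$ is a vertex of $V'$ we can use as an anchor.

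Next, fix an arbitrary vertex $y\in V'$ of $S$; the goal is to show $y'\in S$. Because $x$ and $y$ both belong to the strongly connected set $S$, we have $x\rightarrow y$ and $y\rightarrow x$. Applying Claim \ref{app-claim:reachability} to the first path yields $y'\rightarrow x'$, and applying it to the second (with the roles of the endpoints swapped) yields $x'\rightarrow y'$. Hence $y'\leftrightarrow x'$, i.e. $y'$ and $x'$ are mutually reachable in $G$.

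Finally, since $x'\in S$ and $S$ is a maximal strongly connected component, any vertex mutually reachable with $x'$ must itself lie in $S$; in particular $y'\in S$, as desired. The argument is short, and the only point requiring care is the double use of the duality in Claim \ref{app-claim:reachability}: one must invoke it once in each reachability direction between $x$ and $y$ in order to recover both arcs $y'\rightarrow x'$ and $x'\rightarrow y'$ and thereby close the loop through the seed $x'$. The restriction $y\in V'$ in the statement is exactly what makes $y'$ well-defined (under the convention $x''=x$) and what permits Claim \ref{app-claim:reachability} to be applied, since that claim concerns vertices of $V'$ and not the clause-vertices of $\Cl$.
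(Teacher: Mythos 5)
Your proof is correct and follows essentially the same route as the paper's: both use badness to obtain $x,x'\in S$, apply Claim \ref{app-claim:reachability} once in each direction of reachability between $x$ and $y$, and close the loop through the pair $x\leftrightarrow x'$. The only cosmetic difference is that the paper concatenates paths to conclude $y\leftrightarrow y'$ directly, whereas you conclude $y'\leftrightarrow x'$ and invoke maximality of the \scc\ containing $x'$; these are trivially equivalent.
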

\textit{Proof of Claim.} Since $S$ is bad, there exist two vertices $x,x'$ of $V'$ in $S$. If $x=y$ we have nothing to prove, so we assume that $x\neq y$. Then, we have that $y\rightarrow x$, which, by Claim \ref{app-claim:reachability} implies that $x'\rightarrow y'$. Since $x\rightarrow x'$, we get that $y\rightarrow y'$. That $y'\rightarrow y$ can be proven analogously.\hfill$\Box$.

Let the \scc's \ of $G$, in \emph{reverse topological order}, be $S_1,\ldots.S_t$. We describe a process of assigning values to the variables of $V'$:\begin{enumerate}
    \item Set every variable that appears in a bad \scc \ of $G$ to $0$.
    \item For each $j=1,\ldots,t$ assign value $1$ to every variable of $S_j$ that has not already received one (if $S_j$ is bad no such variable exists). If some $x\in V'$ of $S_j$ takes value $1$, then assign value $0$ to $x'$.
    \item Let $a$ be the resulting assignment to the variables of $V'$.
\end{enumerate}
Now, the last claim we prove is the following:
\begin{claim}\label{app-claim:satisf}
There is at least one variable $z\in V'$ that does not appear in a bad \scc \ of $G$ if and only if $\phi'$ is satisfiable.
\end{claim}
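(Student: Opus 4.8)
The plan is to prove both implications, using the assignment $a$ produced by the three–step procedure as the explicit certificate of satisfiability for the nontrivial direction. Throughout I will write $t(w)$ for the index $j$ of the \scc \ $S_j$ containing a vertex $w\in V'$, i.e. the ``time'' at which $w$ is processed, and I will lean on three facts already established: a path between $V'$-vertices corresponds exactly to a chain of binary clauses of $\phi'$ (Claim \ref{app-claim:correctness}), the skew-symmetry $u\rightarrow v\Rightarrow v'\rightarrow u'$ (Claim \ref{app-claim:reachability}), and the fact that a bad \scc \ containing $w$ also contains $w'$ (Claim \ref{app-claim:badscc}).

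For the direction ``$\phi'$ satisfiable $\Rightarrow$ some $z$ avoids every bad \scc'', I would argue the contrapositive. If every variable of $V'$ lies in a bad \scc, then by Claim \ref{app-claim:badscc} we have $x\leftrightarrow x'$ for every $x$, so by Claim \ref{app-claim:correctness} the formula $\phi'$ logically implies $x\leftrightarrow x'$ for all $x$. Together with the clauses $(\neg x\vee\neg x')$ of item (iv) this forces $a(x)=a(x')=0$ in \emph{every} model, contradicting the clause $\bigvee_{x\in V'}x$ of item (v); hence $\phi'$ is unsatisfiable.

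For the converse, I would show that when some $z$ avoids all bad \scc's the assignment $a$ satisfies $\phi'$. The heart of the argument is the characterization: \emph{for $w$ not in a bad \scc, $a(w)=1$ if and only if $t(w)<t(w')$}. This follows by checking that at the earlier of the two (distinct) times $t(w),t(w')$ the corresponding variable is still unassigned, so it receives $1$ and forces its complement to $0$; the only mechanisms that could pre-assign a variable are the bad-\scc \ rule of step 1 and the complement rule of step 2, and a short timing argument rules both out and simultaneously shows the procedure is well defined (it never tries to set a variable to $1$ whose complement is already $1$). This characterization immediately disposes of items (iv) and (v): for $w$ outside a bad \scc \ exactly one of $a(w),a(w')$ equals $1$ (the two times differ, as $w,w'$ lie in different \scc's), giving item (iv); and applying it to $z$, which forces one of $a(z),a(z')$ to be $1$ (note $z'$ is also outside every bad \scc \ by Claim \ref{app-claim:badscc}), gives item (v). For the binary clauses of items (i)--(iii), each has the form $(\neg u\vee v)$ with $u\rightarrow v$ in $G$, so it suffices to prove $a(u)=1\Rightarrow a(v)=1$: from $a(u)=1$ the characterization gives $t(u)<t(u')$; reverse topological order (sinks first) gives $t(v)\le t(u)$ out of $u\rightarrow v$; and skew-symmetry gives $v'\rightarrow u'$, hence $t(u')\le t(v')$. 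Chaining these yields $t(v)\le t(u)<t(u')\le t(v')$, so $t(v)<t(v')$; in particular $v$ is not in a bad \scc, and the characterization then gives $a(v)=1$.

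The main obstacle I expect is precisely establishing this order-characterization of $a$ and reconciling the \scc \ processing order with the skew-symmetry of $G$: one must make the timing arguments rigorous (that the relevant variable is genuinely unassigned at the earlier time, and that no double-assignment ever occurs). Once that characterization is in hand, verifying all five families of clauses reduces to the single chain of inequalities above, and the easy implication is a two-line consequence of Claims \ref{app-claim:correctness} and \ref{app-claim:badscc} together with items (iv) and (v).
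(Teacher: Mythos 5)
Your proof is correct and rests on the same two pillars as the paper's: the constructed assignment $a$ certifies satisfiability in one direction, and the ``path implies entailment'' consequence of Claim \ref{app-claim:correctness} handles the other. What differs is the decomposition. For the hard direction the paper checks the clause families one at a time, each by contradiction: assuming for instance that $(x\vee\neg y')$ is falsified, it extracts the relevant edges of $G$ and derives an ordering contradiction among the strongly connected components, leaving the remaining clause types to the reader; the index bookkeeping in that step (the ``$p\leq q\leq r$'' argument) is also hard to follow as written. You instead isolate a reusable characterization -- for $w$ outside every bad SCC, $a(w)=1$ if and only if $t(w)<t(w')$ -- prove along the way that the assignment procedure is well defined (a point the paper passes over silently), and then handle \emph{all} clauses of types (i)--(iii) uniformly, via the observation that each such clause has the form $(\neg u\vee v)$ with $u\rightarrow v$ in $G$, so that the single chain $t(v)\leq t(u)<t(u')\leq t(v')$ (reverse topological order for the outer inequalities, Claim \ref{app-claim:reachability} for the middle one) finishes the job. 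For the easy direction you argue the contrapositive -- if every variable lies in a bad SCC, then $\phi'$ entails $x\leftrightarrow x'$ for all $x$, forcing every model to be all-zero and contradicting the clause of item (v) -- while the paper argues directly from a satisfying assignment $\beta$ that the variable with $\beta(x)=1$, $\beta(x')=0$ cannot reach $x'$; these are the same argument read in opposite directions. Net effect: identical mathematical content, but your organization eliminates the per-type case analysis and makes explicit two facts the paper uses only implicitly.
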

\textit{Proof of Claim.} ($\Rightarrow)$ We prove that every clause of type (i)--(v) is satisfied. First, by the construction of $a$, every clause $\neg x_i\vee\neg x'_i$, $i=1,\ldots,n$, of type (iv) is obviously satisfied. Also, since by the hypothesis, $z$ is not in a bad \scc, it holds, by step 2 above, that either $z$ or $z'$ are set to $1$. Thus, the clause $\bigvee_{x\in V'}x$ of type (v) is also satisfied.

Now, suppose some clause $(x \vee \neg y')$ (type (i)) of $\phi'$ is not satisfied. Then $a(x)=0$ and $a(y')=1$. Furthermore, there is a vertex $C$ such that $(y',C)$ and $(C,x)$ are edges of $G$. By the construction of $G$, $(x',C)$ and $(C,y)$ are also edges of $G$.

Since $a(x)=0$, it must hold either that $x$ is in a bad \scc \ of $G$, or that $a(x')=1$. In the former case, we have that $x\rightarrow x'$, which, together with $(y',C,x)$ and $(x',C,y)$ gives us that $y'\rightarrow y$. Contradiction, since then $a(y')$ should be $0$. In the latter case, we have that there are two \scc's $S_p$, $S_r$ of $G$ such that $x\in S_p$, $x'\in S_r$ and $p<r$ in their topological order. But then, there is some $q:p\leq q\leq r$ such that $C$ in $S_q$. Now, if $p=q$, we obtain a contradiction due to the existence of $(x',C)$, else, due to $(C,x)$.   

The proof for the rest of the clauses of types (i)--(iii) are left to the reader.

($\Leftarrow$) First, recall that for two propositional formulas $\phi, \psi$, we say that $\phi$ \emph{logically entails} $\psi$, and write $\phi\models\psi$, if any assignment that satisfies $\phi$, satisfies $\psi$ too.

Now observe that, if $x,y$ are two vertices in $V'$ such that $x\rightarrow y$, then $\phi' \models (\neg x \vee y)$. Indeed, suppose $\beta$ is an assignment of values that satisfies $\phi'$. If $\beta(y)=1$, we have nothing to prove. Thus, assume that $\beta(y)=0$. By Claim \ref{app-claim:correctness}, if $(x,C_1,z_1,C_2,z_2,\ldots,z_k,C_{k+1},y)$ is the path $x\rightarrow y$, then $(\neg x\vee z_1)$, $(\neg z_i\vee z_{i+1})$, $i=1,\ldots,k-1$ and $(\neg z_k\vee y)$ are all clauses of $\phi'$ and are thus satisfied by $\beta$. Since $\beta(y)=0$, we have $\beta(z_k)=0$. Continuing in this way, $\beta(z_i)=0$, $i=1,\ldots,k$ and thus $\beta(x)=0$ too, which implies that $\beta(\neg x \vee y)=1$.  

Now, for the proof of the claim, suppose again that $\phi'$ is satisfiable, and let $\beta$ be an assignment (possibly different than $\alpha$) that satisfies $\phi'$. Since $\beta$ satisfies $\phi'$, it satisfies $\bigvee_{x\in V'} x$. This means that there exists some $x\in V'$ such that $\beta(x)=1$. But $\beta$ also satisfies $(\neg x\vee\neg x')$, so we get that $\beta(x')=0$. Thus $\beta((\neg x \vee x'))=0$, which means that $\phi'$ does not logically entail $\neg x \vee x'$. By the discussion above, there exists no path from $x$ to $x'$, so $x$ is not in a bad \scc \ of $G$.\hfill$\Box$

By Proposition \ref{prop:renpartHorn}, we have seen that $\phi$ is renamable partially Horn if and only if $\phi'$ is satisfiable. Also, in case $\phi'$ is satisfiable, a variable $x\in V$ is renamed if and only if $a(x)=1$.

Thus, by the above and Claim \ref{app-claim:satisf}, $\phi$ is renamable partially Horn if and only if there is some variable $x$ that does not appear in a bad \scc \ of $G$. Furthermore, the process described in order to obtain assignment $a$ is linear in the length of the input formula, and $a$ provides the information about which variables to rename.  
\end{proof}

Because checking whether a formula is affine can be trivially done in linear time, we get: 
\begin{theorem}\label{thm:pic}
There is an algorithm that, on input a formula $\phi$, halts in linear time in the length of $\phi$ and either returns that $\phi$ is not a possibility integrity constraint, or alternatively, (i) either it returns that $\phi$ is affine or (ii) in case $\phi$ is separable, it produces two non-empty and disjoint subsets $V_1,V_2\subseteq V$ such that no clause of $\phi$ contains variables from both $V_1$ and $V_2$ and (iii) in case $\phi$ is  renamable partially Horn, it produces a subset $V^*\subseteq V$ such that the formula  $\phi^*$  obtained from $\phi$ by renaming the literals of variables in $V^*$ is partially Horn.\end{theorem}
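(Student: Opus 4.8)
The plan is to read Definition \ref{def:posintcon} literally: a formula is a possibility integrity constraint exactly when it is affine, \emph{or} separable, \emph{or} renamable partially Horn. Since all the substantive work has already been done in Proposition \ref{prop:linsep} and Theorem \ref{thm:linearrph}, the algorithm will simply run three independent subroutines, one per disjunct, and report success (with the appropriate witness) as soon as one of them succeeds, or failure only if all three fail. Thus I would frame the argument as a straightforward composition, with correctness reduced to the defining disjunction and the running-time claim reduced to the fact that a constant number of linear-time procedures compose to a linear-time procedure.

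Concretely, I would first handle the affine test, which is the one not covered by a previous result. Here I would observe that, in the clause representation fixed in Section \ref{sec:ident}, affineness is a purely syntactic property: one scans the clauses of $\phi$ once and checks that each is an exclusive-or of its literals rather than an ordinary disjunction. This pass reads each literal a constant number of times and so runs in time linear in the length of $\phi$; if every clause passes, the algorithm returns that $\phi$ is affine (case (i)). Otherwise I would invoke Proposition \ref{prop:linsep}: if it reports separability, the algorithm returns the partition $V_1,V_2$ it produces (case (ii)). Failing that, I would invoke Theorem \ref{thm:linearrph}: if it reports that $\phi$ is renamable partially Horn, the algorithm returns the set $V^*$ of variables to rename (case (iii)). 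If all three subroutines fail, the algorithm declares that $\phi$ is not a possibility integrity constraint.

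Correctness is then immediate from Definition \ref{def:posintcon}, together with the correctness of the affine scan and of the two cited results: $\phi$ is a possibility integrity constraint if and only if at least one of the three tests succeeds, and in each successful branch the algorithm outputs precisely the witness demanded by the statement. For the running time, each of the three branches runs in time linear in the length of $\phi$ by the affine scan above, by Proposition \ref{prop:linsep}, and by Theorem \ref{thm:linearrph} respectively, so their sequential execution is still linear. The only point requiring any care, and hence the nearest thing to an obstacle, is verifying that the affine check is genuinely linear in whatever encoding of $\phi$ is assumed and that running the three tests in sequence does not inflate the complexity; both are routine once one notes that the input is read a bounded number of times overall.
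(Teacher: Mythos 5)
Your proposal is correct and is essentially identical to the paper's own argument: the paper derives Theorem \ref{thm:pic} directly by combining Proposition \ref{prop:linsep}, Theorem \ref{thm:linearrph}, and the observation that affineness (being a purely syntactic property of the clauses) is trivially checkable in linear time, exactly as you do. The sequential composition of the three linear-time tests, with correctness falling out of the disjunction in Definition \ref{def:posintcon}, is precisely the intended proof.
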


We end this section by showing that we can recognize \lpic's efficiently.

\begin{theorem}\label{thm:recupic}
There is an algorithm that, on input a formula $\phi$, halts in linear time in the length of $\phi$ and either returns that $\phi$ is not a local possibility constraint, or alternatively, produces the sets $V_0,V_1,V_2$ described in Definition \ref{def:upic}.\end{theorem}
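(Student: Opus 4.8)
The plan is to reduce the recognition of an \lpic{} to three tasks—isolating the affine part $V_2$, isolating the bijunctive part $V_1$, and certifying that what remains is a renamable partially Horn core $V_0$—and to show that each task is solvable by a single linear scan together with the strongly-connected-component machinery already developed for Theorem~\ref{thm:linearrph}. Throughout, the governing idea is that the three roles of Definition~\ref{def:upic} are almost entirely forced, so that no search over partitions is needed.

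First I would pin down $V_2$, which is essentially determined by the syntax: a variable can be admissible ($V_0$) or bijunctive ($V_1$) only inside ordinary disjunctive clauses, so every variable occurring inside an $\oplus$-subclause must lie in $V_2$. Reading $\phi$ once, set $V_2$ to be exactly these variables and check that every clause meeting $V_2$ is a genuine $(V_0,V_2)$-generalized clause: its $\oplus$-part must use only $V_2$-variables and its disjunctive prefix only variables outside $V_2$; otherwise reject. This scan also records, for the next stage, that the prefix variables of these generalized clauses are forced into $V_0$ and, since their clauses also contain the non-$V_0$ variables of the $\oplus$-part, are forced to occur negatively (hence to be renamed accordingly). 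The separation requirement ``no clause contains variables from both $V_1$ and $V_2$'' then comes for free, since the only clauses touching $V_2$ are these generalized clauses, whose remaining variables are all forced into $V_0$.

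Second, on the purely disjunctive clauses (those avoiding $V_2$) I would run the graph-and-\scc{} construction of Theorem~\ref{thm:linearrph}, augmented with the forced membership and renaming constraints coming from the generalized clauses, to decide for each variable whether it can be placed in the admissible set. The variables lying in a \emph{bad} \scc{} are precisely those that cannot be admissible; being neither in $V_2$ nor eligible for $V_0$, they must be placed in $V_1$ by \emph{every} valid partition. Hence the crucial observation: the candidate most likely to succeed is the \emph{maximal}-$V_0$ partition, $V_1 :=$ the bad variables and $V_0 :=$ everything else (outside $V_2$). Enlarging $V_0$ never harms condition~(1) of Definition~\ref{def:upic} (a non-bad variable can always be made admissible, by the guarantee of Theorem~\ref{thm:linearrph}) while it shrinks $V_1$ and so only helps condition~(2); conversely every admissible partition has $V_1$ containing all bad variables. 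It therefore suffices to verify in one further linear scan that each clause contains at most two bad variables (condition~(2)) and that no prefix variable forced into $V_0$ in the first stage is bad (else reject); if both checks pass, output $V_0,V_1,V_2$.

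The degenerate cases are folded in at the end: if $V_0$ comes out empty the formula must be separable or affine, decided by Proposition~\ref{prop:linsep} and the trivial affinity test, and if $V_1=V$ the formula is $2$-SAT and hence renamable Horn. Linearity is then immediate, since the forced-$V_2$ scan, the augmented-graph construction, the \scc{} computation, and the width-two check are each linear in the length of $\phi$, exactly as in Theorem~\ref{thm:linearrph}. The main obstacle is the interaction between condition~(2), which is a \emph{cardinality} restriction on $V_1$ rather than an implication expressible in the $2$-SAT graph, and the renaming/Horn discipline on $V_0$: moving a variable into $V_1$ reclassifies a clause from ``pure-$V_0$'' (needing at most one positive literal) to ``mixed'' (needing zero positive $V_0$-literals), so the correct partition is not obviously greedy. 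The resolution above—showing that the variables forced out of $V_0$ are exactly the bad-\scc{} variables and that testing the width-two bound on precisely these is both necessary and sufficient—is the heart of the argument and the step I expect to require the most care.
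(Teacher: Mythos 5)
Your proposal is correct and rests on the same key insight as the paper's proof: the admissible set $V_0$ computed by the algorithm of Theorem \ref{thm:linearrph} is \emph{maximal} (a variable lying in a bad \scc\ can never be admissible in any renaming), so the greedy partition---$V_0$ the non-bad variables, $V_2$ the variables of genuine $\oplus$-parts, $V_1$ everything else---succeeds whenever any valid partition does, and conditions (2)--(3) of Definition \ref{def:upic} can then be verified by linear scans. Where you differ is in the organization, and the difference cuts both ways. The paper runs Theorem \ref{thm:linearrph} unmodified on the whole formula, then assigns the non-$V_0$ variables occurring in generalized clauses to $V_2$ and the rest to $V_1$, and finishes with two post-hoc checks (at most two $V_1$-variables per clause; no clause mixing $V_1$ and $V_2$); since this flow presupposes $V_0\neq\emptyset$, it needs upfront special cases for bijunctive and affine formulas and a separate branch via Proposition \ref{prop:linsep} when $V_0=\emptyset$, cases you largely fold into the main flow. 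You instead fix $V_2$ first, syntactically, and push the constraints induced by generalized clauses---prefix variables forced into $V_0$ with forced renaming signs---into the \scc\ machinery as unit constraints. This buys genuine care at a point the paper leaves implicit: its post-hoc checks never explicitly verify that prefix variables of generalized clauses land in $V_0$ with negative occurrences after renaming, which your first stage enforces directly. The price is that you invoke ``the guarantee of Theorem \ref{thm:linearrph}'' for an \emph{augmented} construction, whereas that theorem as proved admits no unit constraints; one must extend it (propagate the forced-true vertices, reject if both $y$ and $y'$ become forced for some $y$, and recheck via the reachability duality of Claim \ref{app-claim:reachability} that the reverse-topological assignment still renders every non-bad variable admissible). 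That extension does go through and is routine, but it is the one load-bearing step of your argument that is asserted rather than proved; the paper avoids it entirely at the cost of the looser post-hoc treatment of generalized clauses described above.
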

\begin{proof}
First, we check if $\phi$ is bijunctive or affine (this can be trivially done in linear time). If it is, then $\phi$ is an \lpic. Else, we use the algorithm of Theorem \ref{thm:linearrph} to obtain $V_0$. Note that, by the construction of $G$ and the way we obtain $V_0$, there is no variable in $V\setminus V_0$ that can belong in an admissible set.

If $V_0=\emptyset$, then either $\phi$ is not an \lpic, or there is a partition $(V_1,V_2)$ of $V$ such that no clause of $\phi$ contains variables from both $V_1$ and $V_2$. Thus, we use the algorithm of Proposition \ref{prop:linsep} to check if $\phi$ is separable. If it is not, then $\phi$ is not an \lpic. If it is, we obtain two sub-formulas $\phi_1,\phi_2$ such that $\phi=\phi_1\wedge\phi_2$. We can then trivially check, in linear time to their lengths, if $\phi_1$ and $\phi_2$ are bijunctive and affine respectively, or vice-versa. If they are, then $\phi$ is an \lpic. Else, it is not.

Obviously, if $V_0=V$, then $\phi$ is (renamable) Horn and thus an \lpic. Now, suppose that $(V_0,V\setminus V_0)$ is a partition of $V$. Add all the variables of $V\setminus V_0$ that appear in an $(V,V\setminus V_0)$-generalized clause to $V_2$, and set $V_1=V\setminus(V_0\cup V_2)$. Now, if any clause of $\phi$ contains more that two variables from $V_1$, or variables from both $V_1$ and $V_2$, then $\phi$ is not an \lpic. Else, it is.  
\end{proof}

\begin{remark}\label{problem}
Recall that we have assumed that all the formulas we consider have non-degenerate domains. Note that the above algorithms cannot distinguish such formulas from other formulas of the same form that have degenerate domains. An algorithm that could efficiently decide that, would effectively be (due e.g. to the syntactic form of separable formulas) an algorithm that could decide on input any given formula, which variables are satisfied by exactly one Boolean value and which admit both. It is quite plausible that no such efficient algorithm exists, as it could be used to solve known computationally hard problems, like the \emph{unique satisfiability} problem.
\end{remark}

\section{Syntactic Characterization of (local) possibility domains}\label{sec:character}
In this section, we  provide syntactic characterizationσ for (local) possibility domains, by proving they are the models of (local) possibility integrity constraints. Furthermore, we show that given a (local) possibility domain $D$, we can produce a (local) possibility integrity constraint, whose set of models is $D$, in time polynomial in the size of $D$. To obtain the characterization for possibility domains, we proceed as follows. We separately show that each type of a possibility integrity constraint of Definition \ref{def:posintcon} corresponds to one of the conditions of Theorem \ref{thm:DH}: (i) Domains admitting non-dictatorial binary projection aggregators are the sets of models of separable formulas, those admitting non-projection binary aggregators are the sets of models of renamable partially Horn formulas and (iii) affine domains are the sets of models of affine formulas. For local possibility domains, we directly show they are the models of local possibility integrity constraints. 

We will need some additional notation. For a set of indices $I$, let $D_I:=\{(a_i)_{i\in I}\mid a\in D\}$ be the projection of $D$ to the indices of $I$ and $D_{-I}:=D_{\n\setminus I}$. Also, for two (partial) vectors $a=(a_1,\ldots,a_k)\in D_{\{1,\ldots,k\}}$, $k<n$ and $b=(b_1,\ldots,b_{n-k})\in D_{\{k+1,\ldots,n\}}$, we define their \emph{concatenation} to be the vector $ab=(a_1,\ldots,a_k,b_1,\ldots,b_{n-k}$). Finally, given two subsets $D,D'\subseteq\{0,1\}^n$, we write that $D\approx D'$ if we can obtain $D$ by \emph{permuting} the coordinates of $D'$, i.e. if $D=\{(d_{j_1},\ldots,d_{j_n})\mid (d_1,\ldots,d_n)\in D'\}$, where $\{j_1,\ldots,j_n\}=\n$. 

\subsection{Syntactic characterizations}\label{ssec:syntchar} We begin with characterizing the domains closed under a non-dictatorial projection aggregator as the models of separable formulas.

\begin{proposition}\label{prop:projaggr} $D\subseteq\{0,1\}^n$ admits a binary non-dictatorial projection aggregator $(f_1, \ldots, f_n)$ if and only if there exists a separable formula $\phi$ whose set of models equals $D$.\end{proposition}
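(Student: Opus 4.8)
The plan is to route both implications through a single structural observation: a binary non-dictatorial projection aggregator exists for $D$ if and only if $D$ is a nontrivial ``product'' with respect to a partition of the coordinates, in the sense that there are non-empty disjoint $V_1,V_2$ with $V_1\cup V_2=\n$ such that $x\in D$ iff $x|_{V_1}\in D_{V_1}$ and $x|_{V_2}\in D_{V_2}$ (equivalently $D\approx D_{V_1}\times D_{V_2}$). A binary projection aggregator $(f_1,\ldots,f_n)$ has each $f_j\in\{\pr_1^2,\pr_2^2\}$, so it is completely described by the partition $V_1=\{j\mid f_j=\pr_1^2\}$, $V_2=\{j\mid f_j=\pr_2^2\}$, and it is non-dictatorial exactly when both blocks are non-empty (all-$\pr_1^2$ and all-$\pr_2^2$ being the two dictators).

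For the forward direction, I would note that applying such an aggregator to rows $x^1,x^2\in D$ outputs the vector $y$ with $y_j=x^1_j$ for $j\in V_1$ and $y_j=x^2_j$ for $j\in V_2$. The requirement $y\in D$ for all choices of $x^1,x^2$ says precisely that any $V_1$-part of a member of $D$ may be glued to any $V_2$-part of a (possibly different) member of $D$; since the inclusion $D\subseteq D_{V_1}\times D_{V_2}$ is automatic, this yields the product property above. I would then invoke the classical fact that every domain is the model set of some CNF formula to obtain $\phi_1$ over the variables indexed by $V_1$ with $\rMod(\phi_1)=D_{V_1}$ and $\phi_2$ over those indexed by $V_2$ with $\rMod(\phi_2)=D_{V_2}$. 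The conjunction $\phi=\phi_1\wedge\phi_2$ then satisfies $\rMod(\phi)=D$ by the product property, and it is separable: every clause descends from exactly one $\phi_i$ and so contains variables from only one block of the non-empty, disjoint partition $(V_1,V_2)$.

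For the backward direction, given a separable $\phi$ with witnessing partition $(V_1,V_2)$, I would split $\phi=\phi_1\wedge\phi_2$ by sending each clause to the block containing its (single-block) variable set. Because $\phi_1$ constrains only the $V_1$-coordinates and $\phi_2$ only the $V_2$-coordinates, the model set again factors: $x\in D$ iff $x|_{V_1}\in D_{V_1}$ and $x|_{V_2}\in D_{V_2}$, with $D_{V_i}=\rMod(\phi_i)$. Setting $f_j=\pr_1^2$ for $j\in V_1$ and $f_j=\pr_2^2$ for $j\in V_2$ gives a projection aggregator: its output on $x^1,x^2\in D$ has $V_1$-part in $D_{V_1}$ and $V_2$-part in $D_{V_2}$ and hence lies in $D$, while unanimity is immediate for projections. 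Non-dictatoriality is exactly the non-emptiness of both blocks.

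I expect the only delicate point to be the bookkeeping around the product decomposition. In the forward direction one must genuinely use the aggregator hypothesis to obtain the non-automatic inclusion $D_{V_1}\times D_{V_2}\subseteq D$, and one should treat the boundary case where some $D_{V_i}$ equals the whole cube so that $\phi_i$ is empty---here the partition of the full variable set $V$ still has both blocks non-empty, so $\phi$ remains separable. The standing non-degeneracy assumption ensures $D$, and therefore each factor $D_{V_i}$, is non-empty, so the CNF formulas invoked in the forward direction are well defined.
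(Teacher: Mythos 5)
Your proposal is correct and follows essentially the same route as the paper: the paper also isolates the key equivalence that a binary non-dictatorial projection aggregator exists if and only if $D\approx D_I\times D_J$ for a partition $(I,J)$ of the coordinates (its Lemma preceding the proposition), and then passes between this product decomposition and the conjunction/splitting of CNF formulas $\phi_1\wedge\phi_2$ exactly as you do. Your handling of the gluing argument, the automatic inclusion $D\subseteq D_{V_1}\times D_{V_2}$, and the empty-block boundary case matches the paper's proof in substance.
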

We will first need the following lemma:

\begin{lemma}\label{app-lem:cart}
$D$ is closed under a binary non-dictatorial projection aggregator \emph{if and only if} there exists a partition $(I,J)$ of $\n$ such that $D\approx D_I\times D_J$.   
\end{lemma}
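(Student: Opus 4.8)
The plan is to prove the two directions of the biconditional in Lemma~\ref{app-lem:cart} by matching a binary non-dictatorial projection aggregator with a product decomposition of $D$. A binary projection aggregator has each component $f_j$ equal to either $\pr_1^2$ or $\pr_2^2$; being non-dictatorial means the components are not all $\pr_1^2$ and not all $\pr_2^2$. So I would set $I=\{j\mid f_j=\pr_1^2\}$ and $J=\{j\mid f_j=\pr_2^2\}$, and non-dictatoriality is exactly the statement that both $I$ and $J$ are non-empty, hence $(I,J)$ is a genuine partition of $\n$.

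For the forward direction, assuming $D$ admits such an aggregator $F=(f_1,\ldots,f_n)$ with the partition $(I,J)$ above, I would show $D=D_I\times D_J$ up to the coordinate reordering implicit in $\approx$. The inclusion $D\subseteq D_I\times D_J$ (as permuted sets) is immediate by definition of projection. For the reverse inclusion, take any $u\in D_I$ and $v\in D_J$; by definition of the projections there are witnesses $x,y\in D$ with $x_I=u$ and $y_J=v$. Feeding the pair $(x,y)$ into $F$ yields $F(x,y)\in D$, and by the choice of $f_j$ this output agrees with $x$ on $I$ (i.e.\ equals $u$ there) and with $y$ on $J$ (i.e.\ equals $v$ there). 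Thus the concatenation of $u$ and $v$ lies in $D$, giving $D_I\times D_J\subseteq D$ after permuting coordinates, so $D\approx D_I\times D_J$.

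For the converse, suppose $D\approx D_I\times D_J$ for a partition $(I,J)$ of $\n$ with both parts non-empty. Define $f_j=\pr_1^2$ for $j\in I$ and $f_j=\pr_2^2$ for $j\in J$. This is plainly a binary projection aggregator, and it is non-dictatorial precisely because both $I$ and $J$ are non-empty. I then need to check it maps $D^2$ into $D$: given $x,y\in D$, the output $z:=F(x,y)$ satisfies $z_I=x_I$ and $z_J=y_J$; since $x_I\in D_I$ and $y_J\in D_J$ and $D$ is (up to permutation) the full product $D_I\times D_J$, the vector $z$ belongs to $D$. Unanimity is automatic for projection components.

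I anticipate the main subtlety is purely bookkeeping with the permutation $\approx$: the partition $(I,J)$ need not place the $I$-coordinates before the $J$-coordinates, so one must be careful that the concatenation notation $ab$ for $a\in D_I$, $b\in D_J$ is interpreted as reassembling a full vector in the original coordinate order rather than literally prepending one block to another. This is not a real obstacle, just a matter of fixing once and for all the reindexing bijection between $\n$ and $I\sqcup J$ and verifying everything is invariant under it; the aggregator argument itself is coordinatewise and so transparently respects any such relabeling. Once Lemma~\ref{app-lem:cart} is established, Proposition~\ref{prop:projaggr} should follow by combining it with a standard fact that a Cartesian product $D_I\times D_J$ (suitably permuted) is described by the conjunction of an integrity constraint for $D_I$ on the variables indexed by $I$ and one for $D_J$ on those indexed by $J$ — a formula that is separable by construction — and conversely that the models of a separable formula split as such a product along the variable partition witnessing separability.
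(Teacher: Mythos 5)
Your proof is correct and follows essentially the same route as the paper's: both directions match the partition $I=\{j\mid f_j=\pr_1^2\}$, $J=\{j\mid f_j=\pr_2^2\}$ with the product decomposition, and the key step of applying $F$ to two witnesses extending $u\in D_I$ and $v\in D_J$ to produce their recombination is exactly the paper's computation $F(ab',a'b)=ab\in D$. The paper sidesteps your permutation bookkeeping by assuming WLOG that $I=\{1,\ldots,k\}$ and $J=\{k+1,\ldots,n\}$, which is the only cosmetic difference.
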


\begin{proof}
($\Rightarrow$) Let $F=(f_1,\ldots,f_n)$ be a binary non-dictatorial projection aggregator for $D$. Assume, without loss of generality, that $f_i=\pr_1^2$, $i=1,\ldots,k<n$ and $f_j=\pr_2^2$, $j=k+1,\ldots,n$. Let also $I:=\{1,\ldots,k\}$ and $J:=\{k+1,\ldots,n\}$. Since $k<n$, $(I,J)$ is a partition of $\n$. To prove that $D=D_I\times D_J$, it suffices to prove that $D_I\times D_J\subseteq D$ (the reverse inclusion is always true).

Let $a\in D_I$ and $b\in D_J$. It holds that there exists an $a'\in D_I$ and a $b'\in D_J$ such that both $ab',a'b\in D$. Thus:$$F(ab',a'b)=ab\in D,$$ since $F=(f_1,\ldots,f_n)$ is an aggregator for $D$, $f_i=\pr_1^2$, $i\in I$ and $f_j=\pr_2^2$, $j\in J$.

($\Leftarrow$) Suppose that $D\approx D_I\times D_J$, where $I,J$ is a partition of $\n$. Assume, without loss of generality, that $I=\{1,\ldots,k\}$, $k<n$ and $J=\{k+1,\ldots,n\}$ (thus $D=D_I\times D_J$). Let also $ab',a'b\in D$, where $a,a'\in D_I$ and $b,b'\in D_J$.

Obviously, if $F=(f_1,\ldots,f_n)$ is an $n$-tuple of projections, such that $f_i=\pr_1^2$, $i\in I$ and $f_j=\pr_2^2$, $j\in J$, then $F(ab',a'b)=ab\in D$, since $a\in D_I$ and $b\in D_J$. Thus $F=(f_1,\ldots,f_n)$ is a non-dictatorial projection aggregator for $D$.  
\end{proof}

\begin{proof}[Proof of Proposition \ref{prop:projaggr}]
($\Rightarrow$) Since $D$ admits a binary non-dictatorial projection aggregator $(f_1, \ldots, f_n)$, by Lemma \ref{app-lem:cart}, $D\approx D_I\times D_J$, where $(I,J)$ is a partition of $\n$ such that $I=\{i\mid f_i=\pr_1^2\}$ and $J=\{j\mid f_j=\pr_2^2\}$. Let $\phi_1$ and $\phi_2$ defined on $\{x_i\mid i\in I\}$ and $\{x_j\mid j\in J\}$ respectively, such that $\rMod(\phi_1)=D_I$ and $\rMod(\phi_2)=D_J$. Let also $\phi=\phi_1\wedge\phi_2$. It is straightforward to observe that, since $\phi_1$ and $\phi_2$ contain no common variables:$$\textrm{Mod}(\phi)\approx\textrm{Mod}(\phi_1)\times\textrm{Mod}(\phi_2)=D_I\times D_J\approx D.$$
($\Leftarrow$) Assume that $\phi$ is separable and that $\textrm{Mod}(\phi)=D$. Since $\phi$ is separable, we can find a partition $(I,J)$ of $\n$, a formula $\phi_1$ defined on $\{x_i\mid i\in I\}$ and a $\phi_2$ defined on $\{x_j\mid j\in J\}$, such that $\phi=\phi_1\wedge\phi_2$. Easily, it holds that: $$\textrm{Mod}(\phi)\approx\textrm{Mod}(\phi_1)\times\textrm{Mod}(\phi_2)=D_I\times D_J\approx D.$$ The required now follows by Lemma \ref{app-lem:cart}.
\end{proof}

We now turn our attention to domains closed under binary non projection aggregators. 

\begin{theorem}\label{thm:nonproj}  $D$ admits a binary aggregator $(f_1, \ldots, f_n)$  which is not a projection aggregator if and only if there exists a renamable partially Horn formula $\phi$  whose set of models equals $D$.\end{theorem}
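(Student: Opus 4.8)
The plan is to prove both directions through a prime CNF representation of $D$, using the fact (recorded before Theorem~\ref{thm:DH}) that the only unanimous binary functions are $\land$, $\lor$, $\pr_1^2$ and $\pr_2^2$; thus a binary aggregator fails to be a projection aggregator exactly when at least one component is $\land$ or $\lor$. First I would reduce to a systematic-meet situation by renaming. Renaming a variable flips the corresponding coordinate of $D$ and turns closure under $\land$ into closure under $\lor$ there, while leaving projection components untouched, so flipping every coordinate whose component is $\lor$ reduces both directions to the following claim: $D$ is closed under a binary aggregator that is $\land$ on a nonempty set $V_0$ and a projection off $V_0$ if and only if $D=\rMod(\phi)$ for a partially Horn $\phi$ with admissible set $V_0$. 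Renaming the flipped coordinates back then yields the theorem.

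Before establishing this equivalence I would eliminate the asymmetry between $\pr_1^2$ and $\pr_2^2$: if $F$ is such an aggregator, then $G(a,b):=F(F(a,b),F(b,a))$ is again an aggregator for $D$ (as $D$ is closed under $F$), and a one-line computation shows $G$ is $\land$ on $V_0$ and $\pr_1^2$ on every coordinate off $V_0$. So I may assume $F(a,b)$ equals $a$ off $V_0$ and $a\land b$ on $V_0$. For the ($\Leftarrow$) direction I would take this $F$ as the candidate aggregator for a given partially Horn $\phi$ and check $F(a,b)\models\phi$ clause by clause: an admissible clause uses only variables of $V_0$ and is Horn, hence closed under $\land$; a clause containing a variable outside $V_0$ has its $V_0$-variables occurring only negatively, so replacing their values by $a\land b$ (which can only lower them) cannot destroy satisfaction, while the off-$V_0$ coordinates simply copy the satisfying vector $a$. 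Since $V_0\neq\emptyset$, $F$ is non-projection.

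For the ($\Rightarrow$) direction I would fix a prime formula $\phi$ with $\rMod(\phi)=D$ (Quine \cite{quine1959cores}; Zanuttini and H\'ebrard \cite{zanuttini2002unified}) and prove it is partially Horn with admissible set $V_0$ by combining primality with closure under $F$. For condition~(i), if a clause $C$ using only variables of $V_0$ had two positive literals $x_i,x_j$, primality supplies $a,b\in D$ falsifying $C\setminus\{x_i\}$ and $C\setminus\{x_j\}$ respectively; since $F$ acts as $\land$ on all of $\vbl(C)\subseteq V_0$, the vector $F(a,b)\in D$ falsifies every literal of $C$, a contradiction. For condition~(ii), suppose a clause $C$ with some variable $x_m\notin V_0$ contains $x_i$ positively with $i\in V_0$; primality gives $a\in D$ falsifying $C\setminus\{x_i\}$ (so $a_i=1$) and $b\in D$ falsifying $C\setminus\{\ell_m\}$, where $\ell_m$ is the literal of $x_m$ in $C$ (so $b_i=0$). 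Then $F(a,b)\in D$ falsifies $C$ as well: coordinate $i$ gives $a_i\land b_i=0$, the off-$V_0$ coordinate of $x_m$ copies $a$, which falsifies $\ell_m$, and each remaining literal is falsified by both $a$ and $b$ (hence by their meet on $V_0$) or by $a$ off $V_0$ — again a contradiction.

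I expect the main obstacle to be the asymmetry of the projection components, which is precisely what the identity $G(a,b)=F(F(a,b),F(b,a))$ neutralizes, and, within condition~(ii), the choice of the second model $b$ through a sub-clause removal at an \emph{off}-$V_0$ variable: this single choice simultaneously forces $b_i=0$ and leaves every other literal of $C$ falsified, which is exactly what makes $F(a,b)$ contradict closure.
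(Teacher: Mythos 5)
Your proposal is correct, and while its scaffolding coincides with the paper's — the renaming reduction is Lemma \ref{lem:Horn}, the symmetrization $G(a,b)=F(F(a,b),F(b,a))$ that turns mixed projections into $\pr_1^2$ is exactly Lemma \ref{lem:oneproj}, and your ($\Leftarrow$) direction is the same clause-by-clause check the paper performs — your ($\Rightarrow$) direction takes a genuinely different route. The paper constructs a partially Horn formula for $D^*$ explicitly from the domain: a Horn formula $\phi_0$ for the projection $D^*_I$ (closed under $\wedge$), plus, for each $a\in D^*_I$, an implication $\phi_a=\bigl(\bigwedge_{i\in I_a}x_i\bigr)\rightarrow\psi_a$ describing the set $B_a$ of admissible extensions of $a$; establishing $\rMod(\phi)=D^*$ then requires the closure argument that $a''<a$ and $ab\in D^*$ force $a''b\in D^*$. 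You instead take an arbitrary prime formula for $D^*$ (Definition \ref{def:prime}) and show that primality plus closure under the aggregator forces the partially Horn syntax: for each offending clause, primality supplies models falsifying suitable sub-clauses, and feeding these to the aggregator produces a model of the domain falsifying the entire clause, a contradiction. This is a clean generalization of the classical argument that prime implicates of $\wedge$-closed domains are Horn, with the projection components absorbing the coordinates outside $V_0$; I checked the witness bookkeeping in both conditions (i) and (ii) and it is airtight. As for what each approach buys: the paper's construction is self-contained and exhibits a formula built directly from the aggregation data, but that formula is not the one used algorithmically; your argument is shorter and directly yields the fact the paper must prove separately (Proposition \ref{lemma:prime}) to get Theorem \ref{thm:main} — namely that the prime formula output by Zanuttini and H\'ebrard's algorithm is itself renamable partially Horn whenever $D$ admits a non-projection binary aggregator. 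In effect, your route merges Theorem \ref{thm:nonproj} and Proposition \ref{lemma:prime} into a single proof.
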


We will first need two lemmas.

\begin{lemma}\label{lem:oneproj}
 Suppose $D$ admits a binary aggregator $F=(f_1,\ldots,f_n)$ such that there exists a partition $(H,I,J)$ of $\n$ where $f_h$ is symmetric for all $h\in H$, $f_i=\pr_s^2$, for all $i\in I$ and $f_j=\pr_t^2$, with $t\neq s$, for all $j\in J$. Then, $D$ also admits a binary aggregator $G=(g_1,\ldots,g_n)$, such that $g_h=f_h$, for all $h\in H$ and $g_i=\pr_s^2$, for all $i\in I\cup J$.\end{lemma}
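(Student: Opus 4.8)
The plan is to obtain $G$ as a \emph{superposition} of $F$ with the swapped map $F^{\mathrm{sw}}$ defined by $F^{\mathrm{sw}}(x,y):=F(y,x)$. First I would record two facts. Since the only unanimous binary functions are $\land,\lor,\pr_1^2,\pr_2^2$ and only the first two are symmetric, each $f_h$ with $h\in H$ is either $\land$ or $\lor$; being idempotent and associative, it satisfies $f_h(a,f_h(a,b))=f_h(a,b)$. Second, $F^{\mathrm{sw}}$ is again a binary aggregator of $D$: if $x,y\in D$ then $y,x\in D$, hence $F(y,x)\in D$; moreover swapping the two arguments fixes every symmetric component and exchanges $\pr_1^2\leftrightarrow\pr_2^2$ on the projection components. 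Using $F^{\mathrm{sw}}$ I may assume without loss of generality that $s=1$ and $t=2$: if instead $s=2$, I would run the argument on $F^{\mathrm{sw}}$ (which carries $\pr_1^2$ on $I$ and $\pr_2^2$ on $J$), obtain an aggregator with $\pr_1^2$ on $I\cup J$ and the same symmetric components on $H$, and then swap it back to recover $\pr_2^2=\pr_s^2$ on $I\cup J$.

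With $s=1$ and $t=2$, I would define, for $x,y\in D$,
$$G(x,y):=F\bigl(x,F(y,x)\bigr).$$
This is a binary map; it is an aggregator because $F(y,x)\in D$ and $x\in D$ imply $G(x,y)\in D$; and it is unanimous since $G(x,x)=F(x,F(x,x))=F(x,x)=x$. Writing $g_j$ for its $j$-th component, I would compute coordinatewise on bits $a,b$. For $h\in H$, symmetry gives $f_h(b,a)=f_h(a,b)$, so $g_h(a,b)=f_h\bigl(a,f_h(a,b)\bigr)=f_h(a,b)$ by idempotence and associativity, i.e.\ $g_h=f_h$. For $i\in I$, $g_i(a,b)=\pr_1^2\bigl(a,\pr_1^2(b,a)\bigr)=a=\pr_1^2(a,b)$, so $g_i=\pr_1^2=\pr_s^2$. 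For $j\in J$, $g_j(a,b)=\pr_2^2\bigl(a,\pr_2^2(b,a)\bigr)=\pr_2^2(b,a)=a=\pr_1^2(a,b)$, so $g_j=\pr_1^2=\pr_s^2$ as well. Hence $G=(g_1,\ldots,g_n)$ is the desired aggregator.

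I expect the only genuine content to lie in choosing a composition that simultaneously leaves the symmetric block $H$ untouched while collapsing the $\pr_t^2$ block $J$ onto $\pr_s^2$; once the formula $F(x,F(y,x))$ is written down, the rest is the routine case analysis above. The single point deserving care is the reduction to $s=1$: I must ensure that conjugating by the swap does not disturb the components on $H$, which holds precisely because $\land$ and $\lor$ are symmetric, so that the swapped-back aggregator still coincides with $F$ on $H$.
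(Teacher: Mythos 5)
Your proof is correct and follows essentially the same route as the paper: both obtain $G$ as a superposition of $F$ with itself (precomposed with the argument swap), which collapses the $\pr_t^2$-block onto $\pr_s^2$ while leaving the symmetric block untouched. The only cosmetic difference is the choice of composition --- the paper takes $G(a,b)=F(F(a,b),F(b,a))$, whose verification on $H$ needs only unanimity ($f_h(c,c)=c$), whereas your $G(x,y)=F(x,F(y,x))$ additionally invokes the absorption identity $f_h(a,f_h(a,b))=f_h(a,b)$ for $f_h\in\{\land,\lor\}$, which is fine in the Boolean setting.
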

\begin{proof}
Without loss of generality, assume that there exist $1\leq k <l<n$ such that $H=\{1,\ldots,k\}$, $I=\{k+1,\ldots,l\}$ and $J=\{l+1,\ldots,n\}$ and that $s=1$ (and thus $t=2$). It suffices to prove that, for two arbitrary vectors $a,b\in D$, $G(a,b)\in D,$ where $(g_1,\ldots,g_n)$ is defined as in the statement of the lemma.

Assume that for all $i\in H$, $f_i(a_i,b_i)=c_i$. Since $F$ is an aggregator for $D$, it holds that $F(a,b)$ and $F(b,a)$ are both vectors in $D$. By the same token, so is $F(F(a,b),F(b,a))$. The result is now obtained by noticing that:\begin{align*}
    F(a,b)= & (c_1,\ldots,c_k,a_{k+1},\ldots,a_l,b_{l+1},\ldots,b_n),\\
    F(b,a)= & (c_1,\ldots,c_k,b_{k+1},\ldots,b_l,a_{l+1},\ldots,a_n),
\end{align*}
and thus: $F(F(a,b),F(b,a))=(c_1,\ldots,c_k,a_{k+1},\ldots,a_n)=G(a,b).$
\end{proof}

\begin{lemma}\label{lem:Horn}
Suppose $D$ admits a binary aggregator $(f_1,\ldots,f_n)$ such that, for some $J\subseteq\n$, $f_j$ is symmetric for all $j\in J$. For each $d=(d_1,\ldots,d_n)\in D$, let $d^*=(d_1^*,\ldots,d_n^*)$ be such that: \begin{equation*}
    d_j^*=\begin{cases}
    1-d_j & \text{ if }j\in J,\\
    d_j & \text{ else,}
    \end{cases}
\end{equation*} for $j=1,\ldots,n$ and set $D^*=\{d^*\mid d\in D\}$. Then $D^*$ admits the binary aggregator $(g_1,\ldots,g_n)$, where: (i) $g_j=\wedge$ for all $j\in J$ such that $f_j=\vee$, (ii) $g_j=\vee$ for all $j\in J$ such that $f_j=\wedge$ and (iii) $g_j=f_j$ for the rest.

Furthermore, if there are two formulas $\phi$ and $\phi^*$ such that $\phi^*$ is obtained from $\phi$ by renaming all $x_j$, $j\in J$, then $D=\rMod(\phi)$ if and only if $D^*=\rMod(\phi^*)$.
\end{lemma}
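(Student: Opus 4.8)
The plan is to prove Lemma \ref{lem:Horn} in two independent parts, matching the two assertions in its statement.

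\textbf{Part 1: $D^*$ admits the stated aggregator $(g_1,\ldots,g_n)$.} The natural approach is to transport the aggregator from $D$ to $D^*$ through the flipping bijection $d\mapsto d^*$. First I would note that $d\mapsto d^*$ is a bijection between $D$ and $D^*$ whose inverse is itself (flipping twice on the coordinates in $J$ is the identity). The key idea is that, on each coordinate, negation converts $\wedge$ into $\vee$ and vice versa, while leaving the symmetry structure intact; this is precisely De Morgan's law $1-(a\wedge b)=(1-a)\vee(1-b)$ and its dual. Concretely, to show $G=(g_1,\ldots,g_n)$ is an aggregator for $D^*$, I would take arbitrary $a^*,b^*\in D^*$ (so $a,b\in D$), and verify coordinatewise that $g_j(a^*_j,b^*_j)=\bigl(f_j(a_j,b_j)\bigr)^*$. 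For $j\notin J$ this is immediate since $g_j=f_j$ and the $j$-th coordinates are unchanged under starring. For $j\in J$, say $f_j=\vee$ so $g_j=\wedge$: then $g_j(a^*_j,b^*_j)=(1-a_j)\wedge(1-b_j)=1-(a_j\vee b_j)=1-f_j(a_j,b_j)$, which is exactly the $j$-th coordinate of $\bigl(f_j(a_j,b_j)\bigr)^*$. The case $f_j=\wedge$ is symmetric. Hence $G(a^*,b^*)=\bigl(F(a,b)\bigr)^*$, and since $F(a,b)\in D$ we conclude $G(a^*,b^*)\in D^*$. Unanimity of each $g_j$ is inherited likewise from $f_j$ (note that $g_j$ is symmetric exactly when $f_j$ is, and all the replacements $\wedge\leftrightarrow\vee$ preserve unanimity). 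I would also remark that $F$ being binary makes $G$ binary.

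\textbf{Part 2: the equivalence $D=\rMod(\phi)\iff D^*=\rMod(\phi^*)$.} Here the cleanest route is to observe that renaming the variables in $J$ acts on truth assignments exactly as the starring operation: an assignment $a$ satisfies $\phi$ if and only if the assignment $a^*$ (which flips the values fed to the variables $x_j$, $j\in J$) satisfies $\phi^*$. This is because $\phi^*$ is obtained from $\phi$ by swapping, for each $j\in J$, every literal $x_j$ with $\neg x_j$ and vice versa; so the truth value a clause of $\phi$ receives under $a$ equals the truth value the corresponding clause of $\phi^*$ receives under $a^*$. Thus $a\in\rMod(\phi)\iff a^*\in\rMod(\phi^*)$, i.e.\ $\rMod(\phi^*)=\{a^*\mid a\in\rMod(\phi)\}=\bigl(\rMod(\phi)\bigr)^*$. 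Since starring is an involution, $D=\rMod(\phi)$ if and only if $D^*=\bigl(\rMod(\phi)\bigr)^*=\rMod(\phi^*)$, which is the claim.

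\textbf{Main obstacle.} Neither part involves a genuine difficulty; the content is bookkeeping about how negation interacts with the Boolean connectives and with renaming. The one place to be careful is keeping the direction of the $\wedge\leftrightarrow\vee$ swap consistent between the aggregator components (Part 1) and the literal swap in the formula (Part 2), and verifying that the correspondence $g_j(a^*_j,b^*_j)=\bigl(f_j(a_j,b_j)\bigr)^*$ is checked under the correct casing of $j\in J$ versus $j\notin J$. Since the coordinates outside $J$ are untouched by starring, the only nontrivial verification is the De Morgan computation on $J$, which I expect to dispatch in a line or two.
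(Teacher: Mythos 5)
Your proposal is correct and follows essentially the same route as the paper's own proof: Part 1 is exactly the De Morgan identities $\wedge(1-a_j,1-b_j)=1-\vee(a_j,b_j)$ (and its dual) packaged as the coordinatewise identity $G(a^*,b^*)=\bigl(F(a,b)\bigr)^*$, and Part 2 is the paper's observation that renaming the variables in $J$ makes every literal satisfied by the opposite value, so $d$ satisfies $\phi$ if and only if $d^*$ satisfies $\phi^*$. Your write-up merely spells out the bookkeeping (the involution property and the transport of membership through the bijection) that the paper leaves implicit.
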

Note that we do not assume that the set $J\subseteq\n$ includes every coordinate $j$ such that $f_j$ is symmetric. 
\begin{proof}
The former statement follows from the fact that $\wedge(1-d_j,1-d'_j)=1-\vee(d_j,d'_j)$ (resp. $\vee(1-d_j,1-d'_j)=1-\wedge(d_j,d'_j)$), for any $d,d'\in D$. For the latter, observe that by renaming $x_j$, $j\in J$, in $\phi$, we cause all of its literals to be satisfied by the opposite value. Thus, $d^*$ satisfies $\phi^*$ if and only if $d$ satisfies $\phi$.  
\end{proof}

For two vectors $a,b\in D$, we define $a\leq b$ to mean that if $a_i=1$ then $b_i=1$, for all $i\in\n$ and $a<b$ when $a\leq b$ and $a\neq b$.
\begin{proof}[Proof of Theorem \ref{thm:nonproj}]
($\Rightarrow$) We will work with the corresponding domain $D^*$ of Lemma \ref{lem:Horn} that admits an aggregator $(g_1,\ldots,g_n)$ whose symmetric components, corresponding to the symmetric components of $(f_1,\ldots,f_n)$, are all equal to $\wedge$. Suppose that $V_0=\{x_i\mid g_i=\wedge\}$. For $D^*$, we compute a formula $\phi=\phi_0\wedge\phi_1$, where $\phi_0$ is defined on the variables of $V_0$ and is Horn and where $\phi_1$ has only negative appearances of variables of $V_0$. The result is then derived by renaming all the variables $x_j$, where $j$ is such that $f_j=\vee$.

Let $I:=\{i\mid f_i\text{ is symmetric}\}$ (by the hypothesis, $I\neq\emptyset$). Let also $J:=\{j\mid f_j=\vee\}$ ($J$ might be empty). Obviously $J\subseteq I$. For each $d=(d_1,\ldots,d_n)\in D$, let $d^*=(d^*_1,\ldots,d^*_n)$, where $d^*_j=1-d_j$ if $j\in J$ and $d^*_i=d_i$ else. Easily, if $D^*=\{d^*\mid d\in D\}$, by Lemma \ref{lem:Horn} it admits an aggregator $(g_1,\ldots,g_n)$ such that $g_i=\wedge$, for all $i\in I$ and $g_j=\pr_1^2$, $j\notin I$. Thus, there is a Horn formula $\phi_0$ on $\{x_i\mid i\in I\}:=V_0$, such that $\textrm{Mod}(\phi_0)=D^*_I$.

If $I=\n$, we have nothing to prove. Thus, suppose, without loss of generality, that $I=\{1,\ldots,k\}$, $k<n$. For each $a=(a_1,\ldots,a_k)\in D^*_I$, let $B_a:=\{b\in D^*_{-I}\mid ab\in D^*\}$ be the set containing all partial vectors that can extend $a$. For each $a\in D^*_{I}$, let $\psi_a$ be a formula on $\{x_j\mid j\notin I\}$, such that $\textrm{Mod}(\psi_a)=B_a$. Finally, let $I_a:=\{i\in I\mid a_i=1\}$ and define:$$\phi_a:=\Bigg(\bigwedge_{i\in I_a}x_i\Bigg)\rightarrow\psi_a,$$ for all $a\in D^*_I$.

Consider the formula:$$\phi=\phi_0\wedge\Bigg(\bigwedge_{a\in D^*_I}\phi_a\Bigg).$$ We will prove that $\phi$ is partially Horn and that $\textrm{Mod}(\phi)=D^*$. By Lemma \ref{lem:Horn}, the renamable partially Horn formula for $D$ can be obtained by renaming in $\phi$ the variables $x_i$ such that $i\in J$.   

We have already argued that $\phi_0$ is Horn. Also, since $\phi_a$ is \emph{logically equivalent} to (has exactly the same models as):$$\Bigg(\bigvee_{i\in I_a}\neg x_i\Bigg)\vee\psi_a,$$ any variable of $V_0$ that appears in the clauses of some $\phi_a$, does so negatively. It follows that $\phi$ is partially Horn.

Next we  show that $D^*\subseteq\textrm{Mod}(\phi)$ and that $\textrm{Mod}(\phi)\subseteq D^*$. For the former inclusion, let $ab\in D^*$, where $a\in D^*_I$ and $b\in B_a$. Then, it holds that $a$ satisfies $\phi_0$ and $b$ satisfies $\psi_a$. Thus $ab$ satisfies $\phi_a$.

Now, let $a'\in D^*_I:a\not\geq a'$. Then, $a$ does not satisfy $\bigwedge_{i\in I_{a'}}x_i$, since there exists some coordinate $i\in I_{a'}$ such that $a_i=0$ and $a'_i=1$. Thus, $ab$ satisfies $\phi_{a'}$. Finally, let $a''\in D^*_I:a''<a$. Then, $a$ satisfies $\bigwedge_{i\in I_{a''}}x_i$ and thus we must prove that $b$ satisfies $\psi_{a''}$. 

Since $a''\in D^*_I$, there exists a $c\in D^*_{-I}$ such that $a''c\in D^*$. Then, since $(g_1,\ldots,g_n)$ is an aggregator for $D^*$:\begin{multline*}
   (g_1,\ldots,g_n)(ab,a''c)=\\(\wedge(a_1,a''_1),\ldots,\wedge(a_k,a''_k),\pr_1^2(b_1,c_1),\ldots,\pr_1^2(b_{n-k},c_{n-k}))=a''b\in D^*, 
\end{multline*}
since $a''<a$. Thus, $b\in B(a'')$ and, consequently, it satisfies $\psi_{a''}$.

We will prove the opposite inclusion by showing that an assignment not in $D^*$ cannot satisfy $\phi$. Let $ab\notin D^*$. If $a\notin D^*_I$, we have nothing to prove, since $a$ does not satisfy $\phi_0$ and thus $ab\notin\textrm{Mod}(\phi)$. So, let $a\in D^*_I$. Then, $b\notin B_a$, lest $ab\in D^*$. But then, $b$ does not satisfy $\psi_a$ and thus $ab$ does not satisfy $\phi_a$. Consequently, $ab\notin\textrm{Mod}(\phi)$.

Thus, by renaming the variables $x_i$, $i\in J$, we produce   a renamable partially Horn formula, call it  $\psi$,  such that $\rMod(\psi)=D$.

($\Leftarrow$) Let $\psi$ be a renamable partially Horn formula with $\textrm{Mod}(\psi)=D$. Let $J\subseteq\n$ such that, by renaming all the $x_i$, $i\in J$, in $\psi$, we obtain a partially Horn formula $\phi$. Let $V_0$ be the set of variables such that any clause containing only variables from $V_0$ is Horn, and that appear only negatively in clauses that contain variables from $V\setminus V_0$. By Remark \ref{rem:ren}, we can assume that $\{x_i\mid i\in J\}\subseteq V_0$. Let also $\Cl_0$ be the set of admissible Horn clauses of $\phi$.

Let again $D^*=\{d^*\mid d\in D\}$, where $d^*_j=1-d_j$ if $j\in J$ and $d^*_i=d_i$ else, for all $d\in D$. By Lemma \ref{lem:Horn}, $\rMod(\phi)=D^*$. By the same Lemma, and by noticing that whichever the choice of $J\subseteq\n$, $(D^*)^*=D$, it suffices to prove that $D^*$ is closed under a binary aggregator $(f_1,\ldots,f_n)$, where $f_i=\wedge$ for all $i$ such that $x_i\in V_0$ and $f_j=\pr_1^2$ for the rest.

Without loss of generality, let $I=\{1,\ldots,k\}$, $k<n$ (lest we have nothing to show) be the set of indices of the variables in $V_0$. We need to show that if $ab,a'b'\in D$, where $a,a'\in D^*_I$ and $b,b'\in D^*_{-I}$, then $(a\wedge a')b\in D$, where $a\wedge a'=(a_1\wedge a'_1,\ldots,a_k\wedge a'_k)$.

Let $\phi=\phi_0\wedge \phi_1$, where $\phi_0$ is the conjunction of the clauses in $\Cl_0$ and $\phi_1$ the conjunction of the rest of the clauses of $\phi$. By the hypothesis, $\phi_0$ is Horn and thus, since $a,a'$ satisfy $\phi_0$, so does $a\wedge a'$. Now, let $C_r$ be a clause of $\phi_1$. If any literal of $C_r$ that corresponds to a variable not in $\phi_0$ is satisfied by $b$, we have nothing to prove. If there is no such literal, since $ab$ satisfies $C_r$, it must hold that a negative literal $\bar{x}_i$, $i\in I$, is satisfied by $a$. Thus, $a_i=0$, which means that $a_i\wedge a'_i=0$ too. Consequently, $C_r$ is satisfied by $(a\wedge a')b$. Since $C_r$ was arbitrary, the proof is complete.  \end{proof}
We thus get:
\begin{theorem}\label{thm:charpd}
$D$ is a possibility domain if and only if there exists a possibility integrity constraint $\phi$ whose set of models equals $D$.
\end{theorem}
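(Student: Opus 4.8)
The plan is to obtain this characterization as a direct synthesis of Theorem~\ref{thm:DH} with the two correspondences already established in this subsection, supplemented by the classical characterization of affine relations. Theorem~\ref{thm:DH} says that a possibility domain is precisely one admitting an aggregator of one of three kinds, and Definition~\ref{def:posintcon} declares a possibility integrity constraint to be precisely a formula of one of three syntactic kinds. Thus the whole argument reduces to matching these two triples of alternatives, case by case, in both directions, so I would set up the proof as two implications each split into three subcases.

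For the forward direction I would assume $D$ is a possibility domain and invoke Theorem~\ref{thm:DH} to place $D$ in one of three situations. If $D$ admits a non-dictatorial binary projection aggregator, Proposition~\ref{prop:projaggr} yields a separable $\phi$ with $\rMod(\phi)=D$. If $D$ admits a binary aggregator that is not a projection aggregator, Theorem~\ref{thm:nonproj} yields a renamable partially Horn $\phi$ with $\rMod(\phi)=D$. If $D$ admits a ternary aggregator all of whose components are the ternary sum $\bmod\,2$, then $D$ is closed under this operator, and by the classical theorem of Schaefer~\cite{schaefer1978complexity} (quoted in Section~\ref{sec:intro}) this is equivalent to $D$ being affine, i.e.\ to the existence of an affine $\phi$ with $\rMod(\phi)=D$. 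In each case $\phi$ is a possibility integrity constraint by Definition~\ref{def:posintcon}. For the converse I would run the same three subcases backwards: if $\phi$ is separable, Proposition~\ref{prop:projaggr} supplies a binary non-dictatorial projection aggregator for $D$; if $\phi$ is renamable partially Horn, Theorem~\ref{thm:nonproj} supplies a binary non-projection aggregator; and if $\phi$ is affine, the same classical equivalence shows $D$ is closed under the ternary sum $\bmod\,2$, so $D$ admits the corresponding ternary aggregator (as already observed in Example~\ref{ex:possdom}). In all three cases $D$ meets one of the conditions of Theorem~\ref{thm:DH} and hence is a possibility domain.

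The heavy lifting has already been carried out in Proposition~\ref{prop:projaggr} and Theorem~\ref{thm:nonproj}, so I do not expect a genuinely new difficulty here. The only point requiring care is the affine case, where one must invoke the classical equivalence between closure under the ternary sum $\bmod\,2$ and being the model set of an affine formula, rather than reprove it. Beyond that, the main thing to verify is that Definition~\ref{def:posintcon} lines up \emph{exactly} with the three alternatives of Theorem~\ref{thm:DH}, with no gap that could leave some possibility domain undescribed and no mismatch between a syntactic type and its intended aggregator type. Since the three syntactic types exhaust and correspond precisely to the three aggregator types, the stated equivalence follows.
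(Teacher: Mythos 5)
Your proposal is correct and follows essentially the same route as the paper's own proof: both directions are handled by matching the three cases of Theorem~\ref{thm:DH} with the three syntactic types of Definition~\ref{def:posintcon}, via Proposition~\ref{prop:projaggr}, Theorem~\ref{thm:nonproj}, and the classical equivalence between closure under ternary addition $\bmod\,2$ and affine describability (which the paper invokes implicitly where you cite Schaefer explicitly). No gaps; nothing further is needed.
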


\begin{proof}
($\Rightarrow$) If $D$ is a possibility domain, then, by Theorem \ref{thm:DH}, it either admits a non-dictatorial binary projection, or a non-projection binary aggregator or a ternary aggregator all components of which are the binary addition $\mod2$. In the first case, by Proposition \ref{prop:projaggr}, $D$ is the model set of a separable formula. In the second, by Theoremn \ref{thm:nonproj}, it is the model set of a renamable partially Horn formula and in the third, that of an affine formula. Thus, in all cases, $D$ is the model set of a possibility integrity constraint.

($\Leftarrow$) Let $\phi$ be a possibility integrity constraint such that $\rMod(\phi)=D$. If $\phi$ is separable, then, by Proposition \ref{prop:projaggr}, $D$ admits a non-dictatorial binary projection aggregator. If $\phi$ is renamable partially Horn, then, by Theorem \ref{thm:nonproj}, $D$ admits a non-projection binary aggregator. Finally, if $\phi$ is affine, then $D$ admits a ternary aggregator all components of which are the binary addition $\mod2$. In every case, $D$ is a possibility domain.
\end{proof}

We turn now our attention to local possibility domains. Analogously to the case of possibility domains, we characterize \lpd's as the sets of models of \lpic's.
\begin{theorem}\label{thm:updupic}
A domain $D\subseteq\{0,1\}^n$ is a local possibility domain if and only if there is a local possibility integrity constraint $\phi$ such that $\rMod(\phi)=D$. 
\end{theorem}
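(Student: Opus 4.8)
The plan is to reduce, via Theorem \ref{thm:upd-char}, to the existence of a ternary aggregator $(f_1,\ldots,f_n)$ with $f_j\in\{\wedge^{(3)},\vee^{(3)},\maj,\oplus\}$, and to read off the partition of Definition \ref{def:upic} from the types of the $f_j$: set $V_0=\{x_j\mid f_j\in\{\wedge^{(3)},\vee^{(3)}\}\}$, $V_1=\{x_j\mid f_j=\maj\}$ and $V_2=\{x_j\mid f_j=\oplus\}$. Since renaming a coordinate turns $\vee^{(3)}$ into $\wedge^{(3)}$ and leaves $\maj$ and $\oplus$ (which are self-dual) fixed, an argument parallel to Lemma \ref{lem:Horn} lets me rename the coordinates with $f_j=\vee^{(3)}$ and pass to a domain $D^*$ admitting the aggregator $g=(g_1,\ldots,g_n)$ with $g_j=\wedge^{(3)}$ on $V_0$, $g_j=\maj$ on $V_1$ and $g_j=\oplus$ on $V_2$; the original object is recovered by renaming back, so throughout I may work with $D^*$ and $g$.

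For the ($\Leftarrow$) direction I would take an \lpic\ $\phi$, rename to put its partially Horn witness $\phi^*$ (admissible set $V_0$) in hand, and check that $g$ preserves $\rMod(\phi^*)=D^*$ clause by clause. This is a routine polymorphism-preservation case analysis: admissible Horn clauses on $V_0$ are preserved by $\wedge^{(3)}$, clauses with at most two $V_1$-variables are bijunctive and preserved by $\maj$, and clauses whose only variables lie in $V_2$ are affine and preserved by $\oplus$. For a mixed clause the $V_0$-variables occur only negatively (as $\phi^*$ is partially Horn), so if on some $V_0$-coordinate of the clause one of the three inputs carries a $0$ then $\wedge^{(3)}$ outputs $0$ and satisfies the corresponding negative literal, while otherwise all three inputs already satisfy the non-$V_0$ part, which is either a $2$-clause on $V_1$ or an exclusive-or on $V_2$. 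As no clause mixes $V_1$ and $V_2$, these cases are exhaustive, so $g$ is an aggregator for $D^*$ and $D$ is an \lpd\ by Theorem \ref{thm:upd-char}.

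For the harder ($\Rightarrow$) direction I would build the formula by imitating the proof of Theorem \ref{thm:nonproj}. Writing $I$ for the index set of $V_0$, the projection $D^*_I$ is closed under $\wedge$ (since $\wedge^{(3)}(x,y,y)=\wedge(x,y)$), hence Horn, say $D^*_I=\rMod(\phi_0)$. For each $a\in D^*_I$ let $B_a=\{b\mid ab\in D^*\}$; applying $g$ to three extensions of the same $a$ fixes the $V_0$-part (as $\wedge^{(3)}(a,a,a)=a$), so $B_a$ is closed under ``$\maj$ on $V_1$, $\oplus$ on $V_2$''. The crucial structural point is that such a set factors: from $(p,q),(p',q')\in B_a$, applying the mixed operation to $(p,q),(p,q),(p',q')$ yields $(p,q')$, whence $B_a=(B_a)_{V_1}\times(B_a)_{V_2}$ with $(B_a)_{V_1}$ bijunctive and $(B_a)_{V_2}$ affine. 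Thus $B_a=\rMod(\psi_a^1\wedge\psi_a^2)$ with $\psi_a^1$ a conjunction of $2$-clauses on $V_1$ and $\psi_a^2$ affine on $V_2$, and I would set $\phi_a=\big(\bigwedge_{i\in I_a}x_i\big)\rightarrow(\psi_a^1\wedge\psi_a^2)$ with $I_a=\{i\in I\mid a_i=1\}$, and $\phi=\phi_0\wedge\bigwedge_{a\in D^*_I}\phi_a$. Distributing the implication, every clause of $\phi$ is either Horn on $V_0$, or a disjunction of negative $V_0$-literals with a $2$-clause on $V_1$, or a disjunction of negative $V_0$-literals with an exclusive-or on $V_2$ (a $(V_0,V_2)$-generalized clause); no clause mixes $V_1$ and $V_2$, so $\phi$ is an \lpic\ with admissible set $V_0$, and renaming back gives the \lpic\ for $D$.

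It remains to verify $\rMod(\phi)=D^*$, which is where the main difficulty lies and where the factorisation is used a second time. The inclusion $\rMod(\phi)\subseteq D^*$ and the cases $a\not\geq a'$ and $a'=a$ go exactly as in Theorem \ref{thm:nonproj}; the delicate case is showing that $ab\in D^*$ satisfies $\phi_{a''}$ for every $a''<a$ in $D^*_I$, i.e. that $a''b\in D^*$. Here the mixed ternary nature of $g$ blocks the one-step argument of Theorem \ref{thm:nonproj}: choosing any $c\in B_{a''}$ and writing $b=(b^1,b^2)$, $c=(c^1,c^2)$, I would apply $g$ to $(ab,ab,a''c)$ and to $(ab,a''c,a''c)$ to obtain $(a'',b^1,c^2)\in D^*$ and $(a'',c^1,b^2)\in D^*$ (using $a''\le a$ so that $\wedge^{(3)}$ collapses the $V_0$-part to $a''$, and the absorption identities of $\maj$ and $\oplus$), and then invoke the rectangle property of $B_{a''}$ to conclude $(b^1,b^2)=b\in B_{a''}$. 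This two-application argument, combined with the factorisation of the extension sets, is the heart of the proof; the degenerate cases ($V_0=\emptyset$, where $D^*$ splits into a bijunctive and an affine part, or $V_1=V$, $V_2=\emptyset$) follow immediately from the same observations.
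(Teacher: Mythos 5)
Your proposal is correct, and its skeleton coincides with the paper's: reduce via Theorem \ref{thm:upd-char} to a ternary aggregator with components in $\{\wedge^{(3)},\vee^{(3)},\maj,\oplus\}$, rename the $\vee^{(3)}$-coordinates by (an analogue of) Lemma \ref{lem:Horn}, read off $V_0,V_1,V_2$ from the component types, describe the $V_0$-projection by a Horn formula, attach guarded implications $\bigl(\bigwedge_{i\in I_a}x_i\bigr)\rightarrow(\psi_a^1\wedge\psi_a^2)$ for each $a\in D^*_I$, and do a clause-by-clause preservation check for the converse. Where you genuinely deviate is in the verification that $\rMod(\phi)=D^*$. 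The paper invokes Lemmas \ref{lem:oneproj} and \ref{lem:upd-binary} to extract a binary aggregator ($\wedge$ on $I$, $\pr_1^2$ on $J\cup K$), which handles the case $a''<a$ in a single application, and then declares the rest ``exactly like'' Theorem \ref{thm:nonproj} with $B_a:=B_a^1\times B_a^2$. You instead stay with the ternary aggregator throughout: you prove the rectangularity $B_a=(B_a)_{V_1}\times(B_a)_{V_2}$ explicitly, and handle $a''<a$ by two applications of $g$ followed by the rectangle property. This is a real trade: your route avoids two auxiliary lemmas, and, more importantly, your explicit factorisation supplies exactly the step the paper leaves implicit --- the inclusion $\rMod(\phi)\subseteq D^*$ requires knowing that the product $B_a^1\times B_a^2$ equals the true extension set of $a$, which the binary aggregator alone cannot deliver (applying it to $abc'$ and $ab'c$ just returns $abc'$); only the mixed ternary application $g(abc',abc',ab'c)=abc$ does. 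So your write-up is both self-contained and slightly more rigorous at the one point where the paper's proof-by-reference is compressed.
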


We will first need two lemmas.
\begin{lemma}\label{lem:partialaggr}
Let $D\subseteq\{0,1\}^n$ and $I=\{j_1,\ldots,j_t\}\subseteq\{1,\ldots,n\}$. Then, if $F=(f_1,\ldots,f_n)$ is a $k$-ary aggregator for $D$, $(f_{j_1},\ldots,f_{j_t})$ is a $k$-ary aggregator for $D_I$.
\end{lemma}
\begin{proof}
Without loss of generality, assume $I=\{1,\ldots,s\}$, where $s\leq n$ and let $a^1,\ldots,a^k\in D_I$. It follows that there exist $b^1,\ldots,b^k\in D_{-I}$ such that $c^1,\ldots,c^k\in D$, where $c^i=a^ib^i$, $i=1,\ldots k$. Since $F$ is an aggregator for $D$: $$F(c^1,\ldots,c^k):=(f_1(c_1^1,\ldots,c_1^k),\ldots,f_n(c_n^1,\ldots c_n^k))\in D.$$ Thus, $(f_1(c_1^1,\ldots,c_1^k),\ldots,f_s(c_s^1,\ldots,c_s^k))\in D_I$.
\end{proof}
\begin{lemma}\label{lem:upd-binary}
Suppose that $D$ admits a ternary aggregator $F=(f_1,\ldots,f_n)$, where $f_j\in\{\wedge^{(3)},\maj,\oplus\}$, $j=1,\ldots,n$. Then $D$ admits a binary aggregator $G=(g_1,\ldots,g_n)$ such that $g_i=\wedge$, for all $i$ such that $f_i=\wedge^{(3)}$, $g_j=\pr_1^2$, for all $j$ such that $f_j=\maj$ and $g_k=\pr_2^2$, for all $k$ such that $f_k=\oplus$.
\end{lemma}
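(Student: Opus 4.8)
The plan is to exhibit the binary aggregator $G$ explicitly by feeding $F$ a suitable triple of vectors built from the two inputs, and then to read off its components coordinate by coordinate. Given $a,b\in D$, I would set $G(a,b):=F(a,a,b)$; equivalently, I define each component by $g_i(x,y):=f_i(x,x,y)$ for $i\in\n$, so that $G=(g_1,\ldots,g_n)$ is indeed an $n$-tuple of binary Boolean functions and $G(a,b)=(g_1(a_1,b_1),\ldots,g_n(a_n,b_n))=F(a,a,b)$. The verification that $G$ is a (unanimous) aggregator for $D$ is then immediate: whenever $a,b\in D$ we have $a,a,b\in D$, and since $F$ is a ternary aggregator, $F(a,a,b)\in D$, so $G$ maps $D^2$ into $D$; unanimity follows from that of $F$, as $G(b,b)=F(b,b,b)=b$. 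It therefore remains only to identify the components $g_i$, which I would do by a case analysis on the three admissible forms of $f_i$.

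For the component computation I would check each of the three cases directly. If $f_i=\wedge^{(3)}$, then $g_i(x,y)=x\wedge x\wedge y=x\wedge y$, so $g_i=\wedge$. If $f_i=\maj$, then the value $\maj(x,x,y)$ is already forced by the two equal arguments $x$ (at least two of the three inputs equal $x$), so $\maj(x,x,y)=x=\pr_1^2(x,y)$ and hence $g_i=\pr_1^2$. Finally, if $f_i=\oplus$ is the ternary direct sum, then $\oplus(x,x,y)=x\oplus x\oplus y=y=\pr_2^2(x,y)$, using $x\oplus x=0$, so $g_i=\pr_2^2$. These three identities match exactly the components prescribed in the statement of the lemma, which completes the argument.

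The only step that is not purely mechanical is guessing the correct triple, and that is where I would place the emphasis: there is no serious obstacle here beyond the single observation that repeating the \emph{first} argument, feeding $F$ with $(a,a,b)$, simultaneously collapses $\maj$ onto its first coordinate, cancels the doubled argument of $\oplus$ so that it returns its last coordinate, and reduces $\wedge^{(3)}$ to ordinary binary $\wedge$. Since this one substitution behaves correctly and uniformly across all three operation types, I would present it as the crux and then dispatch the verification by the short case analysis above.
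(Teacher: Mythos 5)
Your proposal is correct and matches the paper's own proof exactly: the paper likewise defines $g_j(x,y)=f_j(x,x,y)$ and notes the result is immediate. Your additional verification of the three cases ($\wedge^{(3)}(x,x,y)=x\wedge y$, $\maj(x,x,y)=x$, $\oplus(x,x,y)=y$) and of closure under $D$ is just the spelled-out version of what the paper leaves implicit.
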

\begin{proof}
The result is immediate, by defining $G=(g_1,\ldots,g_n)$ such that: $$g_j(x,y)=f_j(x,x,y),$$ for $j=1,\ldots,n$.
\end{proof}

\begin{proof}[Proof of Theorem \ref{thm:updupic}]
($\Rightarrow$) The proof will closely follow that of Theorem \ref{thm:nonproj}.

Since $D$ is an \lpd, by Theorem \ref{thm:upd-char}, there is a ternary aggregator $F=(f_1,\ldots,f_n)$ such that every component $f_j\in\{\wedge^{(3)},\vee^{(3)},\maj,\oplus\}$, $j=1,\ldots,n$. Again, let $D^*=\{d^*\mid d\in D\}$, where $d^*_j=1-d_j$ if $j$ is such that $f_j=\vee^{(3)}$, and $d^*_j=d_j$ in any other case. Thus, by Lemma \ref{lem:Horn}, $D^*$ admits a ternary aggregator $G=(g_1,\ldots,g_n)$ such that $g_j\in\{\wedge^{(3)},\maj,\oplus\}$, for $j=1,\ldots,n$. Thus, by showing that $D^*$ is described by a \lpic \ $\phi$, we will obtain the same result for $D$ by renaming all the variables $x_j$, where $j$ is such that $f_j=\vee^{(3)}$.

Without loss of generality, assume that $I:=\{i\mid g_i=\wedge^{(3)}\}=\{1,\ldots,s\}$, $J:=\{j\mid g_j=\maj\}=\{s+1,\ldots,t\}$ and $K:=\{k\mid g_k=\oplus\}=\{t+1,\ldots,n\}$, where $0\leq s\leq t\leq n$. Since $D^*_I$ is Horn, there is a Horn formula $\phi_0$ such that $\rMod(D^*_I)=\phi_0$.

If $s=t=n$, we have nothing to prove. Thus, suppose $s<t\leq n$. For each $a=(a_1,\ldots,a_s)\in D^*_I$, let $B_a^1:=\{b\in D^*_J\mid ab\in D^*_{I\cup J}\}$ and $B^2_a:=\{c\in D^*_J\mid ac\in D^*_{I\cup K}\}$ be the sets of partial vectors extending $a$ to the indices of $J$ and $K$ respectively.

\begin{claim}\label{claim:extending}
For each $a\in D^*_I$, $B^1_a$ and $B^2_a$ are bijunctive and affine respectively.
\end{claim}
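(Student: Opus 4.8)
The plan is to reduce the claim to the two classical closure characterizations of Schaefer recalled in the introduction: a Boolean domain is bijunctive if and only if it is closed under the ternary majority operator $\maj$, and it is affine if and only if it is closed under the ternary sum $\oplus$. Thus it suffices to show that $B^1_a$ is closed under $\maj$ and that $B^2_a$ is closed under $\oplus$ (reading the definition of $B^2_a$ as $\{c\in D^*_K\mid ac\in D^*_{I\cup K}\}$, the subscript $J$ in the statement being a typo). The key tool will be Lemma \ref{lem:partialaggr}, which lets me restrict the ternary aggregator $G=(g_1,\ldots,g_n)$ of $D^*$ to any block of coordinates and still obtain an aggregator of the corresponding projection.

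First I would treat $B^1_a$. By Lemma \ref{lem:partialaggr}, the restriction of $G$ to the coordinates in $I\cup J$ is an aggregator of $D^*_{I\cup J}$ whose components are $\wedge^{(3)}$ on $I$ and $\maj$ on $J$. Take any $b,b',b''\in B^1_a$, so that $ab,ab',ab''\in D^*_{I\cup J}$. Applying this restricted aggregator to the three vectors $ab,ab',ab''$, the $I$-block of the output equals $\wedge^{(3)}(a,a,a)=a$ by unanimity, while the $J$-block is $\maj(b,b',b'')$ computed coordinatewise; since the output lies in $D^*_{I\cup J}$, the vector $a\,\maj(b,b',b'')$ is in $D^*_{I\cup J}$, i.e. $\maj(b,b',b'')\in B^1_a$. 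Hence $B^1_a$ is closed under $\maj$ and is therefore bijunctive.

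The case of $B^2_a$ is entirely analogous: I would restrict $G$ to the coordinates in $I\cup K$ to get an aggregator of $D^*_{I\cup K}$ with components $\wedge^{(3)}$ on $I$ and $\oplus$ on $K$, and for $c,c',c''\in B^2_a$ the same computation fixes the $I$-block of the aggregated vector to $a$ (again by unanimity of $\wedge^{(3)}$) while the $K$-block becomes $\oplus(c,c',c'')$, placing $a\,\oplus(c,c',c'')$ in $D^*_{I\cup K}$ and so $\oplus(c,c',c'')\in B^2_a$; closure under $\oplus$ yields that $B^2_a$ is affine. I do not expect a genuine obstacle here. The only point needing care is that all three extensions being aggregated share the \emph{same} $I$-prefix $a$, which is exactly what the definitions of $B^1_a$ and $B^2_a$ guarantee, so that the idempotence of $\wedge^{(3)}$ pins the $I$-block of every aggregated vector back to $a$ and leaves the action confined to the $J$- (resp. $K$-) coordinates.
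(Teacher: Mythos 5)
Your proof is correct and follows essentially the same route as the paper: restricting the ternary aggregator $G$ via Lemma \ref{lem:partialaggr} to $I\cup J$ (resp. $I\cup K$), using idempotence of $\wedge^{(3)}$ to pin the common prefix $a$, and concluding closure under $\maj$ (resp. $\oplus$), hence bijunctivity (resp. affineness) by the classical closure characterizations. You are also right that the subscript $D^*_J$ in the paper's definition of $B^2_a$ is a typo for $D^*_K$.
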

\textit{Proof of Claim:} We will prove the claim for $B_a^1$. The proof for $B_a^2$ is the same.

Let $b^1,b^2,b^3\in B_a^1$. Then $ab^1,ab^2,ab^3\in D^*_{I\cup J}$. Since, by Lemma \ref{lem:partialaggr}, $(g_1,\ldots,g_t)$ is an aggregator for $D^*_{I\cup J}$ and by the definition of $G$, it holds that $ab\in D^*_{I\cup J}$, where $b=\maj(b^1,b^2,b^3)$. Thus, $b\in B_a^1$ and the result follows.\hfill$\Box$

Thus, for each $a\in D^*_{I}$, there is a bijunctive formula $\psi_a$ and an affine $\chi_a$, such that $\rMod(\psi_a)=B_a^1$ and $\rMod(\chi_a)=B_a^2$. Let $I_a:=\{i\in I\mid a_i=1\}$ and define:$$\phi^1_a:=\Bigg(\bigwedge_{i\in I_a}x_i\Bigg)\rightarrow\psi_a$$ and $$\phi^2_a:=\Bigg(\bigwedge_{i\in I_a}x_i\Bigg)\rightarrow\chi_a,$$ for all $a\in D^*_I$.

Consider the formula:$$\phi=\phi_0\wedge\Bigg(\bigwedge_{a\in D^*_I}\phi^1_a\Bigg)\wedge\Bigg(\bigwedge_{a\in D_I^*}\phi^2_a\Bigg).$$ Let $V_0=\{x_i\mid i\in I\}$, $V_1=\{x_j\mid j\in J\}$ and $V_2=\{x_k\mid k\in K\}$. That $\phi$ is partially Horn with admissible set $V_0$, can be seen in the same way as in Theorem \ref{thm:nonproj}. Now, consider $\psi_a$, for some $a\in D^*_I$. Since it is bijunctive, it is of the form:$$\psi_a=\bigwedge_{j=1}^r\Bigg(l_{j_1}\vee l_{j_2}\Bigg),$$ where $l_{j_i}$ are literals of variables from $V_1$ ($l_{j_1}$ and $l_{j_2}$ can be the same). Thus, $\phi^1_a$ is equivalent to:$$\bigwedge_{j=1}^r\Bigg(\Bigg(\bigvee_{i\in I_a}\neg x_i\Bigg)\vee l_{j_1}\vee l_{j_2}\Bigg).$$ Thus, the clauses of $\phi^1_a$ contain at most two literals from $V_1$.

In the analogous way, we can see that the clauses of $\phi_a^2$ are $(V_0,V_1)$-generalized clauses. Finally, by construction, there is no clause in $\phi$ that contains variables both from $V_1$ and $V_2$. It follows that $\phi$ is an \lpic. What remains now is to show that $\rMod(\phi)=D^*$. 

By Lemmas \ref{lem:oneproj} and \ref{lem:upd-binary}, it follows that $D^*$ admits a binary aggregator $H=(h_1,\ldots,h_n)$ such that $h_i=\wedge$, for all $i\in I$ and $h_j=\pr_1^2$, for all $j\in J\cup K$. The proof now is exactly like the one of Theorem \ref{thm:nonproj}, by letting $B_a=\{bc\mid b\in B_a^1\text{ and }c\in B_a^2\}$ and $$\phi_a=\phi_a^1\wedge\phi_a^2.$$

($\Leftarrow$) Let $\psi$ be an \lpic, with $\textrm{Mod}(\psi)=D$. Let $V_0, V_1$ and $V_2$ be subsets of $V$ as in Definition \ref{def:upic}. Let also $\phi$ be the partially Horn formula obtained by $\psi$ by renaming the variables of a subset $V^*\subseteq V_0$. Again, assume $D^*=\{d^*\mid d\in D\}$, where $d^*_j=1-d_j$ if $x_j\in V^*$ and $d^*_i=d_i$ else, for all $d\in D$. By Lemma \ref{lem:Horn}, $\rMod(\phi)=D^*$. Thus, by Theorem \ref{thm:upd-char}, it suffices to prove that $D^*$ is closed under a ternary aggregator $(f_1,\ldots,f_n)$, where $f_i\in\{\wedge^{(3)},\maj,\oplus\}$ for $i=1,\ldots,n$.

Without loss of generality, let $I=\{1,\ldots,s\}$, be the set of indices of the variables in $V_0$, $J=\{s+1\ldots,t\}$ be that of the indices of variables in $V_1$ and $K=\{t+1,\ldots,n\}$ that of the indices of variables in $V_2$. We need to show that if $abc,a'b'c',a''b''c''\in D^*$, where $a,a',a''\in D^*_I$, $b,b',b''\in D^*_J$ and $c,c',c''\in D^*_K$, then $$d:=(\wedge^{(3)}(a,a',a''),\maj(b,b',b''),\oplus(c,c',c''))\in D^*.$$

Let $\phi=\phi_0\wedge \phi_1\wedge\phi_2$, where $\phi_0$ is the conjunction of the clauses containing only variables from $V_0$, $\phi_1$ the conjunction of clauses containing variables from $V_1$ and where $\phi_2$ contains the rest of the clauses of $\phi$. Observe that by the hypothesis, there is no variable appearing both in a clause of $\phi_1$ and $\phi_2$.

By the hypothesis, $\phi_0$ is Horn and thus, since $a,a',a''$ satisfy $\phi_0$, so does $a\wedge a'\wedge a''$.

Now, let $C_r$ be a clause of $\phi_1$. Suppose that there is a literal of a variable $x_i\in V_0$ in $C_r$ that is satisfied by $a$. Since $\phi$ is partially Horn with respect to $V_0$, it must hold that this literal was $\neg x_i$. This means that $a_i=0$ and thus $\wedge^{(3)}(a_i,a'_i,a''_i)=0$. The same holds if $\neg x_i$ is satisfied by $a'$ or $a''$. Thus, $C_r$ is satisfied.  

Now, suppose there is no such literal and that the literals of $C_r$ corresponding to variables of $V_1$ are $l_i$, $l_j$ (again $l_i$ and $l_j$ need not be different). Since $abc,a'b'c',a''b''c''$ satisfy $\phi$, it holds that $(b_i,b_j)$, $(b_i',b_j')$ and $(b''_i,b''_j)$ satisfy $l_i\vee l_j$. Without loss of generality, Assume that $\maj(b_i,b_i',b_i'')=b_i$ and that $b_i$ does not satisfy $l_i$, lest we have nothing to prove (this cannot be the case if $l_i=l_j$). Then, $b_i=b_i'$ or $b_i=b_i''$. Assume the former (again without loss of generality). Then, it must be the case that $b_j,b_j'$ satisfy $l_j$. Thus $b_j=b_j'$ and $\maj(b_j,b_j',b_j'')=b_j$, which satisfies $l_j$. In every case, $C_r$ is satisfied by $d$.

Now, let $C_q$ be a clause of $\phi_2$. Again, if there there is a literal of a variable $x_i\in V_0$ in $C_q$ that is satisfied by $a$, we obtain the required result as in the case of $C_r$. Thus, suppose there is no such literal and that the sub-clause of $C_q$ obtained by deleting the variables of $V_0$ is:$$C'_q=(l_1\oplus\cdots\oplus l_z).$$
Since $abc,a'b'c',a''b''c''$ satisfy $\phi$, it holds that $c,c'$ and $c''$ satisfy $C_q'$. Since $C_q'$ is affine, it holds that $\oplus(c,c',c'')$, and satisfies it.

In all cases, we proved that $d$ satisfies $\phi$ and thus the proof is complete.
\end{proof}

\subsection{Efficient constructions}\label{ssec:efficient} To finish this section, we will use Zanuttini and H\'ebrard's ``unified framework'' \cite{zanuttini2002unified}. Recall the definition of a prime formula (Def. \ref{def:prime}) and consider the following proposition:

\begin{proposition}\label{lemma:prime}
Let $\phi_P$ be a prime formula and $\phi$ be a formula logically equivalent to $\phi_P$. Then:\begin{enumerate}
    \item if $\phi$ is separable, $\phi_P$ is also separable and
    \item if $\phi$ is renamable partially Horn, $\phi_P$ is also renamable partially Horn.
\end{enumerate}\end{proposition}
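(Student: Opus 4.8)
The plan is to treat the two parts separately, in each case exploiting the defining feature of a prime formula: since every clause $C$ of $\phi_P$ is a prime implicate, for each literal $l$ of $C$ the sub-clause $C\setminus\{l\}$ is not implied by $\phi_P$, so there is a model of $\phi_P$ (equivalently of $\phi$, as the two are logically equivalent and hence have the same prime implicates) that satisfies $l$ and falsifies every other literal of $C$. I would combine such ``witness'' models by means of a closure property that $D:=\rMod(\phi)=\rMod(\phi_P)$ inherits from the syntactic shape of $\phi$, and derive a contradiction from any clause of $\phi_P$ violating the desired shape.

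For part (1), recall that a separable $\phi=\phi_1\wedge\phi_2$ over a partition $(V_1,V_2)$ of $V$ satisfies $\rMod(\phi)\approx\rMod(\phi_1)\times\rMod(\phi_2)$, so $D\approx D_{V_1}\times D_{V_2}$. I claim $\phi_P$ is separable for the \emph{same} partition. Suppose some clause $C$ of $\phi_P$ contained variables from both parts; split it as $C=C_1\vee C_2$, where $C_1$ (resp.\ $C_2$) collects the literals on $V_1$- (resp.\ $V_2$-) variables, both nonempty and hence proper sub-clauses of $C$. By primeness $\phi_P\not\models C_1$ and $\phi_P\not\models C_2$, giving $a,b\in D$ that falsify $C_1$ and $C_2$ respectively. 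The product structure of $D$ lets me form the vector $c$ agreeing with $a$ on $V_1$ and with $b$ on $V_2$; then $c\in D_{V_1}\times D_{V_2}=D$, yet $c$ falsifies $C_1$ and $C_2$, hence $C$, contradicting $c\models\phi_P$. Thus no clause of $\phi_P$ crosses the partition, and $\phi_P$ is separable.

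For part (2) I would first reduce to the non-renaming case. Fix $V^*\subseteq V$ whose renaming turns $\phi$ into a partially Horn formula $\psi'$ with admissible set $V_0$. Renaming is a bijection on assignments that sends sub-clauses to sub-clauses and preserves logical implication of a clause, so it preserves both logical equivalence and the property of being prime; hence $\phi_P^{\mathrm{ren}}$ (the same renaming applied to $\phi_P$) is prime and logically equivalent to $\psi'$. It therefore suffices to show that a prime formula $\phi_Q$ with $\rMod(\phi_Q)=\rMod(\psi')$ is partially Horn with admissible set $V_0$; renaming back then exhibits $\phi_P$ as renamable partially Horn. By the $(\Leftarrow)$ direction of Theorem~\ref{thm:nonproj}, $D':=\rMod(\psi')$ is closed under the binary aggregator $g=(g_1,\dots,g_n)$ with $g_i=\wedge$ for $x_i\in V_0$ and $g_j=\pr_1^2$ otherwise. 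Now take any clause $C$ of $\phi_Q$ violating the conditions for admissible set $V_0$: either (i) $\vbl(C)\subseteq V_0$ and $C$ has two positive literals $x_i,x_{i'}$, or (ii) $C$ has a positive $V_0$-literal $x_i$ together with a literal on some variable $x_j\notin V_0$. Using primeness, pick $a\in D'$ satisfying exactly $x_i$ among the literals of $C$, and $b\in D'$ satisfying exactly the second distinguished literal ($x_{i'}$ in case (i), the literal on $x_j$ in case (ii)). A literal-by-literal check then shows $g(a,b)\in D'$ falsifies \emph{every} literal of $C$: on $V_0$-coordinates the $\wedge$ drives each positive literal to $0$ (one of $a,b$ already falsifies it), while each negative $V_0$-literal, falsified by both $a$ and $b$, stays true; on the non-$V_0$ coordinates $\pr_1^2$ merely copies $a$, which already falsifies every such literal. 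This contradicts $g(a,b)\models C$, so no such $C$ exists and $\phi_Q$ is partially Horn with admissible set $V_0$.

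I expect the verification in part (2) that $g(a,b)$ falsifies $C$ in each case to be the main obstacle: it is exactly here that the $\wedge/\pr_1^2$ pattern of $g$ must be matched against the ``at most one positive $V_0$-literal, and none at all in crossing clauses'' shape of admissible clauses, and where one must be sure the two witnesses $a,b$ can be chosen to pin down precisely the intended literals (which is what primeness guarantees). The renaming reduction and the product argument of part (1) are comparatively routine, although I would state explicitly that the partition $(V_1,V_2)$ and the admissible set $V_0$ transfer unchanged from $\phi$ to $\phi_P$, and that variables not occurring in a formula cause no difficulty, since the partitions in question are of the full variable set $V$.
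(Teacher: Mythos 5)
Your proof is correct, but it takes a genuinely different route from the paper's. The paper argues syntactically: it invokes Quine's theorem that the prime implicates of $\phi_P$ can be generated from any logically equivalent formula $\phi$ by repeated resolution and omission of subsumed clauses, and then checks that neither operation can create a clause crossing the separability partition, nor (after the same renaming reduction you use) force an admissible variable of a partially Horn formula to become inadmissible. You instead argue semantically: you convert the syntactic shape of $\phi$ into a closure property of $D=\rMod(\phi)=\rMod(\phi_P)$ --- the product structure $D\approx D_{V_1}\times D_{V_2}$ for part (1), and closure under the aggregator with $\wedge$ on $V_0$ and $\pr_1^2$ elsewhere for part (2) --- then use primeness to extract, for each literal of a putative ``bad'' clause $C$ of $\phi_P$, a model satisfying exactly that literal, and combine two such witnesses to produce a model of $\phi_P$ falsifying $C$. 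Both arguments are sound and of comparable length; yours has the advantage of not needing Quine's resolution-based characterization of prime implicates at all, reusing instead machinery the paper has already built (and since Proposition~\ref{lemma:prime} appears after Theorem~\ref{thm:nonproj}, there is no circularity). Two small presentational caveats: the closure fact you need in part (2) is established inside the \emph{proof} of the $(\Leftarrow)$ direction of Theorem~\ref{thm:nonproj}, not in its statement, so it should be cited as such (or re-derived in two lines, as it is elementary); and your witness construction tacitly uses that deleting one literal from $C$ leaves a non-empty sub-clause, which holds because any violating clause in either of your cases has at least two literals --- worth saying explicitly.
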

\begin{proof} Let $\phi_P$ be a prime formula. Quine \cite{quine1959cores} showed that the prime implicates of  $\phi_P$ can be obtained from any formula $\phi$ logically equivalent to $\phi_P$, by repeated (i) resolution and (ii) omission of the clauses that have sub-clauses already created. Thus, using the procedures (i) and (ii) on $\phi$, we can obtain every clause of $\phi_P$.

If $\phi$ is separable, where $(V',V\setminus V')$ is the partition of its vertex set such that no clause contains variables from both $V'$ and $V\setminus V'$, it is obvious that neither resolution or omission can create a clause that destroys that property. Thus, $\phi_P$ is separable. 

Now, let $\phi$ be a renamable partially Horn formula where, by renaming the variables of $V^*\subseteq V$, we obtain the partially Horn formula $\phi^*$, whose admissible set of variables is $V_0$. Let also $\phi_P^*$ be the formula obtained by renaming the variables of $V^*$ in $\phi_P$. Easily, $\phi_P^*$ is prime.

Observe that the prime implicates of a partially Horn formula, are also partially Horn. Indeed, it is not difficult to observe that neither resolution, nor omission can cause a variable to seize being admissible: suppose $x\in V_0$. Then, the only way that it can appear in an inadmissible set due to resolution is if there is an admissible Horn clause $C$ containing $\neg x, y$, where $y\in V_0$ too and an inadmissible clause $C'$ containing $\neg y$. But then, after using resolution, $x$ appears negatively to the newly obtained clause. Thus, $\phi_P^*$ is partially Horn, which means that $\phi_P$ is renamable partially Horn.\end{proof}

We are now ready to prove our first main result: 
\begin{theorem}\label{thm:main}
There is an algorithm that, on input $D\subseteq\{0,1\}^n$, halts  in time $O(|D|^2n^2)$ and either returns that $D$ is not a possibility domain, or alternatively outputs a possibility integrity constraint $\phi$, containing $O(|D|n)$ clauses, whose set of satisfying truth assignments is $D$.
\end{theorem}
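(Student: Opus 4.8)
The plan is to reduce the whole problem to recognizing the syntactic type of a single canonical formula for $D$. By Theorem \ref{thm:charpd}, $D$ is a possibility domain if and only if it is the set of models of a formula that is separable, renamable partially Horn, or affine. So the first step is to compute, \emph{once}, a prime formula $\phi_P$ logically equivalent to $D$ using the unified framework of Zanuttini and H\'ebrard \cite{zanuttini2002unified}; their construction runs in time $O(|D|^2n^2)$ and yields a formula with $O(|D|n)$ clauses. The payoff of having computed a \emph{prime} formula is Proposition \ref{lemma:prime}: it guarantees that $\phi_P$ inherits separability and renamable partial Hornness from \emph{any} equivalent formula, so that if $D$ admits a separable (resp.\ renamable partially Horn) representation at all, then this particular $\phi_P$ is already separable (resp.\ renamable partially Horn). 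Hence for those two classes it suffices to test the syntactic type of $\phi_P$ itself.

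Concretely, I would proceed as follows. First, compute $\phi_P$ as above. Second, run the linear-time recognizers of Theorem \ref{thm:pic} on $\phi_P$ to decide whether it is separable or renamable partially Horn; since $|\phi_P|=O(|D|n^2)$, this costs $O(|D|n^2)$, and in the positive case it directly returns the data ($V_1,V_2$ or $V^*$) certifying the type. Third, and \emph{separately}, I would test directly on $D$ whether it is affine: fix some $d^0\in D$, compute by Gaussian elimination over $\mathrm{GF}(2)$ the dimension $r$ of the linear span of $\{d-d^0\mid d\in D\}$, and declare $D$ affine exactly when $|D|=2^{r}$; in that case the $n-r$ defining parities form an affine formula with at most $n$ clauses. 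This costs $O(|D|n^2)$ as well. Finally, if any of the three tests succeeds, output the corresponding formula, which is a possibility integrity constraint by Definition \ref{def:posintcon}; otherwise report that $D$ is not a possibility domain.

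For correctness, soundness is immediate: every formula the algorithm can output is a possibility integrity constraint, so by Theorem \ref{thm:charpd} its model set $D$ is a possibility domain. For completeness, suppose $D$ \emph{is} a possibility domain. By Theorem \ref{thm:charpd} it has a separable, renamable partially Horn, or affine representation. In the first two cases, Proposition \ref{lemma:prime} forces $\phi_P$ itself to be separable or renamable partially Horn, so the second step detects it; in the third case the affine test of the third step detects it. Thus the algorithm rejects only genuine impossibility domains. The running time is dominated by the computation of $\phi_P$, giving the claimed $O(|D|^2n^2)$ bound, while the recognizers and the $\mathrm{GF}(2)$ computation are comfortably within budget; the output has $O(|D|n)$ clauses (the affine branch produces at most $n$).

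The main obstacle is precisely that the prime-formula detour is type-preserving only for the first two classes and \emph{not} for affine: a prime CNF of an affine domain need be neither separable nor renamable partially Horn. A concrete witness is the even-parity relation $\{x\in\{0,1\}^3\mid x_1\oplus x_2\oplus x_3=0\}$, which is a possibility domain because it is affine, yet whose prime CNF consists of the four full-width clauses excluding the odd-parity points; a short case analysis over admissible sets and renamings shows this CNF is neither separable nor renamable partially Horn. This is exactly why affineness must be decided directly on $D$ by linear algebra rather than read off the syntactic form of $\phi_P$, and why Proposition \ref{lemma:prime} is stated only for the separable and renamable partially Horn cases. The remaining work is bookkeeping, namely verifying that the Zanuttini--H\'ebrard bounds, the linear recognizers of Theorem \ref{thm:pic}, and the $\mathrm{GF}(2)$ step all fit within the stated $O(|D|^2n^2)$ budget and $O(|D|n)$ clause count.
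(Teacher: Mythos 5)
Your proposal is correct and follows essentially the same route as the paper's proof: compute a prime formula for $D$ via Zanuttini--H\'ebrard \cite{zanuttini2002unified}, run the linear-time recognizers on it for separability and renamable partial Hornness, handle the affine case by a test on $D$ itself rather than on the prime CNF, and reject otherwise, with Proposition \ref{lemma:prime} justifying rejection exactly as in the paper. The only divergence is in the affine branch: the paper invokes Zanuttini--H\'ebrard's Proposition 8 to check affineness and produce the affine formula, whereas you do it by Gaussian elimination over $\mathrm{GF}(2)$ (testing $|D|=2^r$ and outputting the $n-r$ defining parities); both are sound and fit in the $O(|D|^2n^2)$ budget, and your explicit parity counterexample showing why the prime-CNF detour cannot cover the affine case is a point the paper leaves implicit.
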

\begin{proof}
Given a domain $D$, we first use Zanuttini and H\'ebrard's algorithm to check if it is affine \cite[Proposition 8]{zanuttini2002unified}, and if it is, produce, in time $O(|D|^2n^2)$ an affine formula $\phi$ with $O(|D|n)$ clauses, such that $\rMod(\phi)=D$. If it isn't, we use again Zanuttini and H\'ebrard's algorithm \cite{zanuttini2002unified} to produce, in time $O(|D|^2n^2)$, a prime formula $\phi$ with $O(|D|n)$ clauses, such that $\rMod(\phi)=D$. Then, we use the linear algorithms of Proposition \ref{prop:linsep} and Theorem \ref{thm:linearrph} to check if $\phi$ is separable or renamable partially Horn. If it is either of the two, then $\phi$ is a possibility integrity constraint and, by Theorem \ref{thm:charpd}, $D$ is a possibility domain. Else, by Proposition \ref{lemma:prime}, $D$ is not a possibility domain.
\end{proof}

We end this section by proving our second main result, that given an \lpd \ $D$, we can efficiently construct an \lpic \ $\phi$ such that $\rMod(\phi)=D$.
\begin{theorem}\label{thm:main2}
There is an algorithm that, on input $D\subseteq\{0,1\}^n$, halts in time $O(|D|^2n^2)$ and either returns that $D$ is not a local possibility domain, or alternatively outputs a local possibility integrity constraint $\phi$, containing $O(|D|n)$ clauses, whose set of satisfying truth assignments is $D$.
\end{theorem}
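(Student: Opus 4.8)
The plan is to follow the proof of Theorem \ref{thm:main} as closely as possible, substituting the recognition algorithm of Theorem \ref{thm:recupic} for \lpic's in place of the one for possibility integrity constraints, and Theorem \ref{thm:updupic} in place of Theorem \ref{thm:charpd}. On input $D$ I would first run Zanuttini and H\'ebrard's algorithm \cite{zanuttini2002unified} to test whether $D$ is affine or bijunctive; in either case it outputs, in time $O(|D|^2n^2)$ and with $O(|D|n)$ clauses, an affine (respectively bijunctive) formula for $D$, which is already an \lpic \ by Example \ref{ex:upic}. Otherwise I would use the same framework to produce a \emph{prime} formula $\phi$ with $\rMod(\phi)=D$, again in time $O(|D|^2n^2)$ and with $O(|D|n)$ clauses, and then run the linear-time algorithm of Theorem \ref{thm:recupic} on $\phi$. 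If that algorithm reports $\phi$ to be an \lpic, I output $\phi$ and conclude via Theorem \ref{thm:updupic} that $D$ is a local possibility domain; the running time is dominated by the construction of $\phi$, hence $O(|D|^2n^2)$, and the clause count is $O(|D|n)$ as required.

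Correctness of the negative answer reduces to proving the \lpic-analog of Proposition \ref{lemma:prime}: \emph{if $D$ is a local possibility domain, then the prime formula $\phi$ produced above is an \lpic}. Since $D$ is an \lpd, Theorem \ref{thm:updupic} furnishes some \lpic \ $\psi$ with $\rMod(\psi)=D$ and an associated partition $(V_0,V_1,V_2)$. As in Proposition \ref{lemma:prime}, every clause of $\phi$ is a prime implicate of $\psi$ and so, by Quine's theorem \cite{quine1959cores}, is obtainable from the CNF expansion of $\psi$ by repeated resolution and omission; I would argue that these two operations preserve a partition of the same syntactic shape. The $V_0$-part reuses Proposition \ref{lemma:prime}(2) essentially verbatim, since resolution and omission cannot turn an admissible variable inadmissible, so the renamable partially Horn backbone survives. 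The $V_1$-part uses the classical fact that the prime implicates of a bijunctive ($2$-SAT) formula are again bijunctive: resolving two clauses with at most two $V_1$-literals each, on a shared variable, cannot produce a clause with three $V_1$-literals, so condition (2) of Definition \ref{def:upic} is maintained.

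The step I expect to be the main obstacle is the $V_2$-part, together with the requirement that no clause mixes variables from $V_1$ and $V_2$. Affine relations have no canonical bounded CNF prime representation, which is precisely why Zanuttini and H\'ebrard handle the affine case by a separate basis-based routine and why the affine test is run first above. The delicate claim is that, after conditioning on the $V_0$-literals (which by partial Horn-ness occur only negatively outside the admissible clauses), the clauses of $\phi$ touching $V_2$ remain $(V_0,V_2)$-generalized clauses, and that neither resolution nor omission can create a clause spanning both $V_1$ and $V_2$ or break the generalized-clause form over $V_0\cup V_2$. I would establish this by exploiting the ternary aggregator of Theorem \ref{thm:upd-char}: its $\oplus$-components force the $V_2$-coordinates of $D$ to satisfy affine constraints and its block structure keeps the $V_1$- and $V_2$-blocks logically independent, so that the generalized-clause form is an invariant of $D$ itself rather than of any particular representation. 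Combining this closure property with a resolution-invariance analysis analogous to, but more intricate than, that of Proposition \ref{lemma:prime} --- in particular tracking how the XOR-blocks interact with the Horn backbone $V_0$ under resolution --- is the technically heaviest part of the argument.
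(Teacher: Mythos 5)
There is a genuine gap, and it sits exactly where you predicted the ``main obstacle'' would be: your key lemma --- that the prime CNF formula $\phi$ of an \lpd\ $D$ is itself an \lpic\ --- is false, and running the global affine/bijunctive test first does not repair it. Consider $D=\rMod(x_1\oplus x_2\oplus x_3)\times H$, where $H$ is a Horn domain that is neither affine nor bijunctive (e.g. $\rMod(\phi_{11})$ of Example \ref{ex:gendic}). This $D$ is an \lpd: it admits the ternary aggregator whose components are $\oplus$ on the first three coordinates and $\wedge^{(3)}$ on the rest (Theorem \ref{thm:upd-char}). Globally, $D$ is neither affine (because $H$ is not) nor bijunctive (because the XOR block is not), so your first test fails and you pass to the prime CNF formula $\phi$. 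But $\phi$ then contains the four ternary OR-clauses $(x_1\vee x_2\vee x_3)$, $(x_1\vee\neg x_2\vee\neg x_3)$, $(\neg x_1\vee x_2\vee\neg x_3)$, $(\neg x_1\vee\neg x_2\vee x_3)$, which are the prime implicates of the XOR constraint. The variables $x_1,x_2,x_3$ cannot be placed in $V_0$ (the relation $x_1\oplus x_2\oplus x_3$ admits no conjunction polymorphism under any renaming, as all its polymorphisms are linear), cannot be placed in $V_1$ (each of these clauses has three $V_1$-literals, violating condition (2) of Definition \ref{def:upic}), and cannot be placed in $V_2$ (condition (3) requires the $V_2$-literals of a clause to be joined by $\oplus$, not $\vee$). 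So $\phi$ is \emph{not} an \lpic, the recognition algorithm of Theorem \ref{thm:recupic} rightly rejects it, and your algorithm wrongly declares that $D$ is not an \lpd. The obstruction is syntactic and unavoidable: no CNF formula at all can be an \lpic\ for a domain with a non-trivial affine block, because \lpic's with $V_2\neq\emptyset$ require explicit XOR connectives. Hence no resolution-invariance argument, however carefully it tracks the interaction of the XOR blocks with the Horn backbone, can rescue the lemma; the statement to be proved is simply not true of $\phi$.

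This is precisely why the paper's proof does not output the prime formula itself. It computes $\phi$ and the (maximal) admissible set $V_0$ via Theorem \ref{thm:linearrph}, writes $\phi=\phi_0\wedge\phi_1$ with $\phi_0$ over $V_0$, strips the $V_0$-literals from $\phi_1$ to get $\phi_1'$, tests $\phi_1'$ for separability (Proposition \ref{prop:linsep}), and then tests each resulting component for being bijunctive or affine. In the bijunctive case $\phi$ itself serves; in the affine case the paper \emph{rewrites} each clause $(l_1\vee\cdots\vee l_s\vee l_{s+1}\vee\cdots\vee l_t)$ of $\phi_1$ (with $l_1,\ldots,l_s$ over $V_0$) as the generalized clause $(l_1\vee\cdots\vee l_s\vee(l_{s+1}\oplus\cdots\oplus l_t))$, and outputs $\phi_0\wedge A^*(\phi_1)$ --- a different formula from $\phi$, containing XOR connectives. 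The correctness of this OR-to-XOR conversion on a prime formula is exactly what \cite[Proposition 8]{zanuttini2002unified} supplies, playing the role you wanted your false lemma to play. So your overall architecture (prime formula plus recognition) must be replaced by this decompose-and-convert scheme; once the decomposition and the $A^*$ transformation are in place, the remaining steps of your proposal (timing, clause count, and the appeal to Theorem \ref{thm:updupic} for the negative answer) do go through as in the paper.
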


We first briefly discuss some results of Zanuttini and H\'ebrard. By \cite[Proposition 3]{zanuttini2002unified}, we get that a prime formula that is logically equivalent to a bijunctive one, is also bijunctive. Now, for a clause $C=l_1\vee\ldots\vee l_t$, where $l_j$ are literals, $j=1,\ldots,t$, let $E(C)=l_1\oplus\ldots\oplus l_t$. For a CNF formula $\phi=\bigwedge_{j=1}^m C_j$, let $A(\phi)=\bigwedge_{j=1}^m E(C)$. In \cite[Proposition 8]{zanuttini2002unified}, it is proven that if $\phi$ is prime, $\rMod(\phi)=D$ and $D$ is affine, then $\rMod(A(\phi))=D$.

\begin{proof}[Proof of Theorem \ref{thm:main2}]
Given a domain $D$, we first use Zanuttini and H\'ebrard's algorithm \cite{zanuttini2002unified} to produce, in time $O(|D|^2n^2)$, a prime formula $\phi$ with $O(|D|n)$ clauses, such that $\rMod(\phi)=D$. Note that at this point, $\phi$ does not contain any generalized clauses (see below). We then use the linear algorithm of Theorem \ref{thm:linearrph} to produce a set $V_0$ such that $\phi$ is renamable partially Horn with admissible set $V_0$.

If $V_0=V$ we have nothing to prove. Thus, suppose that $\phi=\phi_0\wedge\phi_1$, where $\phi_0$ contains only variables from $V_0$. Let $\phi'_1$ be the sub-formula of $\phi_1$, obtained by deleting all variables of $V_0$ from $\phi$. We use the algorithm of Proposition \ref{prop:linsep} to check if $\phi'_1$ is separable.

Suppose that $\phi'_1$ is not separable. We then check, with Zanuttini and H\'ebrard's algorithm, if $\phi'_1$ is either bijunctive or affine. If it is neither, then $D$ is not an \lpd. If it is bijunctive, then $\phi$ is a $\lpic$. If it is affine, we construct the formula $A^*(\phi_1)$ as follows. For each clause $C=(l_1 \vee \cdots \vee l_s\vee(l_{s+1}\vee\cdots\vee l_t)),$ where $l_1,\ldots,l_s$ are literals of variables in $V_0$, let: $$E^*(C)=(l_1 \vee \cdots \vee l_s\vee(l_{s+1}\oplus\cdots\oplus l_t))$$ and $A^*(\phi_1)=\bigwedge_{j=1}^m E^*(C_j)$. Then, the \lpic \ that describes $D$ is $\phi_0\wedge A^*(\phi_1)$.  

In case $\phi'_1$ is separable, assume that $\phi'_!=\phi'_2\wedge\phi'_3$, where no variable appears in both $\phi'_2$ and $\phi'_3$. Let also $\phi_2$ be $\phi'_2$ with the variables from $V_0$ and respectively for $\phi_3$. We now proceed exactly as with $\phi'_1$, but separately for $\phi'_2$ and $\phi'_3$. If either one is neither bijunctive nor affine, $D$ is not an \lpd. Else, we produce the corresponding \lpic \ as above.
\end{proof}

\section{Other Forms of non-Dictatorial Aggregation}\label{sec:other}
In this section, we discuss four different notions of non-dictatorial aggregation procedures that have been introduced in the field of judgment aggregation: aggregators that are not generalized dictatorships, and anonymous, monotone and StrongDem aggregators. We prove that pic's, lpic’s,  a sub-class of pic’s,  and a subclass of lpic’s, respectively,  describe domains that admit each of the above four kind of aggregators. Then, we consider the property of systematicity and examine how our results change if the aggregators are required to satisfy it.
\subsection{Generalized Dictatorships}\label{ssec:gendict}
We begin by defining generalized dictatorships.
\begin{definition}\label{def:gendict}
Let $F =(f_1,\ldots,f_n)$ be an $n$-tuple of $k$-ary conservative functions. $F$ is a \emph{generalized dictatorship for a domain} $D\subseteq\{0,1\}^n$, if, for any $x^1,\ldots,x^k\in D$, it holds that:
\begin{equation}\label{gendic}F(x^1,\ldots,x^k):=(f_1(x_1),\ldots,f_n(x_n))\in\{x^1,\ldots,x^k\}.\end{equation}
\end{definition}
Much like dictatorial functions, it is straightforward to observe that if $F$ is a generalized dictator for $D$, then it is also an aggregator for $D$.

It should be noted here that in the original definition of Grandi and Endriss \cite{grandi2013lifting}, generalized dictatorships are defined independently of a specific domain. Specifically, condition \eqref{gendic} is required to hold for all $x^1,\ldots,x^k\in\{0,1\}^n$. With this stronger definition, they show that the class of generalized dictators coincides with that of functions that are aggregators for every domain $D\subseteq\{0,1\}^n$.

\begin{remark}\label{remark:gendict}
The difference in the definition of generalized dictatorships comes from a difference in the framework we use. Here, we opt to consider the aggregators \emph{restricted} in the given domain, in the sense that we are not interested in what they do on inputs that are not allowed by it. The implications of this are not very evident in the Boolean framework, especially since we consider aggregators that satisfy IIA, on non-degenerate domains (in fact, this issue will not arise in any other aggregator present in this work, apart from generalized dictatorships). On the other hand, in the non-Boolean framework, using unrestricted aggregators could result in trivial cases of non-dictatorial aggregation, where the aggregator is not a projection only on inputs that are not allowed by the domain.  
\end{remark}

The following example shows that the result of Grandi and Endriss \cite[Theorem $16$]{grandi2013lifting} does not hold in our setting.

\begin{example}\label{ex:gendic}
Consider the Horn formula: $$\phi_{11}=(x_1\vee\neg x_2\vee\neg x_3)\wedge(\neg x_1\vee x_2\vee\neg x_3)\wedge(\neg x_1\vee\neg x_2\vee x_3)\wedge(\neg x_1\vee\neg x_2\vee\neg x_3),$$ whose set of satisfying assignments is:
$$\rMod(\phi_{11})=\{(0,0,0),(0,0,1),(0,1,0),(1,0,0)\}.$$
By definition, $\rMod(\phi_{11})$ is a Horn domain and it thus admits the binary symmetric aggregator $\bar{\wedge}=(\wedge,\wedge,\wedge)$. Furthermore, $\bar{\wedge}$ is not a generalized dictatorship for $\rMod(\phi_{11})$, since $\bar{\wedge}((0,0,1),(0,1,0))=(0,0,0)\notin\{(0,0,1),(0,1,0)\}$.

On the other hand, consider the Horn formula:
$$\phi_{12}=(\neg x_1\vee x_2)\wedge(x_2\vee\neg x_3)\wedge(\neg x_1 \vee\neg x_2\vee x_3).$$ $\bar{\wedge}$ is again an aggregator for the Horn domain: $$\rMod(\phi_{12})=\{(0,0,0),(0,1,0),(0,1,1),(1,1,1)\},$$ but, contrary to the previous case, $\bar{\wedge}$ is a generalized dictatorship for $\rMod(\phi_{12})$, since it is easy to verify that for any $x,y\in D'$, $\bar{\wedge}(x,y)\in\{x,y\}$.

Finally, observe that $(\wedge,\vee,\vee)$ is an aggregator for $\rMod(\phi_{12})$ that is not a generalized dictatorship. The latter claim follows from the fact that: $$(\wedge,\vee,\vee)((0,1,0),(1,1,1))=(0,1,1)\notin\{(0,1,0),(1,1,1)\},$$ while the former is left to the reader. Thus, interestingly enough, $\phi_{12}$ describes a domain admitting an aggregator that is not a generalized dictatorship, although it is not the aggregator that ``corresponds'' to the formula.\hfill$\diamond$ 
\end{example}

It is easy to see that $(\pr_i^k,\ldots,\pr_i^k)$ is a generalized dictatorship of any $D\subseteq\{0,1\}^n$, for all $k\geq 1$ and for all $i\in\{1,\ldots,k\}$. Thus, trivially, every domain admits aggregators which are generalized dictatorships. On the other hand, every domain $D\subseteq\{0,1\}^n$ containing only two elements (a domain cannot contain less than two due to non-degeneracy) admits only generalized dictatorships. Indeed, assume $D=\{x,y\}$, $x\neq y$ and let $F$ be a $k$-ary aggregator for $D$. Obviously, $F(x,\ldots,x)=x$ and $F(y,\ldots,y)=y$, since $F$ is unanimous. Also, $F(x,y)\in\{x,y\}=D$ since $F$ is an aggregator.

Our aim is again to find a syntactic characterization for domains that admit aggregators which are not generalized dictatorships. The following result shows that these domains are all the possibility domains with at least three elements, and are thus characterized by possibility integrity constraints.

\begin{theorem}\label{thm:gendict}
A domain $D\subseteq\{0,1\}^n$, with at least three elements, admits an aggregator that is not a generalized dictatorship if and only if it is a possibility domain.
\end{theorem}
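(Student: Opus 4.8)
The plan is to prove both directions by leveraging the characterization of possibility domains already at hand (Theorem \ref{thm:DH}) together with the three canonical aggregator types it supplies. For the forward direction, suppose $D$ admits an aggregator $F$ that is not a generalized dictatorship. Since every generalized dictatorship is in particular an aggregator, and since projection aggregators $(\pr_d^k,\ldots,\pr_d^k)$ are always generalized dictatorships, the failure of $F$ to be a generalized dictatorship must stem from the existence of input vectors $x^1,\ldots,x^k \in D$ with $F(x^1,\ldots,x^k)\notin\{x^1,\ldots,x^k\}$. The key observation is that a dictatorial aggregator is automatically a generalized dictatorship (its output is literally one of the input rows), so $F$ being non-generalized-dictatorial forces $F$ to be non-dictatorial, whence $D$ is a possibility domain by definition.

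For the converse, which I expect to be the substantive direction, I would start from a possibility domain $D$ with $|D|\geq 3$ and use Theorem \ref{thm:DH} to split into its three cases. In each case I must exhibit \emph{some} aggregator that is not a generalized dictatorship; Example \ref{ex:gendic} already warns that the ``natural'' aggregator attached to a domain (e.g.\ $\bar\wedge$ for a Horn domain) can itself be a generalized dictatorship, so the argument must be more careful than simply invoking non-dictatoriality. If $D$ admits a ternary affine aggregator $\bar\oplus=(\oplus,\ldots,\oplus)$, I would pick three elements $x,y,z\in D$ (available since $|D|\geq 3$) and argue that $\oplus(x,y,z)$, being the coordinatewise XOR of three vectors, need not coincide with any of $x,y,z$; here the non-degeneracy and the size-three hypothesis should let me choose a witnessing triple, or else show $D$ is so small it collapses to an affine structure forcing the output outside $\{x,y,z\}$. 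If instead $D$ admits a binary non-dictatorial projection aggregator, then by Lemma \ref{app-lem:cart} we have $D\approx D_I\times D_J$ with $(I,J)$ a nontrivial partition, and I would use the product structure to manufacture two inputs whose componentwise selection yields a genuinely new vector, exploiting that both factors have at least two elements.

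The heart of the proof, and the main obstacle, is the remaining case where $D$ admits a binary non-projection aggregator, i.e.\ one with at least one symmetric component ($\wedge$ or $\vee$). Here I would combine such a symmetric aggregator with the freedom granted by $|D|\geq 3$ to break the generalized-dictatorship property. The difficulty is precisely the phenomenon in Example \ref{ex:gendic}: a single symmetric aggregator may be a generalized dictatorship on a small domain. My plan is to show that whenever $D$ has at least three elements, one can either find a triple of inputs on which the symmetric aggregator escapes $\{x^1,\ldots,x^k\}$, or else modify the aggregator (e.g.\ by varying one symmetric component, or by composing as in Lemma \ref{lem:oneproj}) to produce a new aggregator that is provably not a generalized dictatorship. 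I would likely argue by contradiction: assume \emph{every} aggregator of $D$ is a generalized dictatorship, and derive that $D$ can have at most two elements, contradicting $|D|\geq 3$. This reduction---ruling out the degenerate possibility that all admitted aggregators are generalized dictatorships---is where the careful case analysis and the exploitation of the symmetric component's behavior on a third distinct vector will be concentrated.
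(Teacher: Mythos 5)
Your forward direction matches the paper's and is correct, and your sketches for the two easy cases of the converse also go through: in the affine case no hedging is actually needed, since for \emph{any} pairwise distinct $x,y,z\in D$, choosing $j$ with $y_j\neq z_j$ gives $\oplus(x_j,y_j,z_j)\neq x_j$, and by symmetry the output of the minority aggregator also differs from $y$ and from $z$; in the product case $F(ab,a'b')=ab'\notin\{ab,a'b'\}$ once $a\neq a'$ and $b\neq b'$, which non-degeneracy supplies.

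The genuine gap is the case you yourself single out as the heart, and your plan for it would not close. When the binary non-dictatorial aggregator at hand is symmetric (all components in $\{\wedge,\vee\}$, e.g. the Horn domain $\rMod(\phi_{12})$ of Example \ref{ex:gendic}), you propose to assume every aggregator of $D$ is a generalized dictatorship and ``derive $|D|\leq 2$''. But none of the tools you name can produce that contradiction: composing as in Lemma \ref{lem:oneproj} only homogenizes projection components and is vacuous when all components are symmetric, and ``varying one symmetric component'' gives no reason to preserve aggregatorhood, let alone to break the generalized-dictatorship property. What is needed is a specific construction, and it is the one idea missing from your proposal. The paper argues as follows: after renaming via Lemma \ref{lem:Horn}, one may assume $\bar{\wedge}$ is a generalized dictatorship for the renamed domain $D^*$; this forces the coordinatewise order $\leq$ to be a \emph{total order} on $D^*$ (two incomparable elements would make $\bar{\wedge}$ output a vector distinct from both). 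Writing $D^*=\{d^1\leq\cdots\leq d^N\}$, let $I$ be the set of coordinates equal to $1$ in $d^N$ and to $0$ in every other element; then $I\neq\emptyset$ (else $d^{N-1}=d^N$) and $I\neq\{1,\ldots,n\}$ (this is exactly where $|D|\geq 3$ enters), and the aggregator $G$ with $\wedge$ on $I$ and $\vee$ elsewhere is an aggregator for $D^*$ that maps $(d^s,d^N)$ to $d^{N-1}$, which lies outside $\{d^s,d^N\}$ whenever $s<N-1$; pulling the renaming back finishes the proof. Note finally that the intermediate statement you hoped to reach directly --- ``$\bar{\wedge}$ a generalized dictatorship implies $|D|\leq 2$'' --- is simply false ($\rMod(\phi_{12})$ has four elements), so no argument can shortcut the construction of a second, differently mixed symmetric aggregator tailored to the total order.
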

\begin{proof}
The forward direction is obtained by the trivial fact that an aggregator that is not a generalized dictatorship is also non-dictatorial.

Now, suppose that $D$ is a possibility domain. Then it is either affine or it admits a binary non-dictatorial aggregator. We begin with the affine case. It is a known result that $D\subseteq\{0,1\}^n$ is affine if and only if it is closed under $\oplus$, or, equivalently, if it admits the minority aggregator: $$\boplus=(\underbrace{\oplus,\ldots,\oplus}_{n\text{-times}})$$

\begin{claim}
Let $D\subseteq\{0,1\}^n$ be an affine domain. Then, the \emph{minority} aggregator: $$\boplus=(\underbrace{\oplus,\ldots,\oplus}_{n\text{-times}})$$ is not a generalized dictatorship for $D$.    
\end{claim}
\begin{proof}
 Let $x,y,z\in D$ be three pairwise distinct vectors. Since $y\neq z$, there exists a $j\in\{1,\ldots,n\}$ such that $y_j\neq z_j$. It follows that $y_j+z_j\equiv 1(\mod 2)$. This means that $\oplus(x_j,y_j,z_j)\neq x_j$ and thus that $\boplus(x,y,z)\neq x$. In the same way we show that $\bar{\oplus}(x,y,z)\notin\{x,y,z\}$, which is a contradiction, since $\bar{\oplus}$ is an aggregator for $D$. 
\end{proof}

Recall that the only binary unanimous functions are $\wedge,\vee,\pr_1^2,\pr_2^2$.

\begin{claim}\label{lem:non-symmetric}
Suppose $D\subseteq\{0,1\}^n$ admits a binary non-dictatorial non-symmetric aggregator $F=(f_1,\ldots,f_n)$. Then $F$ is not a generalized dictatorship.
\end{claim}

\begin{proof}
Assume, to obtain a contradiction, that $F$ is a generalized dictatorship for $D$ and let $x,y\in D$. Then, $F(x,y):=z\in\{x,y\}$. Assume that $z=x$. The case where $z=y$ is analogous.

Let $J\subseteq\n$ such that $f_j$ is symmetric, for all $j\in J$ and $f_j$ is a projection otherwise. Note that $J\neq\n$. Let also $I\subseteq\n\setminus J$, such that $f_i=\pr_2^2$, for all $i\in I$ and $f_i=\pr_1^2$ otherwise. If $I\neq\emptyset$, then, for all $i\in I$, it holds that: $$y_i=pr_2^2(x_i,y_i)=f_i(x_i,y_i)=z_i=x_i.$$ Since $x,y$ were arbitrary, it follows that $D_i=\{x_i\}$, for all $i\in I$. Contradiction, since $D$ is non-degenerate.

If $I=\emptyset$, then $f_j=\pr_1^2$, for all $j\notin J$. Note that in that case, $J\neq\emptyset$, lest $F$ is dictatorial. Now, consider $F(y,x):=w\in\{x,y\}$ since $F$ is a generalized dictatorship. By the definition of $F$, $w_j=z_j=x_j$, for all $j\in J$, and $w_i=y_i$, for all $i\notin J$. Thus, if $w=x$, $D$ is degenerate on $\n\setminus J$, whereas if $w=y$, $D$ is degenerate on $J$. In both cases, we obtain a contradiction.
\end{proof}

The only case left is when $D\subseteq\{0,1\}^n$ admits a binary symmetric aggregator. Contrary to the previous case, where we showed that the respective non-dictatorial aggregators could not be generalized dictatorships, here we cannot argue this way, as Example \ref{ex:gendic} attests. Interestingly enough, we show that as in Example \ref{ex:gendic}, we can always find some symmetric aggregator for such a domain that is not a generalized dictatorship.

\begin{claim}\label{lem:symmetric}
Suppose $D\subseteq\{0,1\}^n$ admits a binary non-dictatorial symmetric aggregator $F=(f_1,\ldots,f_n)$. Then, there is a binary symmetric aggregator $G=(g_1,\ldots,g_n)$ for $D$ ($G$ can be different from $F$) that is not a generalized dictatorship for $D$.
\end{claim}
\begin{proof}
If $F$ is not a generalized dictatorship for $D$, we have nothing to prove. Suppose it is and let $J\subseteq\n$, such that $f_j=\vee$, for all $j\in J$ and $f_i=\wedge$ for all $i\notin J$ ($J$ can be both empty or $\n$).

Let $D^*=\{d^*=(d_1^*,\ldots,d_n^*)\mid d=(d_1,\ldots,d_n)\in D\}$, where: $$d_j^*=\begin{cases}1-d_j &\mbox{ if }j\in J\\
d_j&\mbox{ else.}\end{cases}$$

By Lemma \ref{lem:Horn}, $H=(h_1,\ldots,h_n)$ is a symmetric aggregator for $D$ if and only if $H^*=(h_1^*,\ldots,h_n^*)$ is an aggregator for $D^*$, where $h_j^*=h_j$, for all $j\notin J$ and, for all $j\in J$, if $h_j=\vee$, then $h_j^*=\wedge$ and vice-versa. As expected, the property of being a generalized dictatorship carries on this transformation.

\begin{claim}\label{claim:DtoD*}
$H$ is a generalized dictatorship for $D$ if and only if $H^*$ is a generalized dictatorship for $D^*$.
\end{claim}
\begin{proof}
Let $x=(x_1,\ldots,x_n)$, $y=(y_1,\ldots,y_n)\in D$ and $z:=H(x,y)$. Since $\vee(x_j,y_j)=1-\wedge(1-x_j,1-y_j)$ and $\wedge(x_j,y_j)=1-\vee(1-x_j,1-y_j)$, it holds that $z_j=h^*_j(x_j^*,y_j^*)$, for all $j\notin J$, and $1-z_j=h_j^*(x^*_j,y^*_j)$, for all $j\in J$. Thus, $z^*=H^*(x^*,y^*)$. It follows that $z\in\{x,y\}$ if and only if $z^*\in\{x^*,y^*\}$.
\end{proof}

Now, since $D$ admits the generalized dictatorship $F$, it follows that $D^*$ admits the binary aggregator $\bar{\wedge}=(\underbrace{\wedge,\ldots,\wedge)}_{n\text{-times}}$, that is also a generalized dictatorship. Our aim is to show that $D^*$ admits a symmetric aggregator that is not a generalized dictatorship. The result will then follow by Claim \ref{claim:DtoD*}.

For two elements $x^*,y^*\in D^*$, we write $x^*\leq y^*$ if, for all $j\in\n$ such that $x^*_j=1$, it holds that $y^*_j=1$.

\begin{claim}\label{claim:order}
$\leq$ is a \emph{total ordering} for $D^*$.
\end{claim}
\begin{proof}
To obtain a contradiction, let $x^*,y^*\in D^*$ such that neither $x^*\leq y^*$ nor $y^*\leq x^*$. Thus, there exist $i,j\in\n$, such that $x_i^*=1$, $y_i^*=0$, $x_j^*=0$ and $y_j^*=1$. Thus: 
$$\wedge(x_i^*,y_i^*)=\wedge(x_j^*,y_j^*)=0.$$ Then, $\bar{\wedge}(x^*,y^*)\notin\{x^*,y^*\}$. Contradiction, since $\bar{\wedge}$ is a generalized dictatorship.
\end{proof}
Thus, we can write $D^*=\{d^1,\ldots,d^N\}$, where $d^s\leq d^t$ if and only if $s\leq t$. Let $I\subseteq\n$ be such that, for all $j\in I$: $d_j^s=0$ for $s=1,\ldots,N-1$, and $d_j^N=1$. Observe that $I$ cannot be empty, lest $d^N=d^{N-1}$ and that $I\neq\n$, since $|D|\geq 3$. Let now $G=(g_1,\ldots,g_n)$ such that $g_j=\wedge$, for all $j\in I$ and $g_j=\vee$, for all $j\notin I$.

$G$ is an aggregator for $D^*$. Indeed, let $d^s,d^t\in D^*$ with $s\leq t\leq N-1$. Then, for all $j\notin I$: $$g_j(d_j^s,d_j^t)=\vee(d_j^s,d_j^t)=d_j^t.$$ Also for all $j\in I$: $$g_j(d_j^s,d_j^t)=\wedge(d_j^s,d_j^t)=0=d_j^t.$$ Thus, $G(d^s,d^t)=d^t\in D^*$. Finally, consider $G(d^s,d^N)$. Again, $g_j(d_j^s,d_j^N)=\wedge(d_j^s,d_j^N)=0$ for all $j\in I$ and $g_j(d_j^s,d_j^N)=\vee(d_j^s,d_j^N)=d_j^N$, for all $j\notin I$. By definition of $I$, $G(d^s,d^N)=d^{N-1}\in D^*$. This, last point shows also that $G$ is not a generalized dictatorship, since, for any $s\neq N-1$, $d^{N-1}\notin \{d^s,d^N\}$.
\end{proof}
This completes the proof of Theorem \ref{thm:gendict}.
\end{proof}
By Theorems \ref{thm:charpd} and \ref{thm:gendict} we obtain the following result. 

\begin{corollary}\label{cor:gendict}
A domain $D\subseteq\{0,1\}^n$, with at least three elements, admits an aggregator that is not a generalized dictatorship if and only if there exists a possibility integrity constraint whose set of models equals $D$.
\end{corollary}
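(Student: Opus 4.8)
The plan is to obtain this corollary with essentially no new work, by simply composing the two biconditionals already proved, namely Theorem \ref{thm:gendict} and Theorem \ref{thm:charpd}. The hypothesis that $D$ has at least three elements is precisely the standing assumption of Theorem \ref{thm:gendict}, and it is genuinely needed: as observed in the discussion preceding the corollary, any domain with exactly two elements admits \emph{only} generalized dictatorships (every unanimous binary $F$ must send the two distinct vectors to one of themselves), so without the size restriction the left-hand side of the equivalence would be empty while the right-hand side could still hold.

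Concretely, I would fix a non-degenerate domain $D\subseteq\{0,1\}^n$ with $|D|\geq 3$ and reason in two steps. First, by Theorem \ref{thm:gendict}, $D$ admits an aggregator that is not a generalized dictatorship if and only if $D$ is a possibility domain. Second, by Theorem \ref{thm:charpd}, $D$ is a possibility domain if and only if there is a possibility integrity constraint $\phi$ with $\rMod(\phi)=D$. Chaining these two equivalences gives exactly the asserted characterization. Both implications go through in each direction because both cited statements are themselves biconditionals, so the forward and backward directions of the corollary are inherited directly.

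Since the two ingredients are already full equivalences, there is no real obstacle to surmount at this stage; the substantive content has been discharged earlier. The load-bearing step is Theorem \ref{thm:gendict}, whose proof splits into the affine case (the minority aggregator $\boplus$ cannot be a generalized dictatorship), the binary non-symmetric case (Claim \ref{lem:non-symmetric}), and the binary symmetric case (Claim \ref{lem:symmetric}, which exploits the total ordering of the transformed domain $D^*$ to build an explicit symmetric aggregator that fails to be a generalized dictatorship). The only remark worth spelling out in the write-up is that the two directions are genuinely constructive and match up: when $D$ is a possibility domain, Theorem \ref{thm:charpd} yields a concrete possibility integrity constraint describing it, while Theorem \ref{thm:gendict} yields a concrete aggregator that is not a generalized dictatorship, and the corollary merely records that these two properties are co-extensive on domains of size at least three.
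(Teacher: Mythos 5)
Your proposal is correct and is exactly the paper's own argument: the paper derives Corollary \ref{cor:gendict} in one line by composing Theorem \ref{thm:gendict} with Theorem \ref{thm:charpd}, just as you do. Your added remarks on why the three-element hypothesis is needed and on the structure of Theorem \ref{thm:gendict}'s proof are accurate but not part of the corollary's proof itself.
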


\begin{remark}\label{problem:gendict}
What about knowing if a possibility integrity constraint really describes a domain that admits an aggregator that is not a generalized dictatorship? For the requirement of having a non-degenerate domain, the situation is the same as in Remark \ref{problem}. For the requirement of the domain having at least three elements, given that it is non-degenerate, it is easy to see that such domains can only arise as the truth sets of possibility integrity constraints that are Horn, renamable Horn or affine. In all these cases, Creignou and H\'ebrard \cite{creignou1997generating} have devised \emph{polynomial-delay} algorithms that generate all the solutions of such formulas, which can easily be implemented to terminate if they find more than two solutions.    
\end{remark}

\subsection{Preliminaries for Anonymous, Monotone and StrongDem Aggregators}\label{ssec:threetypes}
Our final results concern three kinds of non-dictatorial aggregators, whose properties are based on the majority aggregator.

\begin{definition}\label{def:types}
Let $D\subseteq\{0,1\}^n$. A $k$-ary aggregator $F=(f_1,\ldots,f_n)$ for $D$ is:\begin{enumerate}
    \item Anonymous, if it holds that for all $j\in\n$ and for any permutation $p:\{1,\ldots,k\}\mapsto\{1,\ldots,k\}$:$$f_j(a_1,\ldots,a_k)=f_j(a_{p(1)},\ldots,a_{p(k)}),$$ for all $a_1,\ldots,a_k\in\{0,1\}$.
    \item Monotone, if it holds that for all $j\in\n$ and for all $i\in\{1,\ldots,k\}$: $$f_j(a_1,\ldots,a_{i-1},0,a_{i+1},\ldots,a_k)=1\Rightarrow f_j(a_1,\ldots,a_{i-1},1,a_{i+1},\ldots,a_k)=1.$$
    \item StrongDem, if it holds that for all $j\in\n$ and for all $i\in\{1,\ldots,k\}$, there exist $a_1,\ldots,a_{i-1},a_{i+1},\ldots,a_k\in\{0,1\}$: $$f_j(a_1,\ldots,a_{i-1},0,a_{i+1},\ldots,a_k)=f_j(a_1,\ldots,a_{i-1},1,a_{i+1},\ldots,a_k).$$ 
\end{enumerate}
\end{definition}
Anonymous aggregators ensure that all the voters are treated equally, while monotone that if more voters agree with the aggregator's result, then the outcome does not change. From a Social Theoretic point of view, Nehring and Puppe \cite{nehring2010abstract} have argued that ``For Arrowian (i.e. independent) aggregators, monotonicity is extremely natural, and it is hard to see how non-monotone Arrowian aggregators could be of interest in practice.'' StrongDem aggregators were introudced by Szegedy and Xu \cite{szegedy2015impossibility}. The idea here is that there is a way to fix the votes of any $k-1$ voters such that the remaining voter cannot change the outcome of the aggregation. Apart from the interest these aggregators have for Judgement Aggregation, Szegedy and Xu show that they have strong algebraic properties, as they relate to a property of functions called \emph{strong resilience} (see again \cite{kun2016new,szegedy2015impossibility}). 

Notationally, since these properties are defined for each component of an aggregator, we will say that a Boolean function $f$ is anonymous or monotone if it satisfies property $1$ or $2$ of Definition \ref{def:types} respectively. A Boolean function $f$ that satisfies property $3$ of Definition \ref{def:types} has appeared in the bibliography under the name of $1$-\emph{immune} (see \cite{kun2016new}).  
The first immediate consequence of Definition \ref{def:types}, is that an anonymous or StrongDem aggregator is non-dictatorial. On the other hand, projection and binary symmetric functions are easily monotone, thus every dictatorial and every binary aggregator is monotone. Furthermore, since projections are neither anonymous nor $1$-immune and by Theorem \ref{thm:DH} it is straightforward to observe the following results.
\begin{corollary}\label{cor:firstimplications}
Any possibility domain $D$ either admits a monotone non-dictatorial aggregator or an anonymous one. Furthermore, a domain $D$ admitting an anonymous or StrongDem aggregator is a local possibility domain.
\end{corollary}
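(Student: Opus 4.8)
The plan is to treat the two assertions separately, each by a direct appeal to a structural result already in hand: the first via the trichotomy of Theorem~\ref{thm:DH}, the second via the mere observation that anonymous and StrongDem aggregators are locally non-dictatorial, so that the conclusion is immediate from Definition~\ref{def:upd}. In both cases the work is to match the syntactic shape of the guaranteed aggregator to the defining property being asserted.

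\textbf{First assertion.} Let $D$ be a possibility domain, so that by Theorem~\ref{thm:DH} it falls into (at least) one of three cases. In cases (i) and (ii) $D$ admits a binary aggregator $F$, and I would use the fact recorded earlier that the only binary unanimous functions are $\wedge,\vee,\pr_1^2,\pr_2^2$, each of which is monotone; hence every binary aggregator is monotone. It remains to see that $F$ is non-dictatorial: in case (i) this is part of the hypothesis, while in case (ii) at least one component is symmetric (equal to $\wedge$ or $\vee$), so $F$ is not a projection aggregator and \emph{a fortiori} not dictatorial. Thus in cases (i) and (ii) $D$ admits a monotone non-dictatorial aggregator. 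In case (iii) $D$ admits the ternary aggregator $(\oplus,\ldots,\oplus)$; since $\oplus(x,y,z)=x+y+z \bmod 2$ is invariant under every permutation of its three arguments, this aggregator is anonymous. Hence a possibility domain always admits a monotone non-dictatorial aggregator or an anonymous one.

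\textbf{Second assertion.} Suppose $D$ admits an anonymous aggregator $F=(f_1,\ldots,f_n)$ of arity $k\ge 2$. I would argue that no component $f_j$ can be a projection: if $f_j=\pr_d^k$, then evaluating on the input that is $1$ in coordinate $d$ and $0$ elsewhere gives value $1$, whereas any permutation moving a $0$-coordinate into the read position $d$ produces value $0$, contradicting anonymity. Hence every $f_j$ differs from every projection, so $F$ is locally non-dictatorial and $D$ is a local possibility domain by Definition~\ref{def:upd}. The StrongDem case is identical in spirit: property~(3) of Definition~\ref{def:types} requires that for each coordinate $i$ some fixing of the remaining inputs renders $f_j$ independent of its $i$-th argument, and a projection $\pr_d^k$ fails this for $i=d$, since it equals its $d$-th argument no matter how the others are fixed. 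So once more every component is a non-projection, $F$ is locally non-dictatorial, and $D$ is a local possibility domain.

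\textbf{Main obstacle.} There is little genuine difficulty here, and I expect the proof to be short; the only point requiring real care is that the affine case cannot be absorbed into the monotone alternative, because the minority function $\oplus$ is not monotone (for instance $\oplus(1,0,0)=1$ but $\oplus(1,1,0)=0$). This is precisely why the statement offers the disjunction ``monotone non-dictatorial \emph{or} anonymous'' rather than a single uniform conclusion, and getting this dichotomy stated correctly is the subtle part. I would also verify throughout that the standing assumption of arity at least $2$ is in force, since for $k=1$ the notions of projection, anonymity and $1$-immunity all degenerate and the non-projection arguments above would no longer apply.
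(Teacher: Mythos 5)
Your proposal is correct and takes essentially the same route as the paper: the paper derives the first assertion from Theorem~\ref{thm:DH} together with the observation that all binary unanimous functions are monotone while $\bar{\oplus}$ is anonymous, and the second from the fact that projections are neither anonymous nor $1$-immune, so such aggregators are locally non-dictatorial. Your write-up merely fills in the details (including the arity-$\ge 2$ caveat and the non-monotonicity of $\oplus$) that the paper leaves as "straightforward to observe."
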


Regardless of Corollary \ref{cor:firstimplications}, we can find non-dictatorial aggregators that are neither anonymous, nor monotone, nor StrongDem. An easy such example is an aggregator with at least one component being $\pr_1^3$ and another being $\oplus$, since $\pr_1^3$ is not anonymous,  $\oplus$ is not monotone and neither of the two is 1-immune. Corollary \ref{cor:firstimplications} implies that a domain admitting such an aggregator, admits also another that is monotone or anonymous.

We proceed now with some examples that highlight the various connections between these types of aggregators. 
\begin{example}\label{ex:types}
Any renamable Horn or bijunctive formula describes a domain admitting a symmetric or majority aggregator respectively. Such aggregators are anonymous, monotone and StrongDem. For a more complicated example, consider the formula $$\phi_{13}=(\neg x_1\vee x_2)\wedge(x_2\vee x_3\vee x_4),$$ whose set of satisfying assignments is the local possibility domain:$$\rMod(\phi_{13})=\{0,1\}^4\setminus\Big((\{(1,0)\}\times\{0,1\}^2)\cup\{(0,0,0,0)\}\Big).$$
It is straightforward to check that $\rMod(\phi_{13})$ admits the anonymous, monotone and StrongDem aggregator $(\wedge^{(3)},\vee^{(3)},\maj,\maj)$.

On the other hand, consider the affine formula $$\phi_{14}=x_1\oplus x_2\oplus x_3,$$ where: $$\rMod(\phi_{14})=\{(0,0,1),(0,1,0),(1,0,0),(1,1,1)\}.$$ It can be proven (by a combination of results by Dokow and Holzman \cite[Example $3$]{dokow2010aggregation} and Kirousis et al. \cite[Example $4.5$]{kirousis2017aggregation}) that $\rMod(\phi_{14})$ does not admit any monotone or StrongDem aggregators. On the other hand, it does admit the anonymous aggregator $\bar{\oplus}=(\oplus,\oplus,\oplus)$.

Recall that in Example \ref{ex:possdom}, we argued that the set of satisfying assignments of the formula: $\phi_7=(\neg x_1\vee x_2 \vee x_3)\wedge(x_1\vee\neg x_2\vee\neg x_3)$ is the impossibility domain $\rMod(\phi_7)=\{0,1\}^3\setminus\{(1,0,0),(0,1,1)\}.$
Consider also, from the same example, the formula $\phi_9=(\neg x_1\vee x_2 \vee x_3)\wedge(x_1\vee\neg x_2\vee\neg x_3)\wedge(\neg x_4\vee x_5 \vee x_6)\wedge(x_4\vee\neg x_5\vee\neg x_6),$ whose set of satisfying assignments is:
$\rMod(\phi_9)=\rMod(\phi_7)\times\rMod(\phi_7).$
$\rMod(\phi_9)$ is a possibility domain admitting the monotone aggregator $(\pr_1^2,\pr_1^2,\pr_1^2,\pr_2^2,\pr_2^2,\pr_2^2).$ On the other hand, $\rMod(\phi_9)$ admits neither anonymous, nor StrongDem aggregators, as it is not a local possibility domain.\hfill$\diamond$\end{example}

We now provide examples of StrongDem aggregators that are either not anonymous or not monotone. Domains admitting such aggregators can be proven to also admit aggregators that are anonymous and monotone (see Theorem \ref{thm:monotone,StrongDem} below). 
\begin{example}\label{ex:aggrtypes}
Let $F=\bar{f}$, where $f$ is a ternary operation defined as follows:\begin{align*}
 f(0,0,0)=f(0,0,1)= & f(0,1,1)=f(1,0,1)=0, \\
 f(0,1,0)=f(1,0,0)= & f(1,1,0)=f(1,1,1)=1.
\end{align*}
Obviously, $\bar{f}$ is neither anonymous nor monotone, since e.g. $f(0,0,1)\neq f(0,1,0)$ and $f(0,1,0)=1$, whereas $f(0,1,1)=0$. On the other hand, $\bar{f}$ is StrongDem. Indeed, for each component of $\bar{f}$, it holds that: $$f(x,0,1)=f(0,x,1)=f(0,0,x)=0,$$ for all $x\in\{0,1\}$.

Now, consider $G=\bar{g}$ where $g$ is a ternary operation defined as follows:\begin{align*}
 g(0,0,0)=g(0,0,1)=g(0,1,0) & =g(1,0,0)=g(1,1,0)=0, \\
 g(0,1,1)=g(1,0,1)= & g(1,1,1)=1.
\end{align*}
Again, $\bar{g}$ is easily not anonymous, since $g(1,1,0)\neq g(0,1,1)$. On the other hand, $\bar{g}$ is monotone and StrongDem. For the latter, observe that: $$g(x,0,0)=g(0,x,0)=g(0,0,x)=0,$$ for all $x\in\{0,1\}$. The former is very easy to check and is left to the reader.

Finally, let $H=\bar{h}$, where $h$ is a $4$-ary operation defined as follows:
$$h(x,y,z,w)=1\text{ if and only if exactly two or all of }x,y,z,w\text{ are equal to }1.$$
Since the output of $h$ does not depend on the positions of the input bits, $h$ is anonymous. Also, $h$ is $1$-immune, since:$$h(x,0,0,0)=h(0,y,0,0)=h(0,0,z,0)=h(0,0,0,w),$$ for all $x,y,z,w\in\{0,1\}$. On the other hand, $h$ is not monotone, since $h(0,0,1,1)=1$ and $h(0,1,1,1)=0$.\hfill$\diamond$
\end{example}

The only combination of properties from Definition \ref{def:types} we have not seen, is an anonymous and monotone aggregator that is not Strong Dem. We end this subsection by proving that such aggregators do not exist.
\begin{lemma}\label{a+m=s}
Let $f$ be a $k$-ary anonymous and monotone Boolean function. Then, $f$ is also $1$-immune.
\end{lemma}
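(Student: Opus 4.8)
The plan is to exploit anonymity to collapse $f$ to a single-variable ``counting'' function and then run a short pigeonhole argument that is powered by the standing assumption $k\ge 2$. Since $f$ is anonymous, its output depends only on the Hamming weight of the input; that is, there is a function $g\colon\{0,1,\ldots,k\}\to\{0,1\}$ with $f(a_1,\ldots,a_k)=g(a_1+\cdots+a_k)$ for every input. Monotonicity of $f$ transfers directly to $g$: flipping one coordinate from $0$ to $1$ raises the weight by one and cannot lower the output, so $g(0)\le g(1)\le\cdots\le g(k)$.

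First I would record that a non-decreasing sequence of $k+1$ values drawn from the two-element set $\{0,1\}$ can strictly increase at most once. Hence, among the $k$ consecutive pairs $(j,j+1)$, $j=0,\ldots,k-1$, at most one satisfies $g(j)<g(j+1)$, leaving at least $k-1$ pairs with $g(j)=g(j+1)$. Because $k\ge 2$ (the paper restricts attention to functions of arity at least $2$), we have $k-1\ge 1$, so there is at least one index $j^\ast\in\{0,\ldots,k-1\}$ with $g(j^\ast)=g(j^\ast+1)$.

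Next I would translate this back into the required $1$-immunity statement. Fix any coordinate $i\in\{1,\ldots,k\}$ and choose $a_1,\ldots,a_{i-1},a_{i+1},\ldots,a_k$ so that exactly $j^\ast$ of them equal $1$ (possible since $0\le j^\ast\le k-1$). Then setting the $i$-th coordinate to $0$ yields an input of weight $j^\ast$ and setting it to $1$ yields one of weight $j^\ast+1$, so $f(a_1,\ldots,a_{i-1},0,a_{i+1},\ldots,a_k)=g(j^\ast)=g(j^\ast+1)=f(a_1,\ldots,a_{i-1},1,a_{i+1},\ldots,a_k)$. As $i$ was arbitrary and the same $j^\ast$ works for every coordinate (anonymity makes all coordinates interchangeable), this is exactly property $3$ of Definition \ref{def:types}, i.e. $f$ is $1$-immune.

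I do not expect a genuine obstacle here, since the content reduces to a counting argument. The one point deserving care is the reduction to $g$ together with the observation that the arity bound $k\ge 2$ is precisely what forces an equal consecutive pair: for $k=1$ the identity function is anonymous and monotone yet manifestly not $1$-immune, so the hypothesis $k\ge 2$ cannot be dropped. I would therefore invoke the arity assumption explicitly at the outset of the argument.
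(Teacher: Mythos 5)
Your proof is correct and follows essentially the same route as the paper's: both exploit anonymity to reduce $f$ to a monotone function $g$ of the Hamming weight and then, for each coordinate, fix the remaining $k-1$ votes at a weight where $g$ is flat across a consecutive pair. The paper locates such a flat pair by a case analysis on the threshold of $g$ (using the all-zeros context when the threshold is positive, the all-ones context when it is zero, and treating $k=2$ separately), whereas your pigeonhole formulation finds a flat pair uniformly for all $k\ge 2$ and, as a bonus, never needs the implicit unanimity assumption or the case split.
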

\begin{proof}
For $k=2$, the only anonymous functions are $\wedge$ and $\vee$, which are also $1$-immune.

Let $k\geq 3$. Since $f$ is anonymous and monotone, it is not difficult to observe that there is some $l\in\{0,\ldots,k\}$, such that the output of $f$ is $0$ if and only if there are at most $l$ $1$'s in the input bits. If $l>0$, then: $$f(x,0,0\ldots,0,0)=f(0,x,0,\ldots,0,0)=\cdots=f(0,0,0,\ldots,0,x)=0,$$ for all $x\in\{0,1\}$. If $l=0$, then: $$f(x,1,1\ldots,1,1)=f(1,x,1,\ldots,1,1)=\cdots=f(1,1,1,\ldots,1,x)=1,$$ for all $x\in\{0,1\}$. In both cases, $f$ is $1$-immune.\end{proof}

\subsection{Characterizations for domains admitting anonymous, monotone and StrongDem aggregators}\label{ssec:threetypeschar}
We begin with the syntactic characterization of domains admitting anonymous aggregators. Nehring and Puppe \cite[Theorem $2$]{nehring2010abstract} showed that a domain admits a monotone locally non-dictatorial aggregator if and only if it admits a monotone anonymous one. Kirousis et al. \cite{kirousis2017aggregation} strengthened this result by dropping the monotonicity requirement and fixing the arity of the anonymous aggregator, as a direct consequence of Theorem \ref{thm:upd-char}.
\begin{corollary}[Kirousis et al. \cite{kirousis2017aggregation}, Corollary $5.11$]
$D$ is a local possibility domain if and only if it admits a ternary anonymous aggregator. 
\end{corollary}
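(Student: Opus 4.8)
The plan is to derive both directions directly from Theorem \ref{thm:upd-char}, which is exactly why the statement is advertised as a direct consequence of that theorem; essentially all the work has already been done there, and what remains is a symmetry bookkeeping check together with the observation that projections fail to be anonymous.

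For the backward direction I would suppose $D$ admits a ternary anonymous aggregator $F=(f_1,\ldots,f_n)$. Each component $f_j$ is then anonymous, and the projections $\pr_1^3,\pr_2^3,\pr_3^3$ are manifestly not anonymous (for instance $\pr_1^3(0,1,1)=0\neq 1=\pr_1^3(1,0,1)$, even though $(1,0,1)$ is a permutation of $(0,1,1)$, so permutation-invariance fails). Hence no $f_j$ can equal any projection, so $F$ is locally non-dictatorial in the sense of the definition preceding Definition \ref{def:upd}, and therefore $D$ is a local possibility domain.

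For the forward direction I would invoke Theorem \ref{thm:upd-char} to obtain a ternary aggregator $(f_1,\ldots,f_n)$ with $f_j\in\{\wedge^{(3)},\vee^{(3)},\maj,\oplus\}$ for every $j$. The only remaining point is that each of these four ternary operations is anonymous, i.e. invariant under every permutation of its three arguments. This is immediate once one notes that each depends only on the \emph{number} of $1$'s among its inputs: $\wedge^{(3)}$ outputs $1$ exactly when all three inputs are $1$, $\vee^{(3)}$ exactly when at least one is, $\maj$ exactly when at least two are, and $\oplus$ exactly when one or three are. Permuting the inputs leaves this count unchanged, so each function, and hence the whole aggregator $(f_1,\ldots,f_n)$, is anonymous. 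Thus $D$ admits a ternary anonymous aggregator.

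There is no genuine obstacle in this argument: both directions reduce to Theorem \ref{thm:upd-char} plus the elementary symmetry facts above. The only care I would take is to state explicitly the two facts that do the translating, namely that the four canonical ternary functions of Theorem \ref{thm:upd-char} are symmetric (which promotes the aggregator they build to an anonymous one) and that projections are not anonymous (which is what forces an anonymous aggregator to be locally non-dictatorial). With those in place the equivalence is immediate, and the fixed arity (ternary) is inherited verbatim from Theorem \ref{thm:upd-char}.
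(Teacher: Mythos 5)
Your proof is correct and follows essentially the same route as the paper: the forward direction is Theorem \ref{thm:upd-char} plus the observation that $\wedge^{(3)},\vee^{(3)},\maj,\oplus$ are all symmetric, and the backward direction is the paper's own earlier remark (used for Corollary \ref{cor:firstimplications}) that projections are not anonymous, so an anonymous aggregator is locally non-dictatorial. The only difference is that you spell out both directions explicitly, while the paper compresses them into a one-line appeal to the same facts.
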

\begin{proof}
Immediate from the fact that any aggregator of the type described in Theorem \ref{thm:upd-char} is anonymous.
\end{proof}
Thus, we obtain the following result.
\begin{corollary}\label{cor:anonymous}
$D$ admits a $k$-ary anonymous aggregator if and only if there exists a local possibility integrity constraint whose set of models equals $D$.
\end{corollary}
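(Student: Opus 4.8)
The plan is to derive Corollary \ref{cor:anonymous} by chaining the two characterizations already available: Theorem \ref{thm:updupic}, which identifies local possibility domains with the model sets of \lpic's, and the immediately preceding corollary of Kirousis et al. (Corollary $5.11$), which identifies local possibility domains with domains admitting a \emph{ternary} anonymous aggregator. The only subtlety is that the statement quantifies over anonymous aggregators of arbitrary arity $k$, while the latter result speaks of arity exactly three; I will argue that this mismatch is harmless.

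For the forward direction, I would assume $D$ admits a $k$-ary anonymous aggregator $F=(f_1,\ldots,f_n)$ for some $k$. Each component $f_j$ is anonymous, and since projection functions are \emph{not} anonymous (as already noted in observing that projections are neither anonymous nor $1$-immune), no $f_j$ equals any $\pr_d^k$. Hence $F$ is locally non-dictatorial, so $D$ is a local possibility domain by Definition \ref{def:upd}, and Theorem \ref{thm:updupic} then produces an \lpic \ whose set of models is $D$.

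For the converse, I would start from an \lpic \ $\phi$ with $\rMod(\phi)=D$. Theorem \ref{thm:updupic} gives that $D$ is a local possibility domain, whence the Kirousis et al. corollary furnishes a ternary anonymous aggregator for $D$; being of arity three, it is in particular a $k$-ary anonymous aggregator, as required.

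I expect no genuine obstacle, since the substance is entirely carried by Theorem \ref{thm:updupic} together with the stated ternary characterization. The one point that must be made explicit — and it is precisely what neutralizes the arbitrary arity $k$ — is that anonymity of every component already forces local non-dictatoriality; thus the existence of an anonymous aggregator of \emph{any} arity collapses to the local possibility property on which both cited results rest.
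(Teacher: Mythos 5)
Your proof is correct and takes essentially the same route as the paper: the paper obtains this corollary by chaining Theorem \ref{thm:updupic} with the Kirousis et al.\ ternary-anonymous characterization, and your key observation that anonymity of every component rules out projections (hence forces local non-dictatoriality for any arity $k\geq 2$) is exactly the content of the paper's Corollary \ref{cor:firstimplications}.
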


To deal with monotone aggregators we will need some preliminary work. The first fact we will use is that the set of aggregators of a domain $D$ is \emph{closed under superposition}. That is, if $F=(f_1,\ldots,f_m)$ is a $k$-ary aggregator for $D$ and $G^1,\ldots,G^k$ are $l$-ary aggregators for $D$, where $G^i=(g_1^i,\ldots,g_n^i)$, $i=1,\ldots,k$, then $H:=F(G^1,\ldots,G^k)$ is an $l$-ary aggregator for $D$, where $H=(h_1,\ldots,h_n)$ and: $$h_j(x_1,\ldots,x_l)=f_j(g_j^1(x_1,\ldots,x_l),\ldots,g_j^k(x_1,\ldots,x_l)),$$ for all $j=1,\ldots,n$ and for all $x_1,\ldots,x_l\in\{0,1\}$. The proof of this is straightforward and can be found in \cite[Lemma $5.6$]{kirousis2017aggregation}.

This will help us employ a result from the field of Universal Algebra. \emph{Clones} are sets of operations that contain all projections and are closed under superposition (see e.g. Szendrei \cite{szendrei1986clones}). Post \cite{post1941two} provided a complete classification of clones of Boolean operations. This result has already been effectively used by Kirousis et al. \cite{kirousis2017aggregation} in order to obtain the characterizations of possibility and local possibility domains in the Boolean and non-Boolean framework. Here, we will use it analogously.

By the fact that the set of aggregators of a domain $D$ is closed under superposition, we obtain the following result.
\begin{lemma}For a Boolean domain $D\subseteq\{0,1\}^n$, let, for all $j\in\n$:
$$\Cl_j:=\{f\mid\text{There exists an aggregator }F=(f_1,\ldots,f_n)\text{ for D s.t. }f_j=f\},$$
be the set of the $j$-th components of every aggregator for $D$. Then, $\Cl_j$ is a \emph{clone}.
\end{lemma}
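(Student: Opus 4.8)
The plan is to verify directly the two defining properties of a clone for each set $\Cl_j$: that it contains all projections, and that it is closed under superposition. The essential tool is the fact, stated just before the lemma, that the set of all aggregators of a fixed domain $D$ is closed under superposition. The strategy is to lift each clone-theoretic requirement on $\Cl_j$ to a statement about full aggregators $F=(f_1,\ldots,f_n)$ of $D$, use the known closure property there, and then project back to the $j$-th component.

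First I would show that $\Cl_j$ contains every projection. For each arity $k$ and each $d\in\{1,\ldots,k\}$, the dictatorial aggregator $F=(\pr_d^k,\ldots,\pr_d^k)$ is an aggregator for $D$: it is unanimous, and since each $x^i\in D$ we have $F(x^1,\ldots,x^k)=x^d\in D$. Hence its $j$-th component $\pr_d^k$ belongs to $\Cl_j$. This covers all projections of every arity, which is exactly the projection requirement for a clone.

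Next I would establish closure under superposition. Suppose $f\in\Cl_j$ is $k$-ary and $g^1,\ldots,g^k\in\Cl_j$ are $l$-ary. By the definition of $\Cl_j$, there exist aggregators $F=(f_1,\ldots,f_n)$ with $f_j=f$ and, for each $i=1,\ldots,k$, an aggregator $G^i=(g_1^i,\ldots,g_n^i)$ with $g_j^i=g^i$. Applying the superposition-closure fact for aggregators of $D$, the superposition $H:=F(G^1,\ldots,G^k)$ is again an $l$-ary aggregator for $D$, with components $h_r(x_1,\ldots,x_l)=f_r(g_r^1(x_1,\ldots,x_l),\ldots,g_r^k(x_1,\ldots,x_l))$. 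Reading off the $j$-th component gives $h_j=f(g^1,\ldots,g^k)$, which is therefore realized as a component of an aggregator for $D$ and so lies in $\Cl_j$. This is precisely the superposition $f(g^1,\ldots,g^k)$ of elements of $\Cl_j$, establishing closure.

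I do not expect a genuine obstacle here, since the lemma is essentially a projection of the already-established closure of the aggregator set onto a single coordinate; the only point requiring minor care is bookkeeping the arities and confirming that the witnessing aggregators for the various $g^i$ can be chosen independently (they can, as each is merely asserted to exist by membership in $\Cl_j$, and superposition does not require them to share any components off the $j$-th coordinate). With projections and superposition both verified, $\Cl_j$ satisfies the definition of a clone and the proof is complete.
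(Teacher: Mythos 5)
Your proof is correct and follows exactly the route the paper intends: the paper states the lemma as an immediate consequence of the superposition-closure fact for aggregators (citing Kirousis et al. for that fact), and your argument simply fills in the routine details — dictatorial aggregators witness that all projections lie in $\Cl_j$, and superposition of witnessing aggregators, read off at the $j$-th coordinate, gives closure of $\Cl_j$ under superposition.
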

The main feature of Post's classification result we will use here is the following (see \cite{bohler2003playing} for an easy to follow presentation):
\begin{lemma}\label{lem:post}
Let $\Cl$ be a clone containing only unanimous functions. Then, either at least one of $\wedge,\vee,\maj,\oplus$ is in $\Cl$, or $\Cl$ contains only projections. 
\end{lemma}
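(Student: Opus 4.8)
The plan is to read the statement off the structure of Post's lattice, after first placing $\Cl$ in the correct interval. Since ``unanimous'' means $f(b,\dots,b)=b$, i.e. $f(0,\dots,0)=0$ and $f(1,\dots,1)=1$, a unanimous function is exactly one that preserves both singleton relations $\{0\}$ and $\{1\}$; hence the hypothesis says precisely that $\Cl$ is contained in the clone $\mathsf{R}_2$ of all idempotent Boolean functions, while of course $\Cl$ contains the projection clone $\mathsf{I}_2$ at the bottom of the lattice. So everything happens inside the interval $[\mathsf{I}_2,\mathsf{R}_2]$, and the entire content reduces to one fact about this interval: its only atoms (minimal clones lying strictly above the projections) are the clones generated by $\wedge$, $\vee$, $\maj$ and $\oplus$.

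Granting that fact, the argument is short. If $\Cl=\mathsf{I}_2$, then $\Cl$ consists of projections only and we are in the second alternative. Otherwise $\Cl$ contains some non-projection $f$, and I would invoke the general principle that every clone on a finite set other than the projection clone contains a minimal clone; applying it to the clone $[f]$ generated by $f$, which satisfies $[f]\subseteq\mathsf{R}_2$, yields a minimal clone $\mathcal M\subseteq\mathsf{R}_2$. All members of $\mathcal M$ are then idempotent, so $\mathcal M$ is one of the four idempotent atoms $[\wedge],[\vee],[\maj],[\oplus]$ identified above; consequently $\Cl\supseteq[f]\supseteq\mathcal M$ contains the corresponding generator, which is one of $\wedge,\vee,\maj,\oplus$. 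The enumeration of the atoms and the classification of minimal Boolean clones are exactly the parts I would import from Post \cite{post1941two} (conveniently, from the explicit Hasse diagram in \cite{bohler2003playing}) together with Rosenberg's description of minimal clones.

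The step I expect to be the real obstacle — and the reason I would not try to make the proof fully self-contained — is the identification of the atoms, i.e. ruling out any ``other'' minimal idempotent clone. The naive hope, namely to take a non-projection $f\in\Cl$ of minimal arity and argue that $f$ itself must be one of the four functions, breaks down. \'Swierczkowski's lemma does let me assume the minimal arity is $3$ (in arity $\ge 4$ an operation all of whose two-variable identifications are projections is a semiprojection, and over $\{0,1\}$ every semiprojection of arity $\ge 3$ collapses to a projection, since any Boolean tuple of length $\ge 3$ repeats a value). But at arity $3$ there are genuinely extra idempotent, non-projection operations all of whose two-variable identifications are projections: a concrete one is $f(x,y,z)=x+y+xy+xz+yz \pmod 2$, which equals $x\vee y$ when $z=0$ and $x\wedge y$ when $z=1$, and is none of $\wedge,\vee,\maj,\oplus$. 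Such an $f$ does not exhibit a target function directly; one checks that it is self-dual and therefore lies in the majority clone $[\maj]=\mathsf{D}_2$, so that $\maj\in[f]$, but establishing this is precisely clone-theoretic bookkeeping rather than a one-line identification. This is why the clean route is to cite that the idempotent portion of Post's lattice has exactly the four atoms $[\wedge],[\vee],[\maj],[\oplus]$ and conclude as above.
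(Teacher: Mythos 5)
Your main argument is correct and is essentially the paper's own approach: the paper gives no proof of this lemma at all, stating it as a direct consequence of Post's classification (with a pointer to B\"ohler et al.\ for an accessible presentation), and that classification — specifically, that the atoms of the idempotent interval $[\mathsf{I}_2,\mathsf{R}_2]$ are exactly $[\wedge],[\vee],[\maj],[\oplus]$ — is precisely the fact you import; your additional scaffolding (unanimous $=$ idempotent, plus the standard fact that every nontrivial clone on a finite set contains a minimal clone) is a sound way to organize the citation.

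One concrete slip in your motivating digression should be fixed, though it does not affect the main proof. Your example $f(x,y,z)=x\oplus y\oplus xy\oplus xz\oplus yz$ (which is just $\maj(x,y,\neg z)$) is self-dual and idempotent but \emph{not monotone}: $f(1,0,0)=1$ while $f(1,0,1)=0$. Hence $f$ does \emph{not} lie in $[\maj]=\mathsf{D}_2$, which is the clone of \emph{monotone} self-dual functions; self-duality plus idempotence only places it in $\mathsf{D}_1$. The conclusion $\maj\in[f]$ is nevertheless true, but for the opposite containment: by Post's lattice the only proper subclones of $\mathsf{D}_1$ are $\mathsf{I}_2$, $\mathsf{L}_2$ and $\mathsf{D}_2$, and since $f$ is neither a projection, nor linear, nor monotone, $[f]=\mathsf{D}_1\supsetneq\mathsf{D}_2\ni\maj$. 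So the example illustrates your point even more strongly than you state it — the clone generated by a minimal-arity non-projection need not even be a minimal clone — but the one-line justification you give for $\maj\in[f]$ is wrong as written.
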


Finally, we will need some definitions. We say that a teranry Boolean operator $g$ is \emph{commutative} if and only if $$g(x,x,y)=g(x,y,x)=g(y,x,x),$$ for all $x,y\in\{0,1\}$. It is not difficult to see that a ternary operator $g$ is commutative if and only if $g\in\{\wedge^{(3)},\vee^{(3)},\maj,\oplus\}$ (again, see \cite[Lemma $5.7$]{kirousis2017aggregation}).

Lastly, we say that a $k$-ary Boolean operation $f$ is \emph{linear}, if there exist constants $c_0,\ldots,c_k\in\{0,1\}$ such that: $$f(x_1,\ldots,x_k)=c_0\oplus c_1x_1\oplus\cdots\oplus c_kx_k,$$ where $\oplus$ again denotes binary addition $\mod 2$. We need two facts concerning linear functions.

\begin{lemma}\label{lem:basics}
Let $f:\{0,1\}^k\mapsto\{0,1\}$ be a linear function and let $c_0,c_1,\ldots,c_k\in\{0,1\}$ such that:$$f(x_1,\ldots,x_k)=c_0\oplus c_1x_1\oplus\cdots\oplus c_kx_k.$$ Then, $f$ is unanimous if and only if $c_0=0$ and there is an odd number pairwise distinct indices $i\in\{1,\ldots,k\}$ such that $c_i=1$.
\end{lemma}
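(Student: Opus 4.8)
The plan is to unwind the definition of unanimity, which for a $k$-ary Boolean function only imposes constraints at the two \emph{constant} inputs. Indeed, a function $f$ is unanimous precisely when $f(b,\ldots,b)=b$ for every bit $b$, and the only tuples of the form $(b_1,\ldots,b_k)$ with $b_1=\cdots=b_k$ are the all-zeros tuple $\mathbf{0}=(0,\ldots,0)$ and the all-ones tuple $\mathbf{1}=(1,\ldots,1)$. So unanimity is equivalent to the conjunction of the two conditions $f(\mathbf{0})=0$ and $f(\mathbf{1})=1$, and the entire proof reduces to evaluating the given linear form at these two points.

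First I would compute $f(\mathbf{0})=c_0\oplus c_1\cdot 0\oplus\cdots\oplus c_k\cdot 0=c_0$, so the condition $f(\mathbf{0})=0$ is exactly $c_0=0$. Next I would compute $f(\mathbf{1})=c_0\oplus c_1\oplus\cdots\oplus c_k$; using $c_0=0$ this is $c_1\oplus\cdots\oplus c_k$. The key observation is that summation $\bmod 2$ of the bits $c_i$ returns the parity of the size of the set $\{i\in\{1,\ldots,k\}\mid c_i=1\}$, so $f(\mathbf{1})=1$ holds if and only if this set has odd cardinality, i.e. there is an odd number of pairwise distinct indices $i$ with $c_i=1$.

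Putting the two equivalences together gives both directions at once: $f$ is unanimous $\iff f(\mathbf{0})=0$ and $f(\mathbf{1})=1 \iff c_0=0$ and $c_1\oplus\cdots\oplus c_k=1 \iff c_0=0$ and an odd number of the $c_i$ ($i\geq 1$) equal $1$. For the forward direction one reads off $c_0=0$ from $f(\mathbf{0})=0$ and then the parity condition from $f(\mathbf{1})=1$; for the converse one simply substitutes the assumed values of the $c_i$ back into the two evaluations.

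I do not anticipate a genuine obstacle here, as the statement is essentially a direct calculation once unanimity is reduced to its values at $\mathbf{0}$ and $\mathbf{1}$. If anything, the only point requiring a word of care is justifying that unanimity imposes no constraints beyond these two constant inputs (so that no further conditions on the $c_i$ are hidden), and the elementary identification of the $\bmod 2$ sum $c_1\oplus\cdots\oplus c_k$ with the parity of the number of nonzero coefficients.
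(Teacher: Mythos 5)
Your proof is correct and takes essentially the same approach as the paper: both reduce unanimity to the two evaluations $f(0,\ldots,0)=c_0$ and $f(1,\ldots,1)=c_0\oplus c_1\oplus\cdots\oplus c_k$, the only difference being that you phrase the parity condition as a direct chain of equivalences while the paper argues the forward direction by contradiction.
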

\begin{proof}
The inverse direction is straightforward. For the forward direction, set $x_1=\cdots=x_k=0$. Then, $f(0,\ldots,0)=c_0$ and thus $c_0=0$ since $f$ is unanimous. Finally, assume, to obtain a contradiction, that there is an even number of $c_1,\ldots,c_k$ that are equal to $1$. Set $x_1=\cdots=x_k=1$. Then, it holds that $f(1,\ldots,1)=0$ and $f$ is not unanimous. Contradiction. 
\end{proof}
Since we work only with unanimous functions, from now on we will assume that a linear function satisfies the conditions of Lemma \ref{lem:basics}. This implies also that any linear function has odd arity.

Let again $f:\{0,1\}^k\mapsto\{0,1\}$. We say that $f$ is an \emph{essentially unary function}, if there exists a \emph{unary} Boolean function $g$ and an $i\in\{1,\ldots,k\}$, such that: $$f(x_1,\ldots,x_k)=g(x_i),$$ for all $x_1,\ldots,x_k\in\{0,1\}$ (again, see \cite{bohler2003playing,bohler2004playing,jeavons1995algebraic,jeavons1999determine}). Obviously, the only unanimous such functions are the projections. Note also that any $k$-ary linear functions $f$, with exactly one $i\in\{1,\ldots,k\}$ such that $c_i=1$ is an essentially unary function.

\begin{lemma}\label{lemma:linear}
Let $f:\{0,1\}^k\mapsto\{0,1\}$ be a linear function, $k\geq 3$. Then, either $f$ is an essentially unary function, or it is neither monotone nor $1$-immune.\end{lemma}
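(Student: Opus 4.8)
The plan is to exploit the normal form for unanimous linear functions guaranteed by Lemma \ref{lem:basics}. Writing $f(x_1,\ldots,x_k)=c_0\oplus c_1x_1\oplus\cdots\oplus c_kx_k$ and invoking that lemma, I would first record that $c_0=0$ and that the support $S:=\{i\mid c_i=1\}$ has odd cardinality, so that $f(x_1,\ldots,x_k)=\bigoplus_{i\in S}x_i$. The whole argument then reduces to a case analysis on $|S|$: since $|S|$ is odd, either $|S|=1$, in which case $f$ equals the projection $\pr_i^k$ for the unique $i\in S$ and is therefore essentially unary (as already noted in the paragraph preceding the lemma), or $|S|\geq 3$. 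So it suffices to show that whenever $|S|\geq 3$ (indeed whenever $|S|\geq 2$) the function $f$ is neither $1$-immune nor monotone.

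For the failure of $1$-immunity, the key observation is that linearity makes every coordinate in the support \emph{pivotal} no matter how the remaining coordinates are fixed: for any $i\in S$ and any assignment $a_1,\ldots,a_{i-1},a_{i+1},\ldots,a_k$ to the other inputs, flipping the $i$-th input flips the output, i.e. $f(\ldots,0,\ldots)\oplus f(\ldots,1,\ldots)=c_i=1$. Hence there is no way to fix the other $k-1$ inputs so that the $i$-th becomes irrelevant, which is exactly the negation of property $3$ of Definition \ref{def:types}. Note this part needs only $|S|\geq 1$, so it simultaneously re-proves that projections are not $1$-immune.

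For the failure of monotonicity I would exhibit an explicit witness. Pick two distinct indices $i,j\in S$ (possible since $|S|\geq 2$) and consider the input that is $1$ on coordinate $j$ and $0$ everywhere else. Then exactly one support coordinate carries a $1$, so $f$ evaluates to $1$, while the $i$-th coordinate is $0$. Raising the $i$-th coordinate to $1$ now sets exactly two support coordinates to $1$, so $f$ evaluates to $0$. This contradicts the implication defining monotonicity (property $2$ of Definition \ref{def:types}), so $f$ is not monotone.

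The main subtlety, and the reason the statement is a genuine dichotomy rather than a blanket negation, is monotonicity rather than $1$-immunity: projections \emph{are} monotone, so the monotonicity counterexample genuinely requires a second support element and thus the hypothesis $|S|\geq 2$, which the parity constraint upgrades to $|S|\geq 3$ in the non-essentially-unary case. The only things left to verify are the bookkeeping in the two explicit evaluations, which reduce to counting the parity of the number of $1$'s landing on $S$, and the harmless reindexing when $i$ or $j$ happen to sit at the ends of the tuple.
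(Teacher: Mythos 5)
Your proof is correct, and it shares the paper's overall skeleton: normalize $f=\bigoplus_{i\in S}x_i$ with $|S|$ odd via Lemma \ref{lem:basics}, observe that ``not essentially unary'' means $|S|\geq 3$, and then kill monotonicity and $1$-immunity by explicit parity arguments. Where you genuinely diverge is in the non-monotonicity witness. The paper splits into cases: when $|S|=3$ it identifies $f$ with the ternary $\oplus$ and simply asserts that this is ``easily'' neither monotone nor $1$-immune, and only when $|S|\geq 5$ does it give an explicit witness (three $1$'s placed on the support, then a fourth). Your witness --- a single $1$ on the support, then a second --- works uniformly for every $|S|\geq 2$, so the case distinction disappears and the unproven assertion about $\oplus$ is absorbed into the general argument; this is a small but real improvement in both economy and generality. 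Your treatment of $1$-immunity is the contrapositive of the paper's: the paper assumes $1$-immunity and derives $c_i=0$ for every $i$, contradicting unanimity, whereas you observe directly that every support coordinate is pivotal under every fixing of the remaining coordinates, which negates property~$3$ of Definition \ref{def:types} at any single $i\in S$. The mathematical content there is identical, though your phrasing makes transparent the side fact (already used elsewhere in the paper) that projections fail $1$-immunity while remaining monotone, which is exactly why the lemma is a dichotomy on monotonicity rather than a blanket negation.
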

\begin{proof}
Let $c_1,\ldots,c_k\in\{0,1\}$ such that:
$$f(x_1,\ldots,x_k)=c_1x_1\oplus\cdots c_kx_k$$ and assume it is not an essentially unary function. Then, there exist at least three pairwise distinct indices $i\in\{1,\ldots,k\}$ such that $c_i=1$. If there are exactly three, $f=\oplus$, which is easily neither monotone, nor $1$-immune.

Assume now that there are at least five pairwise distinct $i\in\{1,\ldots,k\}$ such that $c_i=1$. We will need only four of these indices.

Now, let $j_1,j_2,j_3,j_4\in\{1,\ldots,k\}$ such that $c_{j_1}$, $c_{j_2}$, $c_{j_3}$, $c_{j_4}= 1$. Set $x_{j_1}=x_{j_2}=x_{j_3}=1$ and $x_i=0$, for all $i\in\{1,\ldots,k\}\setminus\{j_1,j_2,j_3\}$. Then, $f(x_1,\ldots,x_k)=1$. By letting $x_{j_4}=1$ too, we obtain $f(x_1,\ldots,x_k)=0$, which shows that $f$ is not monotone.

Finally, to obtain a contradiction, suppose $f$ is $1$-immune. Then, there exist $d_2,\ldots,d_k\in\{0,1\}$ such that: \begin{align*}
f(0,d_2,\ldots,d_k)= & (1,d_2,\ldots,d_k) \Leftrightarrow\\
c_2d_2\oplus\cdots\oplus c_kd_k= & c_1\oplus c_2d_2\oplus\cdots c_kd_k\Leftrightarrow\\
c_1= & 0.
\end{align*}
Continuing in the same way, we can prove that $c_j=0$, for $j=1,\ldots,k$, which is a contradiction.
\end{proof}

We are now ready to prove our results concerning monotone and StrongDem aggregators.
\begin{theorem}\label{thm:monotone}
A domain $D\subseteq\{0,1\}^n$ admits a monotone non-dictatorial aggregator of some arity if and only if it admits a binary non-dictatorial one.
\end{theorem}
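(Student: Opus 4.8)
The plan is to prove the two implications separately, with essentially all the work in the forward direction. The backward direction is immediate: as recorded just before Corollary \ref{cor:firstimplications}, the only unanimous binary functions are $\wedge,\vee,\pr_1^2,\pr_2^2$, and all four are monotone. Hence every binary aggregator is monotone, and in particular a binary non-dictatorial aggregator is already a monotone non-dictatorial one.

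For the forward direction, let $F=(f_1,\ldots,f_n)$ be a monotone non-dictatorial aggregator of some arity $k\ge 2$. The device I would use is \emph{argument identification}: for each $i\in\{1,\ldots,k\}$ put $g_j^{(i)}(x,y)=f_j(x,\ldots,x,y,x,\ldots,x)$, with $y$ in the $i$-th slot, and set $G^{(i)}=(g_1^{(i)},\ldots,g_n^{(i)})$. Feeding $F$ the matrix whose $i$-th row is one element of $D$ and whose other rows are a second element of $D$ shows that each $G^{(i)}$ is again an aggregator for $D$; it is unanimous and monotone, so every component lies in $\{\wedge,\vee,\pr_1^2,\pr_2^2\}$. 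If some $G^{(i)}$ is non-dictatorial we are finished, since it is then a binary non-dictatorial aggregator for $D$.

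So the only remaining case is that every $G^{(i)}$ is dictatorial, say $G^{(i)}=(\pr_{\tau(i)}^2,\ldots,\pr_{\tau(i)}^2)$ with $\tau(i)\in\{1,2\}$ the same for all $j$. Reading off what this means for the components, $g_j^{(i)}=\pr_2^2$ holds exactly when flipping coordinate $i$ alone takes $f_j$ from $0$ to $1$, and the uniformity in $j$ forces a single set $S=\{i:\tau(i)=2\}$ independent of $j$. If some $i\in S$, then for all $j$ we have $f_j(0,\ldots,1,\ldots,0)=1$ and $f_j(1,\ldots,0,\ldots,1)=0$, and monotonicity upgrades these boundary values to $f_j=\pr_i^k$ for every $j$, making $F$ dictatorial --- a contradiction. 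Hence $S=\emptyset$, i.e.\ for all $i,j$ we have $f_j(0,\ldots,1,\ldots,0)=0$ and $f_j(1,\ldots,0,\ldots,1)=1$. These are precisely the near-unanimity conditions, so every component $f_j$ is a near-unanimity operation (and $k\ge 3$, the binary case being trivially covered above).

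Extracting a binary non-dictatorial aggregator from an all-near-unanimity aggregator is the step I expect to be the main obstacle: a near-unanimity (self-dual, monotone) component collapses to a bare projection under \emph{any} identification of its arguments, so self-duality blocks manufacturing a symmetric $\wedge$ or $\vee$ component directly at the level of $F$. The resolution I would use is to leave the aggregator and pass to the domain. A Boolean domain closed under a near-unanimity aggregator is bijunctive (the Boolean specialization of the Baker--Pixley phenomenon: on $\{0,1\}$ the presence of any near-unanimity operation forces $\maj$, so $\bar{\maj}$ aggregates $D$). A bijunctive $D$ is the model set of a $2$-{\sc Sat} formula, which is renamable Horn, as recorded after Definition \ref{def:upic}, hence renamable partially Horn; Theorem \ref{thm:nonproj} then supplies a binary non-projection --- and so non-dictatorial --- aggregator for $D$. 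This closes the forward direction in all cases. One may alternatively organize the argument through Theorem \ref{thm:DH}: if $D$ admits no binary non-dictatorial aggregator it must be affine, $F$ cannot be a projection aggregator, and Lemma \ref{lem:post} together with Lemma \ref{lemma:linear} (which rules out a genuinely linear, hence non-monotone, component) again isolates exactly the near-unanimity case handled above.
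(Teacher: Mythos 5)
Your backward direction and your reduction to the all-near-unanimity case are sound: the identification functions $G^{(i)}$ are indeed binary aggregators, and your monotonicity argument correctly shows that if some $\tau(i)=2$ then $F$ collapses to a dictatorship, leaving only the case where every $f_j$ satisfies $f_j(e_i)=0$ and $f_j(\mathbf{1}-e_i)=1$ for all $i$ (where $e_i$ has a single $1$ in position $i$). The fatal step is the one you yourself flagged as the main obstacle: the claim that ``on $\{0,1\}$ the presence of any near-unanimity operation forces $\maj$, so $\bar{\maj}$ aggregates $D$'' is false. Baker--Pixley gives $(k-1)$-decomposability from a $k$-ary near-unanimity polymorphism, not $2$-decomposability; only a \emph{ternary} near-unanimity operation (which on $\{0,1\}$ is necessarily $\maj$ itself) yields bijunctivity. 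Concretely, let $f:\{0,1\}^5\to\{0,1\}$ be the threshold function with $f(a)=1$ iff $a$ has at least four $1$'s, and let $D=\rMod(\neg x_1\vee\neg x_2\vee\neg x_3)=\{0,1\}^3\setminus\{(1,1,1)\}$. Then $f$ is monotone and near-unanimous, and $(f,f,f)$ is an aggregator for $D$: if each of the three columns of a $5\times 3$ matrix of feasible votes had at least four $1$'s, the matrix would contain at most three $0$'s, yet each of its five rows lies in $D$ and so contains at least one $0$. Nevertheless $D$ is not bijunctive, since $\maj$ applied to $(1,1,0),(1,0,1),(0,1,1)\in D$ yields $(1,1,1)\notin D$; consequently $\maj$ does not even belong to the clone generated by $f$ (in Post's lattice that clone sits inside the class $S_{10}^4$, which omits $\maj$). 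Your proof, run on this very $D$ and $F=(f,f,f)$, lands exactly in your final case and then asserts something false. (The theorem's conclusion does hold for this $D$ --- it is Horn, so $(\wedge,\wedge,\wedge)$ works --- but not by your route.)

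There is a second, structural problem that would persist even if the clone-level fact were true: knowing that $\maj$ lies in the clone generated by each component $f_j$ separately does not produce the \emph{systematic} aggregator $\bar{\maj}$, because the term witnessing $\maj\in\langle f_j\rangle$ may differ from issue to issue, and per-issue facts cannot be recombined into a single aggregator without further machinery; this is exactly the aggregator-versus-polymorphism distinction of Section \ref{ssec:syst}. The paper's proof avoids both problems by a different route: assuming $D$ has no binary non-dictatorial aggregator, it invokes the dichotomy of Kirousis et al.\ that every aggregator of such a domain is systematic, deduces via Lemma \ref{lem:post} that each $\Cl_j$ can contain, besides projections, only linear functions (since $\wedge,\vee$ are excluded outright and $\maj$ is excluded because $\bar{\maj}$ would again yield a binary non-dictatorial aggregator), and then contradicts the monotonicity of $F$ via Lemma \ref{lemma:linear}. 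Your closing ``alternative'' sketch gestures at this route, but without the systematicity dichotomy the per-issue application of Post's lattice does not go through, and in any case it defers back to ``the near-unanimity case handled above,'' which is precisely the broken step; repairing it seems to require the systematicity lemma (or the diamond/star machinery of Section \ref{sec:other}), at which point the identification analysis becomes redundant.
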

\begin{proof}
That a domain admitting a binary non-dictatorial aggregator, admits also a non-dictatorial monotone one is obvious, since all binary unanimous functions are monotone.

For the forward direction, since $D$ admits a monotone non-dictatorial aggregator, it is a possibility domain. Now, to obtain a contradiction, suppose $D$ does not admit a binary non-dictatorial aggregator. Kirousis et al. \cite[Lemma $3.4$]{kirousis2017aggregation} showed that in this case, every $k$-ary aggregator, $k\geq 2$ for $D$ is systematic (the notion of \emph{local monomorphicity} they use corresponds to systematicity in the Boolean framework).

Now, since $D$ contains no binary non-dictatorial aggregators, $\wedge,\vee\notin\Cl_j$, for all $j\in\n$. Thus, by Lemma \ref{lem:post} either $\maj$ or $\oplus$ are contained in $\Cl_j$, for all $j\in\n$ (since the aggregators must be systematic), lest each $\Cl_j$ contains only projections.

Assume that $\overline{\maj}$ is an aggregator for $D$. Then, by Kirousis et al.\cite[Theorem $3.7$]{kirousis2017aggregation}, $D$ admits also a binary non-dictatorial aggregator. Contradiction.

Thus, we also have that $\maj\notin\Cl_j$, $j=1,\ldots,n$. It follows that only $\oplus\in\Cl_j$, $j=1,\ldots,n$. By Post \cite{post1941two}, it follows that for all $j\in\n$, $\Cl_j$ contains only linear functions (see also \cite{bohler2003playing}). By Lemma \ref{lemma:linear}, we obtain a contradiction.
\end{proof}
Thus, by Proposition \ref{prop:projaggr} and Theorem \ref{thm:nonproj}, we obtain the following syntactic characterization.
\begin{corollary}\label{cor:monotone}
$D$ admits a $k$-ary non-dictatorial monotone aggregator if and only if there exists a separable or renamable partially Horn integrity constraint whose set of models equals $D$.
\end{corollary}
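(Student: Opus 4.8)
The plan is to derive the corollary directly by chaining Theorem \ref{thm:monotone} with the two syntactic characterizations already established, namely Proposition \ref{prop:projaggr} for separable formulas and Theorem \ref{thm:nonproj} for renamable partially Horn formulas. The single extra ingredient that makes monotonicity fit into this scheme is the elementary observation, already noted in the paper, that every binary unanimous Boolean function --- that is, each of $\wedge$, $\vee$, $\pr_1^2$, $\pr_2^2$ --- is monotone, so that every binary aggregator is automatically monotone.

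For the forward direction, I would start from a $k$-ary non-dictatorial monotone aggregator for $D$. By Theorem \ref{thm:monotone}, the mere existence of such an aggregator forces $D$ to admit a binary non-dictatorial aggregator $(f_1,\ldots,f_n)$. Every component $f_j$ is then one of $\wedge$, $\vee$, $\pr_1^2$, $\pr_2^2$, so the aggregator is either a projection aggregator (all components projections) or a non-projection one (at least one symmetric component). In the first case, being non-dictatorial, it is a non-dictatorial binary projection aggregator, and Proposition \ref{prop:projaggr} yields a separable formula $\phi$ with $\rMod(\phi)=D$. In the second case, Theorem \ref{thm:nonproj} yields a renamable partially Horn formula $\phi$ with $\rMod(\phi)=D$. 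Either way $D$ is described by a formula of the required syntactic type.

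For the converse, suppose $\rMod(\phi)=D$ where $\phi$ is separable or renamable partially Horn. If $\phi$ is separable, Proposition \ref{prop:projaggr} supplies a binary non-dictatorial projection aggregator; if $\phi$ is renamable partially Horn, Theorem \ref{thm:nonproj} supplies a binary non-projection aggregator, and such an aggregator, having at least one symmetric component, cannot be dictatorial (a dictatorial aggregator has all components equal to a single projection). In both cases the aggregator is binary and non-dictatorial, hence monotone by the observation above, which is exactly what the corollary asks for.

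The proof is essentially bookkeeping: all the substance has already been extracted into Theorem \ref{thm:monotone}, whose argument via Post's lattice is where the genuine difficulty lies (ruling out the purely affine and purely projection clones through Lemmas \ref{lem:post} and \ref{lemma:linear}). The only point requiring a moment's care in the present step is to verify in each branch that the binary aggregator produced is genuinely non-dictatorial and that monotonicity is preserved, both of which reduce to the classification of binary unanimous functions into the projections $\pr_1^2,\pr_2^2$ and the symmetric functions $\wedge,\vee$.
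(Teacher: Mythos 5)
Your proof is correct and follows exactly the route the paper intends for this corollary: chaining Theorem \ref{thm:monotone} with Proposition \ref{prop:projaggr} and Theorem \ref{thm:nonproj}, using the fact that every binary unanimous function is monotone for the converse and the projection/non-projection dichotomy of binary aggregators for the forward direction. The additional care you take in verifying non-dictatoriality in each branch is sound and matches the paper's (implicit) argument.
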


To end this subsection, we now consider StrongDem aggregators. Kirousis et al. \cite{kirousis2017aggregation} used what they named the ``diamond'' operator $\diamond$, in order to combine ternary aggregators to obtain new ones whose components are commutative functions (see also Bulatov's \cite[Section $4.3$]{bulatov2016conservative} ``Three Operations Lemma''). Unfortunately, this operator will not suffice for our purposes, so we will also use a new operator we call the \emph{star} operator $\star$.

Let $F=(f_1,\ldots,f_n)$ and $G=(g_1,\ldots,g_n)$ be two $n$-tuples of ternary functions. Define $E:=F\diamond G$ and $H:=F\star G$ to be the $n$-tuples of ternary functions $E=(e_1,\ldots,e_n)$ and $H=(h_1,\ldots,h_n)$ where:
\begin{align*}
    e_j(x,y,z)= & f_j(g_j(x,y,z),g_j(y,z,x),g_j(z,x,y)),\\
    h_j(x,y,z)= & f_j(f_j(x,y,z),f_j(x,y,z),g_j(x,y,z)),
\end{align*} for all $x,y,z\in\{0,1\}$. Easily, if $F$ and $G$ are aggregators for a domain $D$, then so are $E$ and $H$, since they are produced by a superposition of $F$ and $G$. It is easy to notice that $e_j$ is commutative if and only if either $f_j$ or $g_j$ are. Also, under some assumptions for $F$ and $G$, $H$ has no components equal to $\oplus$. In Kirousis et al.\cite[Lemma $5.10$]{kirousis2017aggregation}, it has been proven that, if $F\diamond G=E=(e_1,\ldots,e_n)$, then $e_j\in\{\wedge^{(3)},\vee^{(3)},\maj,\oplus\}$ if and only if either $f_j$ or $g_j\in\{\wedge^{(3)},\vee^{(3)},\maj,\oplus\}$. Furthermore, if $g_j$ is commutative, then $e_j=g_j$, $j=1,\ldots,n$. On the other hand, for the $\star$ operation, we have the following result.

\begin{lemma}\label{lem:oplus}
Let $F=(f_1,\ldots,f_n)$ be an $n$-tuple of ternary functions, such that $f_j\in\{\wedge^{(3)},\vee^{(3)},\maj,\oplus\}$, $j=1,\ldots,n$, and let $J=\{j\mid f_j=\oplus\}$. Let also $G=(g_1,\ldots,g_n)$ be an $n$-tuple of ternary functions, such that $g_j\in\{\wedge^{(3)},\vee^{(3)},\maj\}$, for all $j\in J$. Then, for the $n$-tuple of ternary functions $F\star G:=H=(h_1,\ldots,h_n)$, it holds that:$$h_j\in\{\wedge^{(3)},\vee^{(3)},\maj\},$$ for $j=1,\ldots,n$.
\end{lemma}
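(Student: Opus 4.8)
The plan is to prove the statement componentwise, computing each $h_j$ explicitly according to the form of $f_j$. Since by definition
$$h_j(x,y,z)=f_j\bigl(f_j(x,y,z),f_j(x,y,z),g_j(x,y,z)\bigr),$$
everything reduces to evaluating each of the four admissible ternary operations when its first two arguments coincide. The four identities I would record first are, writing $u=f_j(x,y,z)$ and $v=g_j(x,y,z)$:
$$\wedge^{(3)}(u,u,v)=u\wedge v,\qquad \vee^{(3)}(u,u,v)=u\vee v,\qquad \maj(u,u,v)=u,\qquad \oplus(u,u,v)=v,$$
the last of these because $\oplus$ is ternary addition $\bmod\,2$ (it outputs $1$ iff an odd number of its inputs are $1$) and hence $u\oplus u\oplus v=v$. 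All four identities are immediate from the definitions in Section \ref{sec:prelim}, and all the operations involved are unanimous, a fact I will use below.

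Next I would run the case analysis on $f_j$. If $f_j=\maj$, then $h_j(x,y,z)=\maj(x,y,z)$, so $h_j=\maj$ regardless of $g_j$. If $f_j=\wedge^{(3)}$, then $h_j(x,y,z)=(x\wedge y\wedge z)\wedge g_j(x,y,z)$; this equals $1$ exactly when $x=y=z=1$, since then the first factor is $1$ and $g_j(1,1,1)=1$ by unanimity, while if the inputs are not all $1$ the first factor already vanishes. Hence $h_j=\wedge^{(3)}$. The case $f_j=\vee^{(3)}$ is dual: $h_j(x,y,z)=(x\vee y\vee z)\vee g_j(x,y,z)$ is $0$ exactly when $x=y=z=0$, using $g_j(0,0,0)=0$, so $h_j=\vee^{(3)}$. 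In each of these three cases $h_j\in\{\wedge^{(3)},\vee^{(3)},\maj\}$, independently of the hypothesis on $G$.

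The remaining case is $f_j=\oplus$, that is $j\in J$, which is where the hypothesis on $G$ is genuinely used and which I expect to be the only load-bearing step. Here the identity $\oplus(u,u,v)=v$ gives $h_j(x,y,z)=g_j(x,y,z)$, so that $h_j=g_j$; and by assumption $g_j\in\{\wedge^{(3)},\vee^{(3)},\maj\}$ for every $j\in J$, whence $h_j\in\{\wedge^{(3)},\vee^{(3)},\maj\}$ as well. Combining the four cases yields $h_j\in\{\wedge^{(3)},\vee^{(3)},\maj\}$ for every $j=1,\ldots,n$, which is the claim. The conceptual content is simply that $\star$-ing with $G$ preserves a $\maj$, $\wedge^{(3)}$ or $\vee^{(3)}$ component and overwrites an $\oplus$ component by the corresponding (commutative, non-affine) component of $G$, thereby eliminating all occurrences of $\oplus$; there is no real obstacle beyond bookkeeping the four evaluations correctly.
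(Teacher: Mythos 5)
Your proof is correct and follows essentially the same route as the paper's: a case analysis on $f_j$, where the decisive observation is that $\oplus(u,u,v)=v$ forces $h_j=g_j$ for $j\in J$ (which is where the hypothesis on $G$ enters), while $\wedge^{(3)}$, $\vee^{(3)}$ and $\maj$ reproduce themselves under the composition. Your explicit identities $\wedge^{(3)}(u,u,v)=u\wedge v$, $\vee^{(3)}(u,u,v)=u\vee v$, $\maj(u,u,v)=u$, together with the appeal to unanimity of $g_j$ (needed exactly when $f_j\in\{\wedge^{(3)},\vee^{(3)}\}$ and the inputs are all equal), are the same facts the paper's proof uses, merely organized slightly differently.
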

\begin{proof}
First, let $j\in\n\setminus J$. Then, $f_j\in\{\wedge^{(3)},\vee^{(3)},\maj\}$ and let $x,y,z\in\{0,1\}$ that are not all equal (lest we have nothing to show since all $f_j$, $g_j$ are unanimous). If $f_j=\wedge^{(3)}$, then easily: $$h_j(x,y,z)=\wedge^{(3)}(\wedge^{(3)}(x,y,z),\wedge^{(3)}(x,y,z),g_j(x,y,z))=\wedge^{(3)}(0,0,g_j(x,y,z))=0,$$ which shows that $h_j=\wedge^{(3)}$. Analogously, we show that $f_j=\vee^{(3)}$ implies that $h_j=\vee^{(3)}$. Finally, let $f_j=\maj$ and let $\maj(x,y,z):=z\in\{0,1\}$. Then: $$h_j(x,y,z)=\maj(\maj(x,y,z),\maj(x,y,z),g_j(x,y,z))=\maj(z,z,g_j(x,y,z))=z,$$ which shows that $h_j=\maj$.

Thus, we can now assume that $J\neq\emptyset$. Let $j\in J$. Then, we have that $f_j=\oplus$ and $g_j\in\{\wedge^{(3)},\vee^{(3)},\maj\}$. Thus, we have that: $$h_j(x,y,z)=\oplus(\oplus(x,y,z),\oplus(x,y,z),g_j(x,y,z))=g_j(x,y,z),$$ from which it follows that $h_j\in\{\wedge^{(3)},\vee^{(3)},\maj\}$. The proof is now complete.
\end{proof}

At last, we are ready to prove our final results.

\begin{theorem}\label{thm:monotone,StrongDem}
A Boolean domain $D\subseteq\{0,1\}^n$ admits a $k$-ary StrongDem aggregator if and only if it admits a ternary aggregator $F=(f_1,\ldots,f_n)$ such that $f_j\in\{\wedge^{(3)},\vee^{(3)},\maj\}$, for $j=1,\ldots,n$.
\end{theorem}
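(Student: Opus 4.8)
The plan is to prove the two directions separately; the reverse implication is immediate, while the forward one combines Post's lattice with the operators $\diamond$ and $\star$. For the direction ($\Leftarrow$), I would simply observe that each of $\wedge^{(3)},\vee^{(3)},\maj$ is $1$-immune: fixing the two remaining arguments of any coordinate to $0$ forces $\wedge^{(3)}$ and $\maj$ to output $0$ independently of the third argument, and fixing them to $1$ forces $\vee^{(3)}$ to output $1$. Hence a ternary aggregator $F$ all of whose components lie in $\{\wedge^{(3)},\vee^{(3)},\maj\}$ satisfies property $3$ of Definition \ref{def:types} with $k=3$, i.e. it is a StrongDem aggregator.

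For the direction ($\Rightarrow$), first note that a StrongDem aggregator $S=(s_1,\ldots,s_n)$ has every component $1$-immune, and a $1$-immune function is never a projection, so $S$ is locally non-dictatorial and $D$ is a local possibility domain. By Theorem \ref{thm:upd-char}, $D$ then admits a ternary aggregator $G=(g_1,\ldots,g_n)$ with every $g_j\in\{\wedge^{(3)},\vee^{(3)},\maj,\oplus\}$. The whole problem thus reduces to eliminating the coordinates in $J:=\{j\mid g_j=\oplus\}$; if $J=\emptyset$ we are already done.

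The key step, and the place where the StrongDem hypothesis is genuinely used, is to show that for every $j\in J$ the clone $\Cl_j$ of $j$-th components of aggregators of $D$ contains one of $\wedge,\vee,\maj$. I would argue by contradiction: if $\Cl_j$ contained none of $\wedge,\vee,\maj$ but did contain $\oplus$ (witnessed by $g_j$), then by Post's lattice (Lemma \ref{lem:post} and its refinement, exactly as in the proof of Theorem \ref{thm:monotone}) every function of $\Cl_j$ would be linear. But $s_j\in\Cl_j$ is $1$-immune, and by Lemma \ref{lemma:linear}, together with the trivial check that the only unanimous linear functions of arity at most $2$ are projections, no unanimous linear function is $1$-immune, a contradiction. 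Consequently $\wedge^{(3)},\vee^{(3)}$ or $\maj$ lies in $\Cl_j$ by superposition, so there is a ternary aggregator $A^{(j)}$ with $a^{(j)}_j\in\{\wedge^{(3)},\vee^{(3)},\maj\}$.

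It remains to assemble these local fixes into one aggregator with no $\oplus$-coordinate. First I would normalize each $A^{(j)}$ by setting $\tilde A^{(j)}:=G\diamond A^{(j)}$; since $G$ is commutative in every coordinate, the cited properties of $\diamond$ guarantee that $\tilde A^{(j)}$ has all components in $\{\wedge^{(3)},\vee^{(3)},\maj,\oplus\}$ while the non-$\oplus$ commutative component at coordinate $j$ is preserved. Then, enumerating $J=\{j_1,\ldots,j_r\}$, I would set $H_0:=G$ and $H_t:=H_{t-1}\star \tilde A^{(j_t)}$, using the defining behaviour of $\star$ (Lemma \ref{lem:oplus}): $\star$ replaces each $\oplus$-component of its first argument by the corresponding component of its second and leaves every non-$\oplus$ component unchanged. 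One checks that after step $t$ the coordinates $j_1,\ldots,j_t$ are non-$\oplus$ and stay so (a non-$\oplus$ coordinate is never reintroduced), while coordinates outside $J$ remain non-$\oplus$ throughout; hence $H_r$ is a ternary aggregator for $D$ with every component in $\{\wedge^{(3)},\vee^{(3)},\maj\}$. The main obstacle I anticipate is precisely the clone-theoretic step, namely making rigorous that a unanimous clone containing $\oplus$ but none of $\wedge,\vee,\maj$ is forced inside the linear clone and that $1$-immunity is incompatible with linearity; this is where StrongDem does real work, as the affine example $\rMod(\phi_{14})$ shows a mere local possibility domain can have $\Cl_j$ equal to the linear clone. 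The remaining difficulty is bookkeeping: verifying that the interplay of $\diamond$ (which makes components commutative but may propagate $\oplus$) and $\star$ (which excises $\oplus$ but needs commutative inputs) terminates with no surviving $\oplus$-coordinate.
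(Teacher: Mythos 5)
Your proposal is correct and follows essentially the same route as the paper's proof: reduce to a ternary aggregator via Theorem \ref{thm:upd-char}, use Post's lattice together with Lemma \ref{lemma:linear} to show each clone $\Cl_j$ with $g_j=\oplus$ must contain $\wedge$, $\vee$ or $\maj$ (this is where $1$-immunity of the StrongDem components enters), then normalize with $\diamond$ and excise the $\oplus$-components iteratively with $\star$. Your bookkeeping for the $\star$-iteration is in fact slightly more explicit than the paper's, which simply cites Lemma \ref{lem:oplus}, but the argument is the same.
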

\begin{proof}
It is very easy to see that all the functions in $\{\wedge^{(3)},\vee,\maj\}$ are $1$-immune. Thus, we only need to prove the forward direction of the theorem.

To that end, let $F=(f_1,\ldots,f_n)$ be a $k$-ary StrongDem aggregator for $D$. Then, by Theorem \ref{thm:upd-char}, there exists a ternary aggregator $G=(g_1,\ldots,g_n)$ such that $g_j\in\{\wedge^{(3)},\vee^{(3)},\maj,\oplus\}$ for $j=1,\ldots,n$. Let $J=\{j\mid f_j=\oplus\}$. If $J=\emptyset$, then we have nothing to prove. Otherwise, consider the clones $\Cl_j$, for each $j\in J$.

Suppose now that there exists a $j\in J$, such that $\Cl_j$ contains neither $\wedge$, nor $\vee$, nor $\maj$. By Post's classification of clones of Boolean functions (see \cite{post1941two,bohler2003playing}) and since $\Cl_j$ contains $\oplus$ and only unanimous functions, $\Cl_j$ contains only linear unanimous functions. Again, by Lemma \ref{lemma:linear}, $\Cl_j$ does not contain any $1$-immune function. Contradiction.

Thus, for each $j\in J$, it holds that $\Cl_j$ contains either $\wedge$, or $\vee$ or $\maj$. In the first two cases, $\Cl_j$ obviously contains $\wedge^{(3)}$ or $\vee^{(3)}$ too respectively. Then, it holds that for each $j\in J$ there exists an aggregator $H^j=(h_1^j,\ldots,h_n^j)$, such that $h_j^j\in\{\wedge^{(3)},\vee^{(3)},\maj\}$. Let $J:=\{j_1,\ldots,j_t\}$.

We will now perform a series of iterative combinations between $G$ and the various $H^j$'s, using the $\diamond$ and $\star$ operators, in order to obtain the required aggregator.

First, let $G^j=G\diamond H^j$, for all $j\in J$. By Kirousis et al. \cite[Lemma $5.10$]{kirousis2017aggregation}, we have that $$G_i^j\in\{\wedge^{(3)},\vee^{(3)},\maj,\oplus\},$$ for all $i\in\n$ and $j\in J$. Furthermore, $$G_{j_s}^{j_s}\in\{\wedge^{(3)},\vee^{(3)},\maj\},$$ for $s=1,\ldots,t$. Thus for the aggregator:
$$G^*:=(\cdots((G\star G^{j_1})\star G^{j_2})\star\cdots\star G^{j_t}),$$
we have, by Lemma \ref{lem:oplus}:
$$G_j^*\in\{\wedge^{(3)},\vee^{(3)},\maj\},$$ for $j=1,\ldots,n$, which concludes the proof.
\end{proof}
Recall Definition \ref{def:upic}. We say that a local possibility integrity constraint is $\oplus$-\emph{free}, if $V_2=\emptyset$. Thus, we obtain the following syntactic characterization.
\begin{corollary}\label{cor:monotone,StrongDem}
A Boolean domain $D\subseteq\{0,1\}^n$ admits a $k$-ary StrongDem aggregator if and only if there exists an $\oplus$-free local possibility integrity constraint whose set of satisfying assignments equals $D$.
\end{corollary}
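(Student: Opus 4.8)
The plan is to derive the corollary by combining Theorem \ref{thm:monotone,StrongDem} with the explicit constructions in the proof of Theorem \ref{thm:updupic}, observing that the $\oplus$-free condition $V_2=\emptyset$ is exactly the syntactic shadow of an aggregator whose components avoid $\oplus$. The key structural fact I would isolate first is that, in the proof of Theorem \ref{thm:updupic}, the set $V_2$ is built precisely from the coordinates $K=\{k\mid g_k=\oplus\}$ on which the ternary aggregator acts by $\oplus$, and conversely the $(V_0,V_2)$-generalized clauses are exactly the ones forcing an $\oplus$ component. Since $\maj$ and $\oplus$ are self-complementary while $\wedge^{(3)}$ and $\vee^{(3)}$ are swapped under the complementation $d\mapsto d^*$ of Lemma \ref{lem:Horn}, the presence or absence of an $\oplus$ component is invariant under the renaming used throughout, so I may freely pass between $D$ and $D^*$.

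For the forward direction, suppose $D$ admits a $k$-ary StrongDem aggregator. By Theorem \ref{thm:monotone,StrongDem}, $D$ then admits a ternary aggregator $F=(f_1,\ldots,f_n)$ with every $f_j\in\{\wedge^{(3)},\vee^{(3)},\maj\}$, i.e.\ with no $\oplus$ component. I would now run the forward construction of Theorem \ref{thm:updupic} verbatim on this $F$: after renaming the coordinates with $f_j=\vee^{(3)}$ we obtain $D^*$ with aggregator $G$ whose components lie in $\{\wedge^{(3)},\maj\}$, so the index set $K=\{k\mid g_k=\oplus\}$ is empty. Consequently $V_2=\emptyset$, and the \lpic\ produced by that construction is $\oplus$-free and has model set $D$.

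For the converse, let $\phi$ be an $\oplus$-free \lpic\ with $\rMod(\phi)=D$, so its decomposition has $V_2=\emptyset$ and hence no $(V_0,V_2)$-generalized clauses. Running the backward construction of Theorem \ref{thm:updupic} with $K=\emptyset$, the domain $D^*$ obtained after renaming is closed under a ternary aggregator with components only in $\{\wedge^{(3)},\maj\}$; un-renaming turns the $\wedge^{(3)}$'s on renamed coordinates back into $\vee^{(3)}$'s, so $D$ itself admits a ternary aggregator with all components in $\{\wedge^{(3)},\vee^{(3)},\maj\}$. Applying the (easy) backward direction of Theorem \ref{thm:monotone,StrongDem} then yields a StrongDem aggregator for $D$.

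The main obstacle I anticipate is not any single computation but making the correspondence ``$V_2=\emptyset \iff$ no $\oplus$ component'' airtight, since an \lpic\ may in principle admit several decompositions $(V_0,V_1,V_2)$. I would need to check that $\oplus$-freeness, read as the existence of \emph{some} decomposition with $V_2=\emptyset$, is exactly what the construction of Theorem \ref{thm:updupic} produces and consumes; this amounts to verifying that a coordinate is forced into $V_2$ iff every ternary aggregator witnessing local possibility must use $\oplus$ there, which is where the self-complementarity of $\oplus$ and $\maj$ and the invariance of $\oplus$-components under renaming do the real work.
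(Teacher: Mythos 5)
Your proposal is correct and is essentially the paper's own (implicit) argument: the corollary follows by feeding the ternary aggregator of Theorem \ref{thm:monotone,StrongDem} through the two constructions in the proof of Theorem \ref{thm:updupic}, where $V_2$ is built exactly from the coordinates carrying $\oplus$. Your anticipated obstacle about non-unique decompositions is not actually an issue: $\oplus$-freeness need only be read existentially (\emph{some} decomposition has $V_2=\emptyset$), and both directions of your argument already work under that reading, so no claim of the form ``a coordinate is forced into $V_2$ iff every witnessing aggregator uses $\oplus$ there'' is required.
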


\subsection{Systematic Aggregators}\label{ssec:syst}
We end this work with a discussion concerning systematic aggregators. This is a natural requirement for aggregators from a Social Choice point of view, given that the issues that need to be decided are of the same nature. Recall that $F=(f_1,\ldots,f_n)$ is systematic if $f_1=f_2=\ldots=f_n$.
\begin{definition}\label{def:poly}
Let $D\subseteq\{0,1\}^n$ be a Boolean domain and $f:\{0,1\}^k\mapsto\{0,1\}$ a $k$-ary Boolean operation. $f$ is a \emph{polymorphism} for $D$ (or $f$ \emph{preserves} $D$ or $D$ \emph{is closed under} $f$) if, for all $x^1,\ldots,x^k\in D$:$$(f(x_1),\ldots,f(x_n))\in D,$$ where $x^i=(x_1^i,\ldots,x_n^i)$ and $x_j=(x_j^1,\ldots,x_j^k)$, $i=1,\ldots,k$, $j=1,\ldots,n$.
\end{definition}
The notion of polymorphisms can be found in many standard texts concerning Abstract and Universal Algebra (see e.g. Szendrei \cite{szendrei1986clones}). The following is obvious by considering the definitions of an aggregator and a polymorphism.
\begin{lemma}\label{lem:systpol}
Let $D\subseteq\{0,1\}^n$ be a Boolean domain and $F=\bar{f}$ a systematic $n$-tuple of $k$-ary Boolean functions. Then $F$ is an aggregator for $D$ if and only if $f$ is a polymorphism for $D$.
\end{lemma}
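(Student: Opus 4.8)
The plan is to observe that, for a systematic tuple $F=\bar f$, the domain‑preservation condition in the definition of an aggregator becomes, verbatim, the defining condition of a polymorphism, so that the proof reduces to matching the two definitions term by term. There is essentially no content beyond bookkeeping, which is why the statement is flagged as obvious.

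First I would fix $k$ vectors $x^1,\ldots,x^k\in D$ and arrange them as the rows of a $k\times n$ matrix with entries $x^i_j$, so that the $j$-th column is $x_j=(x^1_j,\ldots,x^k_j)\in\{0,1\}^k$. Applying $F=\bar f$ to this input in the issue-by-issue fashion produces the vector $(f(x_1),\ldots,f(x_n))$, since every component of $F$ equals $f$. The second bullet in the definition of an aggregator asserts precisely that this vector lies in $D$ whenever all the rows $x^i$ do; and this is exactly the condition in Definition \ref{def:poly} stating that $f$ preserves $D$. Hence the closure requirement for $\bar f$ and the polymorphism requirement for $f$ are literally the same statement, quantified over the same tuples $x^1,\ldots,x^k\in D$.

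For the forward direction, if $\bar f$ is an aggregator then in particular it satisfies the closure bullet, which is exactly the assertion that $f$ is a polymorphism of $D$. For the converse, if $f$ is a polymorphism then the closure bullet holds for $\bar f$, and it remains only to check unanimity; here I would invoke the standing convention of this section that the Boolean functions under consideration are unanimous (idempotent), equivalently observing that the unanimity of the single component $f$ transfers verbatim to every component of $\bar f$, so that $\bar f$ meets both clauses of the aggregator definition.

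I do not expect any genuine obstacle. The only point that demands a word of care is the unanimity clause, which is part of the definition of an aggregator but not of a polymorphism; since we have agreed throughout to work with unanimous functions, this discrepancy is vacuous, and the equivalence is otherwise a direct transcription of the two definitions.
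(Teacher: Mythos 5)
Your proposal is correct and takes essentially the same route as the paper, which offers no argument beyond declaring the lemma ``obvious by considering the definitions of an aggregator and a polymorphism''; your term-by-term matching of the closure condition is exactly that definitional unfolding. Your flagging of the unanimity clause (required of aggregators but absent from Definition~\ref{def:poly} of a polymorphism) is in fact the one genuine subtlety the paper glosses over, and restricting attention to unanimous $f$ is the right way to resolve it, since it is what makes the ``if'' direction literally true.
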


Polymorphisms have been extensively studied in the bibliography and they play a central role in Post's results we discussed in Subsection \ref{ssec:threetypeschar}. These results where also connected with Complexity Theory, where they can be used to provide an alternative proof to the Dichotomy Theorem in the complexity of the \emph{satisfiability problem} (see \cite{bohler2003playing,bohler2004playing,jeavons1995algebraic,jeavons1999determine}). Here, we use a corollary of this Theorem, that can be obtained directly by Post's Lattice, without considering complexity theoretic notions. For an direct algebraic approach, see also Szendrei \cite[Proposition $1.12$]{szendrei1986clones} (by noting that the only Boolean \emph{semi-projections} of arity at least $3$ are projections).

\begin{corollary}\label{cor:pol}
Let $D\subseteq\{0,1\}^n$ be a Boolean domain. Then, either $D$ admits only essentially unary functions, or it is closed under $\wedge$, $\vee$, $\maj$ or $\oplus$.
\end{corollary}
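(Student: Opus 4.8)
The plan is to recognize the statement as a purely clone-theoretic fact and read it off Post's lattice. First I would note, via Lemma \ref{lem:systpol} and Definition \ref{def:poly}, that the set $\mathrm{Pol}(D)$ of all polymorphisms of $D$ is a \emph{clone}: it contains every projection and is closed under superposition (the latter being the same computation already used for the component sets $\Cl_j$). Hence Corollary \ref{cor:pol} is equivalent to the assertion that an arbitrary clone of Boolean operations either consists solely of essentially unary functions or contains one of $\wedge,\vee,\maj,\oplus$. This is exactly the non-idempotent counterpart of Lemma \ref{lem:post}, where the role played there by the clone of projections is now played by the clone $U=\langle\neg,0,1\rangle$ of all essentially unary functions.

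The main tool is Rosenberg's description of minimal clones, as packaged in Szendrei \cite[Proposition $1.12$]{szendrei1986clones}: every clone properly containing the clone of projections contains a \emph{minimal} clone, and each minimal clone is generated by a unary operation, a binary idempotent operation, a ternary majority operation, a ternary affine (minority) operation, or a semiprojection of arity at least $3$. Specializing to the two-element domain, the binary idempotent non-projections are precisely $\wedge$ and $\vee$, the only majority operation is $\maj$, and the only affine one is $\oplus$; moreover --- and this is the simplification the two-element setting affords --- every Boolean semiprojection of arity $\geq 3$ is a projection, so that type produces no minimal clone. The unary type yields only $\langle\neg\rangle$, $\langle 0\rangle$ and $\langle 1\rangle$, all of which lie inside $U$. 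Thus the minimal Boolean clones are exactly $\langle\neg\rangle,\langle 0\rangle,\langle 1\rangle,\langle\wedge\rangle,\langle\vee\rangle,\langle\maj\rangle,\langle\oplus\rangle$, and among these the \emph{idempotent} ones are precisely $\langle\wedge\rangle,\langle\vee\rangle,\langle\maj\rangle,\langle\oplus\rangle$.

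With this in hand I would argue as follows. If $\mathrm{Pol}(D)\subseteq U$ we are in the first alternative. Otherwise pick $f\in\mathrm{Pol}(D)$ that is not essentially unary, and examine its diagonal $u(x):=f(x,\ldots,x)$, which is one of $\mathrm{id},\neg,0,1$ and itself lies in $\mathrm{Pol}(D)$ (being a superposition of $f$ with the identity in each slot). When $u=\mathrm{id}$ the function $f$ is already idempotent and is not a projection; when $u=\neg$ the superposition $h=f(f,\ldots,f)$ satisfies $h=\neg f$, so it is an idempotent non-projection in $\mathrm{Pol}(D)$; and when $u$ is constant that constant belongs to $\mathrm{Pol}(D)$ and can be substituted into coordinates of $f$ to expose a non-projection idempotent minor. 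In each case $\mathrm{Pol}(D)$ contains an idempotent operation $g$ that is not a projection, so the clone $\langle g\rangle$ it generates consists of idempotent functions and properly contains the projections; hence $\langle g\rangle$ contains a minimal clone, which must itself be idempotent and therefore one of $\langle\wedge\rangle,\langle\vee\rangle,\langle\maj\rangle,\langle\oplus\rangle$. Thus $D$ is closed under one of $\wedge,\vee,\maj,\oplus$. The one genuinely delicate point is the constant-diagonal case, where one must verify that a non-projection idempotent operation can indeed be manufactured by fixing coordinates; this is exactly the bookkeeping that Szendrei's proposition, together with the collapse of Boolean semiprojections to projections, is designed to absorb, which is why I would lean on that citation rather than redraw Post's lattice by hand.
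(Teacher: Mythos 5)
Your overall route --- view $\mathrm{Pol}(D)$ as a clone and read the statement off the classification of minimal Boolean clones --- is exactly the route the paper itself takes (the paper offers nothing beyond the citation to Post's lattice and to Szendrei's Proposition 1.12, plus the remark that Boolean semiprojections of arity at least $3$ are projections). Your reduction is correct in two of its three cases: when the diagonal $f(x,\ldots,x)$ is the identity, $f$ itself is an idempotent non-projection, and when it is $\neg$, the superposition $h=f(f,\ldots,f)=\neg\circ f$ is one. The subsequent step --- the clone generated by an idempotent non-projection consists of idempotent operations, hence any minimal clone inside it is of Rosenberg type (ii), (iii) or (iv), which on $\{0,1\}$ forces one of $\wedge,\vee,\maj,\oplus$ --- is also sound.

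The gap is the constant-diagonal case, and it is genuine. The claim that an idempotent non-projection can be ``exposed by substituting the constant into coordinates of $f$'' is false: take $f(x,y)=x\wedge\neg y$, which is not essentially unary and has diagonal constantly $0$. Fixing coordinates to $0$ (or identifying variables) yields only the constant $0$ and the projection $x$; no idempotent non-projection arises as a minor of $f$. One genuinely needs superposition, e.g.\ $f(x,f(x,y))=x\wedge y$. Nor does the appeal to Szendrei's Proposition 1.12 repair this: Rosenberg's theorem restricts what a minimal clone can look like, but it does not tell you that the minimal clone you find inside $\mathrm{Pol}(D)$ is non-unary --- indeed $\langle x\wedge\neg y\rangle$ contains the unary minimal clone $\langle 0\rangle$, which is precisely the degenerate outcome your reduction was designed to exclude. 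So in the constant case you still owe an argument (a short case analysis of Boolean functions with constant diagonal, or a direct inspection of Post's lattice, which is what the paper's primary citation amounts to); as written, your proof does not supply it.
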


This directly implies that domains admitting non-dictatorial systematic aggregators are either Horn, dual-Horn, bijunctive or affine. We thus immediately obtain the following characterization.
\begin{corollary}\label{cor:syst}
A Boolean domain $D\subseteq\{0,1\}^n$ admits a $k$-ary non-dictatorial systematic aggregator if and only if there exists an integrity constraint which is either Horn, dual Horn, bijunctive or affine, whose set of satisfying assignments equals $D$. 
\end{corollary}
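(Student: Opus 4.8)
The plan is to reduce everything to the dichotomy already packaged in Corollary~\ref{cor:pol}, using Lemma~\ref{lem:systpol} as the bridge between systematic aggregators and polymorphisms, and then to invoke the classical syntactic characterizations of the four closed classes recalled in Section~\ref{sec:intro}.

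For the forward direction, I would suppose $D$ admits a $k$-ary non-dictatorial systematic aggregator $\bar f$. By Lemma~\ref{lem:systpol}, $f$ is a polymorphism of $D$, and since $\bar f$ is systematic it is dictatorial exactly when $f$ is a projection $\pr_d^k$; non-dictatoriality therefore amounts to saying $f$ is not a projection. Now I would apply Corollary~\ref{cor:pol}: either every polymorphism of $D$ is essentially unary, or $D$ is closed under one of $\wedge,\vee,\maj,\oplus$. The first alternative is impossible here, and this is the one step that needs a short argument: $f$ is unanimous (every aggregator component is), and the only unanimous essentially unary Boolean functions are the projections, so the first alternative would force $f$ to be a projection, contradicting non-dictatoriality. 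Hence $D$ is closed under $\wedge$, $\vee$, $\maj$ or $\oplus$, and the classical results recalled in the introduction — Dechter and Pearl for $\wedge$/$\vee$ and Schaefer for $\maj$/$\oplus$ — yield that $D=\rMod(\phi)$ for a $\phi$ that is Horn, dual Horn, bijunctive or affine respectively.

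For the converse, I would suppose $D=\rMod(\phi)$ with $\phi$ of one of the four types. Each type corresponds to closure under one of the four operators: Horn under $\wedge$, dual Horn under $\vee$, bijunctive under $\maj$, and affine under $\oplus$ (the ternary sum $\bmod\,2$). Thus the relevant operator $f\in\{\wedge,\vee,\maj,\oplus\}$ is a polymorphism of $D$, so by Lemma~\ref{lem:systpol} the systematic tuple $\bar f$ is an aggregator; and since none of these four operators is a projection, $\bar f$ is non-dictatorial. The only thing to check is that each operator is unanimous, which is immediate ($\maj(b,b,b)=b$ and $\oplus(b,b,b)=b$, and likewise for $\wedge,\vee$), so $\bar f$ is a legitimate unanimous non-dictatorial systematic aggregator.

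The only real subtlety, and the main obstacle, is the first alternative of Corollary~\ref{cor:pol}: one must ensure the essentially-unary case collapses to a projection. This is precisely where unanimity of aggregator components is used, and it is what lets non-dictatoriality exclude that case outright rather than merely weaken it. Everything else is bookkeeping across the four cases and a direct appeal to the stated characterizations.
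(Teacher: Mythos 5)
Your proof is correct and takes essentially the same route as the paper: the paper obtains Corollary~\ref{cor:syst} directly from Corollary~\ref{cor:pol} via Lemma~\ref{lem:systpol} and the classical Dechter--Pearl and Schaefer characterizations, exactly as you do. The only difference is that you make explicit the step the paper leaves implicit---that unanimity of aggregator components collapses the essentially-unary alternative of Corollary~\ref{cor:pol} to projections, hence to dictatorial aggregators---which the paper covers by its earlier remark that projections are the only unanimous essentially unary functions.
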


\begin{remark}\label{pd}
Why does $\overline{\maj}$ appears here, although it did not in the characterization of possibility domains (Theorem \ref{thm:DH})? In the Boolean case, a domain admitting $\overline{\maj}$, also admits a binary aggregator $F=(f_1,\ldots,f_n)$, such that $f_j\in\{\wedge,\vee\}$, $j=1,\ldots,n$ (see Kirousis et al.\cite[Theorem $3.7$]{kirousis2017aggregation}). The problem is that this aggregator need not be systematic. In fact, the proof of the aforementioned theorem would produce a systematic aggregator only if $(0,\ldots,0)$ or $(1,\ldots,1)\in D$. 
\end{remark}

Now, what if want to characterize domains admitting some of the various non-dictatorial aggregators we discussed, but requiring also that these aggregators satisfy systematicity? By Theorem \ref{thm:gendict}, we know that Corollary \ref{cor:syst} works for domains admitting systematic aggregators that are not generalized dictatorships too. Furthermore, all the aggregators (resp. integrity constraints) of Corollary \ref{cor:pol} (resp. \ref{cor:syst}) are locally non-dictatorial and anonymous aggregators (resp. lpic's), thus we also have characterizations for domains admitting systematic locally non-dictatorial or anonymous aggregators.

For domains admitting monotone or StrongDem systematic aggregators, we will obtain the result by Lemma \ref{lemma:linear} and Post's Lattice. We will again use the terminology of polymorphisms.
\begin{corollary}\label{cor:charsyst1}
A domain $D\subseteq\{0,1\}^n$ admits a $k$-ary systematic non-dictatorial monotone or StrongDem aggregator if and only if it is closed under $\wedge$, $\vee$ or $\maj$. 
\end{corollary}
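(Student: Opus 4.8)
The plan is to translate the statement into the language of polymorphisms via Lemma~\ref{lem:systpol} and then read off the answer from Post's classification. The backward direction is immediate: if $D$ is closed under $\wedge$, $\vee$ or $\maj$, then by Lemma~\ref{lem:systpol} the systematic tuple $\bar{\wedge}$, $\bar{\vee}$ or $\overline{\maj}$ is an aggregator for $D$; each of these is non-dictatorial (none of $\wedge,\vee,\maj$ is a projection) and each component is both monotone and $1$-immune (e.g. $\maj(x,0,0)=0$ for all $x$), so the aggregator is in particular monotone, and StrongDem. Hence essentially all the work lies in the forward direction.

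For the forward direction, suppose $D$ admits a systematic non-dictatorial aggregator $\bar f$ that is monotone or StrongDem. By Lemma~\ref{lem:systpol}, $f$ is a polymorphism of $D$; it is unanimous, and since $\bar f$ is non-dictatorial, $f$ is not a projection. I would then pass to the set $\mathcal{C}$ of all unanimous polymorphisms of $D$. Since a superposition of unanimous functions is again unanimous, and the polymorphisms of $D$ are closed under superposition and contain all projections, $\mathcal{C}$ is a clone containing only unanimous functions, with $f\in\mathcal{C}$. By Lemma~\ref{lem:post}, either $\mathcal{C}$ contains only projections---impossible, as $f\in\mathcal{C}$ is not a projection---or at least one of $\wedge,\vee,\maj,\oplus$ lies in $\mathcal{C}$. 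Since $g\in\mathcal{C}$ says exactly that $D$ is closed under $g$, we are done as soon as $\wedge$, $\vee$ or $\maj$ belongs to $\mathcal{C}$.

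It therefore remains to exclude the case where $\wedge,\vee,\maj\notin\mathcal{C}$ while $\oplus\in\mathcal{C}$. Exactly as in the proof of Theorem~\ref{thm:monotone}, Post's classification then forces $\mathcal{C}$ to consist solely of linear functions, so $f$ itself is linear. Now I would split on the arity $k$ of $f$ (recall $k\geq 2$): if $k=2$, then by Lemma~\ref{lem:basics} a binary unanimous linear function has exactly one nonzero coefficient and is thus a projection, contradicting non-dictatoriality; if $k\geq 3$, then by Lemma~\ref{lemma:linear} $f$ is either essentially unary, hence a projection (again a contradiction, since $f$ is unanimous and non-dictatorial), or neither monotone nor $1$-immune, contradicting the hypothesis that $f$ is monotone or StrongDem. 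Either way we reach a contradiction, so $\oplus$ cannot be the only one of the four operations available in $\mathcal{C}$, and $D$ is closed under $\wedge$, $\vee$ or $\maj$.

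The main obstacle is this final case analysis on the affine clone: one must be careful to work with the clone $\mathcal{C}$ of \emph{unanimous} polymorphisms (so that Lemma~\ref{lem:post} applies) rather than the full polymorphism clone, and to invoke Post's lattice correctly to pin $\mathcal{C}$ down to the linear functions before Lemma~\ref{lemma:linear} delivers the contradiction. This is precisely the step where the monotone/StrongDem hypothesis---which rules out $\oplus$---is essential, and it explains why $\oplus$ survives in the characterization of possibility domains (Theorem~\ref{thm:DH}) but not here.
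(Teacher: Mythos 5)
Your proof is correct and follows essentially the same route as the paper: pass to the clone of (unanimous) polymorphisms via Lemma~\ref{lem:systpol}, invoke Post's classification (the paper uses Corollary~\ref{cor:pol}, you use Lemma~\ref{lem:post}, both equivalent here), and rule out the $\oplus$-only case with Lemma~\ref{lemma:linear}. If anything, you are slightly more careful than the paper's three-sentence proof, since you explicitly dispose of the binary case (where Lemma~\ref{lemma:linear} does not apply) via Lemma~\ref{lem:basics} and spell out the backward direction.
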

\begin{proof}
It is known (and straightforward to see) that the set of polymorphisms of a domain is a clone. Let $\Cl$ be the Boolean clone of polymorphisms of $D$. Since it admits a non-dictatorial aggregator, at least one operator from $\wedge,\vee,\maj,\oplus$ is in $\Cl$. By Lemma \ref{lemma:linear}, this cannot be only $\oplus$.
\end{proof}
Thus, finally, we have the following result.
\begin{corollary}\label{cor:charsyst2}
A Boolean domain $D\subseteq\{0,1\}^n$ admits a $k$-ary systematic non-dictatorial monotone or StrongDem aggregator if and only if there exists an integrity constraint which is either Horn, dual Horn or bijunctive, whose set of satisfying assignments equals $D$. 
\end{corollary}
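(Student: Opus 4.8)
The plan is to obtain Corollary \ref{cor:charsyst2} as an immediate consequence of Corollary \ref{cor:charsyst1} combined with the three classical syntactic characterizations of domains closed under $\wedge$, $\vee$ and $\maj$ that were recalled in the introduction. First I would invoke Corollary \ref{cor:charsyst1}, which already reduces the statement about systematic non-dictatorial monotone or StrongDem aggregators to the purely algebraic condition that $D$ be closed under one of $\wedge$, $\vee$ or $\maj$. By Lemma \ref{lem:systpol}, closure of $D$ under a Boolean operator $f$ (i.e.\ $f$ being a polymorphism of $D$) is the same as the systematic tuple $\bar{f}$ being an aggregator for $D$; hence the passage between the polymorphism language of Corollary \ref{cor:charsyst1} and the aggregator language is automatic, and it remains only to translate each of the three closure properties into a syntactic form.

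For this translation I would appeal to the classical results. By Dechter and Pearl \cite{dechter1992structure}, $D$ is closed under $\wedge$ if and only if $D=\rMod(\phi)$ for some Horn formula $\phi$, and dually $D$ is closed under $\vee$ if and only if $D=\rMod(\phi)$ for some dual Horn formula. By Schaefer \cite{schaefer1978complexity}, $D$ is closed under the ternary majority operator $\maj$ if and only if $D=\rMod(\phi)$ for some bijunctive formula $\phi$. Concatenating these equivalences with Corollary \ref{cor:charsyst1} yields exactly the asserted biconditional between admitting a systematic non-dictatorial monotone or StrongDem aggregator and being the model set of a Horn, dual Horn or bijunctive formula.

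The only point requiring a little care is the bookkeeping of the three-way disjunction. In the forward direction, Corollary \ref{cor:charsyst1} guarantees at least one of the three closure properties, and each of the three classical characterizations then supplies a formula of the corresponding type, so $D$ is the model set of a Horn, dual Horn or bijunctive integrity constraint. In the reverse direction one starts from a formula of one of the three types, reads off the corresponding closure property, and feeds it back into Corollary \ref{cor:charsyst1} to produce an aggregator of the required kind; here it is worth noting explicitly that each of $\bar{\wedge}$, $\bar{\vee}$ and $\overline{\maj}$ is systematic, non-dictatorial, monotone and $1$-immune, so the aggregator delivered is genuinely of the stated type. I do not anticipate any real obstacle: the proof is a short chain of already-established equivalences, and the substantive work has all been done in proving Corollary \ref{cor:charsyst1} (via Lemma \ref{lemma:linear} and Post's lattice) and in the classical identifications imported from Dechter--Pearl and Schaefer.
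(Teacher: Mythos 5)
Your proposal is correct and follows essentially the same route as the paper, which presents this corollary as an immediate consequence of Corollary \ref{cor:charsyst1} combined with the classical characterizations of Dechter--Pearl (closure under $\wedge$, resp.\ $\vee$, equals being Horn, resp.\ dual Horn) and Schaefer (closure under $\maj$ equals being bijunctive), with Lemma \ref{lem:systpol} bridging polymorphisms and systematic aggregators. Your explicit remark that $\bar{\wedge}$, $\bar{\vee}$ and $\overline{\maj}$ are each systematic, non-dictatorial, monotone and $1$-immune is exactly the (implicit) verification needed for the reverse direction, so nothing is missing.
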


\section*{Concluding remarks} 
It is known that any  domain on $n$ issues  can be represented either by $n$ formulas $\phi_1, \ldots, \phi_n$ (an agenda), in which case the domain is the set of binary $n$-vectors, the $i$-component of which represents the acceptance or rejection of $\phi_i$ in a consistent way (logic-based approach), or, alternatively, by a single formula $\phi$ of $n$ variables (an integrity constraint), in which case the domain is the set of models of $\phi$. In the former case, there are results, albeit of non-algorithmic nature, that give us conditions on the syntactic form of the $\phi_i$’s, so that the domain accepts a non-dictatorial aggregator. In this work, we give necessary and sufficient conditions on the syntactic form of formulas to be integrity constraints of domains that accept various kinds of non-dictatorial aggregators. For domains that admit non-dictatorial aggregators, or aggregators that are not generalized dictatorships, we call such formulas possibility integrity constraints and furthermore, we show that a subclass of such formulas, the separable and renamable partially Horn formulas, describe domains admitting monotone non-dictatorial aggregators. Then, we show that local possibility domains and domains admitting anonymous aggregators coincide and are described by local possibility integrity constraints, while domains admitting StrongDem aggregators are described by a subclass of local possibility integrity constraints we called $\oplus$-free. Finally, we discuss the corresponding results for systematic aggregators, which are in fact polymorphisms of the domain. Our  results are algorithmic, in the sense that (i) recognizing integrity constraints of the above types can be implemented in time linear in the length of the input formula and (ii) given a domain admitting some of the above non-dictatorial aggregators, a corresponding integrity  constraint, whose number of clauses is polynomial in the size of the domain, can be constructed in time polynomial in the size of the domain. Our proofs draw  from results  in judgment aggregation theory as well from results  about propositional formulas and logical relations.

\section*{Aknowledgements} We are grateful to Bruno Zanuttini  for his comments that improved the presentation and simplified several proofs. Lefteris Kirousis is grateful to Phokion Kolaitis for  initiating him to the  area of Computational Social Choice Theory. We thank Eirini Georgoulaki for her valuable help in the final stages of writing this paper.


\begin{thebibliography}{10}

\bibitem{bohler2003playing}
Elmar B{\"o}hler, Nadia Creignou, Steffen Reith, and Heribert Vollmer.
\newblock Playing with {B}oolean blocks, part {I}: Post’s lattice with
  applications to complexity theory.
\newblock In {\em SIGACT News}. Citeseer, 2003.

\bibitem{bohler2004playing}
Elmar B{\"o}hler, Nadia Creignou, Steffen Reith, and Heribert Vollmer.
\newblock Playing with {B}oolean blocks, part {II}: Constraint satisfaction
  problems.
\newblock In {\em ACM SIGACT-Newsletter}. Citeseer, 2004.

\bibitem{bulatov2016conservative}
Andrei~A Bulatov.
\newblock Conservative constraint satisfaction re-revisited.
\newblock {\em Journal of Computer and System Sciences}, 82(2):347--356, 2016.

\bibitem{cariani2008decision}
Fabrizio Cariani, Marc Pauly, and Josh Snyder.
\newblock Decision framing in judgment aggregation.
\newblock {\em Synthese}, 163(1):1--24, 2008.

\bibitem{creignou1997generating}
Nadia Creignou and J-J H{\'e}brard.
\newblock On generating all solutions of generalized satisfiability problems.
\newblock {\em RAIRO-Theoretical Informatics and Applications}, 31(6):499--511,
  1997.

\bibitem{dechter1992structure}
Rina Dechter and Judea Pearl.
\newblock Structure identification in relational data.
\newblock {\em Artificial Intelligence}, 58(1-3):237--270, 1992.

\bibitem{del20002}
Alvaro del Val.
\newblock On 2-sat and renamable {H}orn.
\newblock In {\em Proceedings of the National Conference on Artificial
  Intelligence}, pages 279--284. Menlo Park, CA; Cambridge, MA; London; AAAI
  Press; MIT Press; 1999, 2000.

\bibitem{DBLP:conf/icalp/DiazKKL19}
Josep D{\'{\i}}az, Lefteris~M. Kirousis, Sofia Kokonezi, and John Livieratos.
\newblock Algorithmically efficient syntactic characterization of possibility
  domains.
\newblock In {\em 46th International Colloquium on Automata, Languages, and
  Programming, {ICALP} 2019, July 9-12, 2019, Patras, Greece.}, pages
  50:1--50:13, 2019.
\newblock URL: \url{https://doi.org/10.4230/LIPIcs.ICALP.2019.50}, \href
  {http://dx.doi.org/10.4230/LIPIcs.ICALP.2019.50}
  {\path{doi:10.4230/LIPIcs.ICALP.2019.50}}.

\bibitem{dietrich2007generalised}
Franz Dietrich.
\newblock A generalised model of judgment aggregation.
\newblock {\em Social Choice and Welfare}, 28(4):529--565, 2007.

\bibitem{dokow2009aggregation}
Elad Dokow and Ron Holzman.
\newblock Aggregation of binary evaluations for truth-functional agendas.
\newblock {\em Social Choice and Welfare}, 32(2):221--241, 2009.

\bibitem{dokow2010aggregation}
Elad Dokow and Ron Holzman.
\newblock Aggregation of binary evaluations.
\newblock {\em Journal of Economic Theory}, 145(2):495--511, 2010.

\bibitem{elmasri2016fundamentals}
Ramez Elmasri and Sham Navathe.
\newblock {\em Fundamentals of database systems}.
\newblock Pearson London, 2016.

\bibitem{enderton2001mathematical}
Herbert~B. Enderton.
\newblock {\em A mathematical introduction to logic}.
\newblock Elsevier, 2001.

\bibitem{endriss2015complexity}
Ulle Endriss and Ronald de~Haan.
\newblock Complexity of the winner determination problem in judgment
  aggregation: Kemeny, slater, tideman, young.
\newblock In {\em Proceedings of the 2015 International Conference on
  Autonomous Agents and Multiagent Systems}, pages 117--125. International
  Foundation for Autonomous Agents and Multiagent Systems, 2015.

\bibitem{grandi2011binary}
Umberto Grandi and Ulle Endriss.
\newblock Binary aggregation with integrity constraints.
\newblock In {\em IJCAI Proceedings-International Joint Conference on
  Artificial Intelligence}, volume~22, page 204, 2011.

\bibitem{grandi2013lifting}
Umberto Grandi and Ulle Endriss.
\newblock Lifting integrity constraints in binary aggregation.
\newblock {\em Artificial Intelligence}, 199:45--66, 2013.

\bibitem{jeavons1995algebraic}
Peter Jeavons and David Cohen.
\newblock An algebraic characterization of tractable constraints.
\newblock In {\em International Computing and Combinatorics Conference}, pages
  633--642. Springer, 1995.

\bibitem{jeavons1999determine}
Peter Jeavons, David Cohen, and Marc Gyssens.
\newblock How to determine the expressive power of constraints.
\newblock {\em Constraints}, 4(2):113--131, 1999.

\bibitem{kirousis2017aggregation}
Lefteris Kirousis, Phokion~G Kolaitis, and John Livieratos.
\newblock Aggregation of votes with multiple positions on each issue.
\newblock In {\em Proceedings 16th International Conference on Relational and
  Algebraic Methods in Computer Science}, pages 209--225. Springer, 2017.
\newblock Expanded version to appear in {\em ACM Transactions on Economics and
  Computation}.

\bibitem{kun2016new}
G{\'a}bor Kun and Mario Szegedy.
\newblock A new line of attack on the dichotomy conjecture.
\newblock {\em European Journal of Combinatorics}, 52:338--367, 2016.

\bibitem{lewis1978renaming}
Harry~R Lewis.
\newblock Renaming a set of clauses as a {H}orn set.
\newblock {\em Journal of the ACM (JACM)}, 25(1):134--135, 1978.

\bibitem{list2012theory}
Christian List.
\newblock The theory of judgment aggregation: An introductory review.
\newblock {\em Synthese}, 187(1):179--207, 2012.

\bibitem{nehring2010abstract}
Klaus Nehring and Clemens Puppe.
\newblock Abstract arrowian aggregation.
\newblock {\em Journal of Economic Theory}, 145(2):467--494, 2010.

\bibitem{pigozzi2006belief}
Gabriella Pigozzi.
\newblock Belief merging and the discursive dilemma: an argument-based account
  to paradoxes of judgment aggregation.
\newblock {\em Synthese}, 152(2):285--298, 2006.

\bibitem{post1941two}
Emil~Leon Post.
\newblock {\em The two-valued iterative systems of mathematical logic}.
\newblock Number~5 in Annals of Mathematics Studies. Princeton University
  Press, 1941.

\bibitem{quine1959cores}
Willard~V Quine.
\newblock On cores and prime implicants of truth functions.
\newblock {\em The American Mathematical Monthly}, 66(9):755--760, 1959.

\bibitem{schaefer1978complexity}
Thomas~J. Schaefer.
\newblock The complexity of satisfiability problems.
\newblock In {\em Proc.\ of the 10th Annual ACM Symp.\ on Theory of Computing},
  pages 216--226, 1978.

\bibitem{szegedy2015impossibility}
Mario Szegedy and Yixin Xu.
\newblock Impossibility theorems and the universal algebraic toolkit.
\newblock {\em CoRR}, abs/1506.01315, 2015.
\newblock URL: \url{http://arxiv.org/abs/1506.01315}.

\bibitem{szendrei1986clones}
{\'A}gnes Szendrei.
\newblock {\em Clones in universal algebra}, volume~99.
\newblock Presses de l'Universit{\'e} de Montr{\'e}al, 1986.

\bibitem{wilson1975theory}
Robert Wilson.
\newblock On the theory of aggregation.
\newblock {\em Journal of Economic Theory}, 10(1):89--99, 1975.

\bibitem{yamasaki1983satisfiabilty}
Susumu Yamasaki and Shuji Doshita.
\newblock The satisfiabilty problem for a class consisting of {H}orn sentences
  and some non-{H}orn sentences in proportional logic.
\newblock {\em Information and Control}, 59(1-3):1--12, 1983.

\bibitem{zanuttini2002unified}
Bruno Zanuttini and Jean-Jacques H{\'e}brard.
\newblock A unified framework for structure identification.
\newblock {\em Information Processing Letters}, 81(6):335--339, 2002.

\end{thebibliography}

\end{document}